\documentclass[journal]{IEEEtran}
\usepackage{xcolor}
\usepackage{pgf,tikz}
\usepackage{tabularx}
\usepackage{float}

\usepackage{amsthm}
\usepackage{graphicx}
\graphicspath{{../pdf.pdf/}{../jpeg/}{../eps.eps/}}
\DeclareGraphicsExtensions{.pdf,.jpeg,.png,.eps}

\usepackage{epstopdf}
\usepackage{multicol}
\usepackage{stfloats}
\usepackage{amsfonts}
\usepackage{mathrsfs}
\usepackage{color}
\usepackage[ruled]{algorithm2e}
\usepackage{fancybox}
\usepackage{framed}
\usepackage{psfrag, amssymb, amsmath, cite}
\usepackage{subfigure}
\usepackage{enumitem} 
\usepackage{capt-of}

\newtheorem{lem}{\textbf{Lemma}}

\newtheorem{rek}{\textbf{Remark}}

\newtheorem{definition}{\textbf{Definition}}
\newtheorem{theorem}{\textbf{Theorem}}
\newtheorem{corollary}{\textbf{Corollary}}
\newtheorem{assumption}{\textbf{Assumption}}
\newtheorem{simulation}{\textbf{Simulation}}

\newcommand{\vect}[1]{\boldsymbol{#1}}
\newcommand{\mat}[1]{\boldsymbol{#1}}

\newcommand{\wh}[1]{\widehat{#1}}
\newcommand{\RNum}[1]{\uppercase\expandafter{\romannumeral #1\relax}}

\def\diag{\mathop{\rm diag}}

\begin{document}

\title{Event-Triggered Algorithms for Leader-Follower Consensus of Networked Euler-Lagrange Agents}






\author{Qingchen~Liu,
		Mengbin Ye,
        Jiahu~Qin
        and~Changbin~Yu
\thanks{Q. Liu, M. Ye and C. Yu are with the Research School of Engineering, Australian National University, Canberra ACT 0200, Australia. C. Yu is also with the School of Automation, Hangzhou Dianzi University, Hangzhou 310018, China. {\tt\small \{qingchen.liu, mengbin.ye, brad.yu\}@anu.edu.au}.\newline
\indent J. Qin is with the Department of Automation, University of Science and Technology of China, Hefei 230027, China. {\tt\small jhqin@ustc.edu.cn}}}

\maketitle

\begin{abstract}
This paper proposes three different distributed event-triggered control algorithms to achieve leader-follower consensus for a network of Euler-Lagrange agents. We firstly propose two model-independent algorithms for a subclass of Euler-Lagrange agents without the vector of gravitational potential forces. By model-independent, we mean that each agent can execute its algorithm with no knowledge of the agent self-dynamics. A variable-gain algorithm is employed when the sensing graph is undirected; algorithm parameters are selected in a fully distributed manner with much greater flexibility compared to all previous work concerning event-triggered consensus problems. When the sensing graph is directed, a constant-gain algorithm is employed. The control gains must be centrally designed to exceed several lower bounding inequalities which require limited knowledge of bounds on the matrices describing the agent dynamics, bounds on network topology information and bounds on the initial conditions. When the Euler-Lagrange agents have dynamics which include the vector of gravitational potential forces, an adaptive algorithm is proposed which requires more information about the agent dynamics but can estimate uncertain agent parameters.

For each algorithm, a trigger function is proposed to govern the event update times. At each event, the controller is updated, which ensures that the control input is piecewise constant and saves energy resources. We analyse each controllers and trigger function and exclude Zeno behaviour. Extensive simulations show $1)$ the advantages of our proposed trigger function as compared to those in existing literature, and $2)$ the effectiveness of our proposed controllers.
\end{abstract}

\section{Introduction}\label{sec1}
The field of multi-agent systems has received extensive attention from the control community in the past two decades. In particular, coordination of a network of interacting agents to achieve a global objective has been seen as a key sub-area within the field. See \cite{cao2013} for a recent survey. Leader-follower consensus is a variation of the commonly studied consensus problem where, with all agents having a commonly defined state variable(s), the network of follower agents converge to the state values of the stationary leader. This is achieved by interaction between neighbouring agents. By the fact that interaction is only between neighbouring agents then each follower agent must use a distributed controller, i.e. agents cannot use global information about the whole  network \cite{ren2008distributed}. 

The Euler-Lagrange equations describe the dynamics of a large class of nonlinear systems (including many mechanical systems such as robotic manipulators, spacecraft and marine vessels) \cite{kelly2006control, ortega2013passivity_book}. As a result, there is motivation to study multi-agent coordination problems where each agent has Euler-Lagrange dynamics \cite{ren2011distributed_book}. Leader-follower consensus for directed networks of Euler-Lagrange agents has been studied in \cite{ye2015ELrendezvous} using a model-independent controller, and in \cite{mei2013distributed} using an adaptive controller. In both, the topology requires a directed spanning tree.

Recently, use of event-triggered controllers has been popularised in multi-agent coordination problems \cite{qin2016recent}. While each agent has continuous time dynamics, the controller is updated at discrete time instants based on event-scheduling. Because the controller updates occur at specific events, this has the benefit of reducing  actuator updates. However, it is important to properly design and analyse the event-scheduling trigger function to exclude Zeno behaviour \cite{zhang2001zeno, ames2006stability}, which can cause the controller to collapse. Numerous results have been published studying consensus based problems using distributed event-triggered control laws. However, the majority study agents with single and double-integrator dynamics \cite{dimarogonas2012distributed, seyboth2013event, fan2013distributed, Fan015ZenoFree, liu2016QuantizedEvent, Cameron2016EventTrigger}.

There have been relatively few results published studying event-triggered control for networks of Euler-Lagrange agents. Pioneering contributions studied leaderless consensus (but not leader-follower consensus) on an undirected network \cite{mu2014consensus, huang2016distributed}. The dynamics studied in \cite{mu2014consensus} and \cite{huang2016distributed} are a subclass of Euler-Lagrange dynamics as they do not consider the presence of gravitational forces for each agent. While continuous model-independent algorithms e.g. \cite{ye2015ELrendezvous} are easily adapted to be event-triggered, as shown in \cite{mu2014consensus, huang2016distributed}, they cannot guarantee the coordination objective in the presence of gravitational forces (which has an effect which is similar to a bounded disturbance). Typical control techniques required to deal with this term include feedback linearisation \cite{meng2012FT_EL_track}, adaptive control \cite{mei2013distributed} and sliding mode control \cite{mei2011distributed_EL}. We note that these techniques have not been well studied in an event-triggered framework. In fact, first-order sliding mode controllers exhibit Zeno-like behaviour and thus are unsuitable for implementation in event-triggered control. In \cite{liu2016decentralised}, an adaptive, event-triggered controller is proposed to achieve flocking behaviour for undirected networks of Euler-Lagrange agents. This allows for gravitational forces omitted in \cite{mu2014consensus, huang2016distributed}. However, the proposed controller is piecewise continuous, which restricts its implementation in digital platforms. Moreover, it is worth noting that the trigger function used in \cite{liu2016decentralised} cannot eliminate Zeno behaviour for each agent.

\subsection{Contributions of This Paper}

In this paper, we propose three different distributed event-triggered control algorithms to achieve leader-follower consensus for networked Euler-Lagrange agents; each algorithm has different strengths and their appropriateness of use may depend on the application scenario. 

We propose two model-independent controllers for Euler-Lagrange agents without the gravitational term (unlike the model-dependent algorithm in \cite{mu2014consensus}). Firstly, a globally asymptotically stable variable-gain model-independent algorithm is proposed for agents on undirected graphs. The variable-gain controller allows for fully distributed \emph{and arbitrary design of parameters in \emph{both the control algorithm and trigger function}. For agents with complex dynamics, almost all existing results require centralised design of key parameters in the trigger function using limited global knowledge of the network \cite{mu2014consensus, liu2016decentralised, huang2016EL_eventtrigger}. The design of these key parameters is to ensure either Zeno-free behaviour, or to guarantee convergence of the controller. In the case of simple agent dynamics, the parameters are distributed in design but must either obey upper or lower bounds {\cite{dimarogonas2012distributed, fan2013distributed,garcia2013decentralised,seyboth2013event}}.} As such, the fully distributed variable-gain controller represents a significant advance on existing event-triggered control algorithms, because stability and convergence are always guaranteed, even if the algorithm and trigger function parameters are arbitrarily selected.

Even when implemented continuously, and with simple agent dynamics, variable-gain algorithms on directed graphs are difficult to analyse \cite{mei2014variable_linear,mei2016adaptive,li2014cooperative_book}. For the second model-independent controller, which is applicable for directed graphs, we are therefore motivated to use constant control gains. It will become apparent in the sequel that, even when the controller has constant gains, the combination of Euler-Lagrange dynamics, directed topology, and event-based control requires nontrival stability analysis. The algorithm achieves leader-follower consensus semi-globally, exponentially fast (neither directed graphs nor exponential stability has not been established in any existing results on event-based networks of Euler-Lagrange agents). Some limited knowledge of the bounds on the agent dynamic parameters, the network topology and a set of all possible initial conditions is required to centrally design the control gains. This is a trade-off for allowing agents  to interact on a directed graph. 

Lastly, we propose a globally asymptotically stable adaptive algorithm for use when the gravitational term is present in the agent self-dynamics; this algorithm appeared in our preliminary work \cite{liu2016ELevent_trigger}. The adaptive algorithm is able to estimate uncertain dynamical parameters, but requires increased knowledge about the agent self-dynamics.

All three proposed controllers are piecewise constant (unlike the piecewise continuous algorithm in \cite{liu2016decentralised}), which has the benefit of reducing actuator updates and thus conserving energy resources. Furthermore, each agent only requires state, and relative state measurements, and does not require knowledge of the trigger times of neighbouring agents (unlike \cite{mu2014consensus, liu2016decentralised}). For each algorithm, a trigger function is proposed and we show that Zeno behaviour can be excluded for every agent by proving that for any finite interval of time, a strictly positive lower bound exists on the time between each event. All three trigger functions are of the same form with only minor modifications. Each term of the trigger function is carefully selected to ensure that the trigger function is more effective, when compared with existing trigger functions which do one or the other, at $1)$ reducing the total number of events, and $2)$ eliminating Zeno behaviour for every agent. We show this by detailed comparison and analysis based on simulations. As a result of having multiple terms in the trigger function to achieve the aforementioned improvements, the stability analysis is made significantly more complicated. Each algorithm requires a different approach to proving stability, and the proposed methods may be useful for other problems in event-based control of multi-agent systems.

\subsection{Structure of the Rest of the Paper}
Section~\ref{sec:background} provides mathematical notations and background on graph theory and Euler-Lagrange systems. A formal problem definition is also provided. The three different distributed event-triggered control algorithms are then proposed and analysed in Section~\ref{sec:undir_MI}, \ref{sec:dir_IC} and \ref{sec:adaptive}, separately. Simulations in Section~\ref{sec:simulation} show the effectiveness of the proposed controllers. Concluding remarks are given in Section~\ref{sec:conclusion}.

\section{Background and problem statement}\label{sec:background}
\subsection{Notations and Mathematical Preliminaries}
In this paper, $\mathbb{R}^{n}$ denotes the $n$-dimensional Euclidean space and $\mathbb{R}^{m\times n}$ denotes the set of $m\times n$ real matrices.  The transpose of a vector or matrix $\mat{A}$ is given by $\mat{A}^\top$. The $i^{th}$ smallest eigenvalue of a symmetric matrix $\mat{A}$ is denoted by $\lambda_{i}(\mat{A})$. Let $\mat{x} = [\mat{x}_1,\ldots,\mat{x}_n]^\top$ where $\mat{x}_i\in\mathbb{R}^{n\times n}$ and $n \geq 1$. Then $diag\{\mat{x}\}$ denotes a (block) diagonal matrix with the (block) elements of $\mat{x}$ on its diagonal, i.e. $diag\{\mat{x}_1, ..., \mat{x}_n\}$. A symmetric matrix $\mat{A}\in\mathbb{R}^{n\times n}$ which is positive definite (respectively nonnegative definite) is denoted by $\mat{A} > 0$ (respectively $\mat A \geq 0$).  For two symmetric matrices $\mat{A}, \mat{B}$, the expression $\mat{A} > \mat{B}$ is equivalent to \mbox{$\mat{A} - \mat{B} > 0$}. The $n\times n$ identity matrix is $\emph{I}_{n}$ and $\mathbf{1}_n$ denotes an $n$-tuple column vector of all ones. The $n\times 1$ column vector of all zeros is denoted by $\mathbf{0}_n$. The symbol $\otimes$ denotes the Kronecker product. The Euclidean norm of a vector, and the matrix norm induced by the Euclidean norm, are denoted by $\|\cdot\|$. The absolute value of a real number is $|\cdot |$. For the space of piecewise continuous, bounded vector functions, the norm is defined as $\|f\|_{\mathcal{L}_\infty}=\sup\|f(t)\|<\infty$ and the space is denoted by $\mathcal{L}_\infty$. The space $\mathcal{L}_p$ for $1\leq p<\infty$ is defined as the set of all piecewise continuous vector functions such that $\|f\|_{\mathcal{L}_p}=\left(\int_0^\infty\|f(t)\|^pdt\right)^{1/p}<\infty$ where $p$ refers to the type of $p$-norm.

Several theorems, lemmas and corollaries are now introduced, which will be used in this paper.

\begin{theorem}[Mean Value Theorem for Vector-Valued Functions  \cite{rudin1964principles}]\label{theorem:mean_value_vector}
For a continuous vector-valued function $\vect{f}(s):\mathbb{R}\rightarrow\mathbb{R}^n$ differentiable on $s\in [a,b]$, there exists $t\in(a,b)$ such that 
\begin{align*}
\left\|\frac{d\vect f}{ds}(t)\right\|\geq\frac{1}{b-a}\|\vect{f}(b)-\vect{f}(a)\|
\end{align*}
\end{theorem}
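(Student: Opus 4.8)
The plan is to reduce the vector-valued statement to the classical scalar Mean Value Theorem by projecting $\vect f$ onto a cleverly chosen fixed direction, namely the direction of the chord $\vect f(b)-\vect f(a)$ itself. This is the standard device for obtaining mean-value-type estimates for vector-valued maps, which do \emph{not} admit an equality version of the MVT.

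First I would dispose of the degenerate case: if $\vect f(b)=\vect f(a)$, the right-hand side is zero and the inequality holds for every $t\in(a,b)$ (and such $t$ exists since $a<b$), so assume henceforth $\vect f(b)\neq\vect f(a)$. Next, define the scalar auxiliary function
\begin{align*}
g(s) \;=\; \big\langle\, \vect f(b)-\vect f(a),\ \vect f(s)\,\big\rangle, \qquad s\in[a,b],
\end{align*}
where $\langle\cdot,\cdot\rangle$ denotes the Euclidean inner product. Since $\vect f$ is continuous on $[a,b]$ and differentiable on $(a,b)$, so is $g$, with $g'(s)=\langle \vect f(b)-\vect f(a),\ \tfrac{d\vect f}{ds}(s)\rangle$. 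Applying the ordinary scalar Mean Value Theorem to $g$ yields a point $t\in(a,b)$ with $g(b)-g(a)=g'(t)\,(b-a)$.

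It then remains to evaluate both sides. On the left, $g(b)-g(a)=\langle \vect f(b)-\vect f(a),\ \vect f(b)-\vect f(a)\rangle=\|\vect f(b)-\vect f(a)\|^2$. On the right, the Cauchy--Schwarz inequality gives $g'(t)\le \|\vect f(b)-\vect f(a)\|\,\big\|\tfrac{d\vect f}{ds}(t)\big\|$. Combining,
\begin{align*}
\|\vect f(b)-\vect f(a)\|^2 \;\le\; (b-a)\,\|\vect f(b)-\vect f(a)\|\,\Big\|\tfrac{d\vect f}{ds}(t)\Big\|,
\end{align*}
and dividing through by the positive quantity $\|\vect f(b)-\vect f(a)\|$ delivers the claimed bound. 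The only subtlety worth flagging is that there is genuinely no choice of $t$ making this an equality in general (the projection step loses information through Cauchy--Schwarz), so the statement is necessarily an inequality; the proof is otherwise routine, the one bookkeeping point being to handle the constant-chord case separately before dividing.
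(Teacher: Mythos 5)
Your proof is correct and is precisely the standard argument from the cited source (Rudin, Theorem 5.19): project onto the chord direction $\vect f(b)-\vect f(a)$, apply the scalar Mean Value Theorem, and finish with Cauchy--Schwarz; the paper itself offers no proof beyond the citation, so there is nothing to diverge from. Your handling of the degenerate case $\vect f(b)=\vect f(a)$ before dividing is the right bookkeeping and the remark that only an inequality can hold for vector-valued maps is apt.
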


\begin{theorem}[The Schur Complement \cite{horn2012matrixbook}]\label{theorem:schur_complement} 
Consider a symmetric block matrix, partitioned as 
\begin{equation}\label{eq:schur_def_A}
\mat{A} = \begin{bmatrix}\mat{B} & \mat{C} \\ \mat{C}^\top & \mat{D}\end{bmatrix}
\end{equation}
Then $\mat{A} > 0 $  if and only if $\mat{B} > 0$ and $\mat{D} - \mat{C}^\top\mat{B}^{-1}\mat{C} > 0$. Equivalently, $\mat A > 0$ if and only if $\mat D > 0$ and $\mat B - \mat C \mat D^{-1} \mat C^\top > 0$.
\end{theorem}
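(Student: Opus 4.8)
The plan is to diagonalise $\mat{A}$ by a congruence transformation and then invoke the elementary fact that a symmetric block-diagonal matrix is positive definite if and only if each diagonal block is. First I would deal with the fact that the statement involves $\mat B^{-1}$: in the reverse direction $\mat B>0$ is a hypothesis, and in the forward direction it is automatic, since $\mat B$ is the leading principal submatrix of $\mat A$ and restricting the quadratic form $\mat x^\top\mat A\mat x$ to vectors supported on the first block shows $\mat A>0 \Rightarrow \mat B>0$. So in all cases $\mat B^{-1}$ exists. I would then introduce the nonsingular block matrix
\begin{equation*}
\mat{T} = \begin{bmatrix} \mat{I} & -\mat{B}^{-1}\mat{C} \\ \mat{0} & \mat{I} \end{bmatrix}, \qquad \det\mat{T}=1,
\end{equation*}
and verify by direct block multiplication (using $\mat B^{-\top}=\mat B^{-1}$ since $\mat B$ is symmetric) the identity
\begin{equation*}
\mat{T}^\top \mat{A}\, \mat{T} = \begin{bmatrix} \mat{B} & \mat{0} \\ \mat{0} & \mat{D} - \mat{C}^\top \mat{B}^{-1}\mat{C} \end{bmatrix}.
\end{equation*}

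Next I would use congruence-invariance of definiteness: since $\mat T$ is invertible, the substitution $\mat x=\mat T\mat y$ gives $\mat x^\top\mat A\mat x>0$ for all $\mat x\neq\mat 0$ if and only if $\mat y^\top(\mat T^\top\mat A\mat T)\mat y>0$ for all $\mat y\neq\mat 0$ (equivalently, by Sylvester's law of inertia, $\mat A$ and $\mat T^\top\mat A\mat T$ have the same inertia). Hence $\mat A>0 \iff \mat T^\top\mat A\mat T>0$. Finally, a symmetric block-diagonal matrix is positive definite exactly when both diagonal blocks are — one direction by testing with vectors supported on a single block, the other by adding the two resulting quadratic forms — so $\mat T^\top\mat A\mat T>0 \iff \mat B>0 \text{ and } \mat D-\mat C^\top\mat B^{-1}\mat C>0$. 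Chaining these equivalences, together with the observation of the first paragraph, yields the first characterisation.

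For the "equivalently" statement I would simply re-run the same argument after conjugating $\mat A$ by the orthogonal block-permutation matrix that swaps the two coordinate blocks; this replaces $\mat A$ by $\begin{bmatrix}\mat D & \mat C^\top \\ \mat C & \mat B\end{bmatrix}$, whose Schur complement with respect to $\mat D$ is $\mat B-\mat C\mat D^{-1}\mat C^\top$, and since the conjugation is a congruence it preserves positive definiteness. There is no substantive obstacle here: the result is standard, and the only point needing care is the ordering in the forward direction — one must note that $\mat A>0$ forces the principal block $\mat B>0$ before writing $\mat B^{-1}$ — after which everything reduces to the (routine) verification of the displayed matrix identity and the standard congruence argument.
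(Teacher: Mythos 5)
Your proof is correct. The paper does not prove this statement at all — it is quoted as a standard result from the cited matrix-analysis reference — and your argument (congruence by the block-unitriangular matrix $\mat T$, Sylvester/congruence invariance of definiteness, and the block-permutation for the second characterisation, with the careful observation that $\mat A>0$ forces $\mat B>0$ before $\mat B^{-1}$ is written) is exactly the standard textbook derivation, so there is nothing to compare against and no gap to report.
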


\begin{lem}(From \cite{ioannou2006adaptive})\label{lem:l1}
If a function $f(t)$ satisfies $f(t), \dot{f}(t)\in L_\infty$, and $f(t)\in L_p$ for some value of $p\in[1,\infty)$, then $f(t)\rightarrow 0$ as $t\rightarrow\infty$.
\end{lem}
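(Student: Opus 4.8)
The plan is to argue by contradiction via a Barbalat-type estimate. The first step is to turn the bound on $\dot f$ into uniform continuity of $f$: writing $M := \|\dot f\|_{\mathcal{L}_\infty}$ and applying Theorem~\ref{theorem:mean_value_vector} (on each subinterval of differentiability if $f$ is only piecewise $C^1$), one obtains $\|f(b)-f(a)\| \le M\,(b-a)$ for all $a<b$, i.e.\ $f$ is globally Lipschitz. If $M = 0$ then $f$ is constant, and since $f \in \mathcal{L}_p$ with $p < \infty$ this forces $f \equiv \mathbf{0}$; so from now on I would assume $M > 0$.

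Next, suppose, for the sake of contradiction, that $f(t) \not\to \mathbf{0}$. Then there exist a constant $\varepsilon > 0$ and an increasing, unbounded sequence $\{t_k\}_{k \ge 1}$ with $\|f(t_k)\| \ge \varepsilon$ for every $k$. Set $\delta := \varepsilon/(2M)$. The Lipschitz bound gives $\|f(t)\| \ge \|f(t_k)\| - M|t-t_k| \ge \varepsilon/2$ for all $t \in [t_k, t_k+\delta]$. Passing to a subsequence if necessary, I may assume the $t_k$ are spaced more than $\delta$ apart, so that the intervals $[t_k, t_k+\delta]$ are pairwise disjoint.

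Summing the $\mathcal{L}_p$-mass over these intervals then yields
\[
\|f\|_{\mathcal{L}_p}^p \;=\; \int_0^\infty \|f(t)\|^p\,dt \;\ge\; \sum_{k \ge 1} \int_{t_k}^{t_k + \delta} \|f(t)\|^p\,dt \;\ge\; \sum_{k \ge 1} \delta\Big(\frac{\varepsilon}{2}\Big)^{p} \;=\; \infty,
\]
which contradicts $f \in \mathcal{L}_p$. Hence $f(t) \to \mathbf{0}$ as $t \to \infty$.

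I do not expect a genuine obstacle here; the argument is standard. The only mild care needed is in disposing of the degenerate case $M=0$ and in extracting a subsequence that guarantees the intervals $[t_k,t_k+\delta]$ are disjoint. I note in passing that the hypothesis $f \in \mathcal{L}_\infty$ is not actually used in this route — boundedness of $\dot f$ already forces $f$ to be (Lipschitz) continuous — but it is harmless to retain it.
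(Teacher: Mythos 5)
Your proof is correct, but note that the paper itself offers no proof of this lemma: it is imported verbatim from \cite{ioannou2006adaptive} and used as a black box, so there is nothing in-paper to compare against. Your contradiction argument is the standard textbook one and every step checks out: the bound $\|f(b)-f(a)\|\le M(b-a)$ from Theorem~\ref{theorem:mean_value_vector}, the disposal of the degenerate case $M=0$, the choice $\delta=\varepsilon/(2M)$ giving $\|f\|\ge\varepsilon/2$ on $[t_k,t_k+\delta]$, the passage to a $\delta$-separated subsequence, and the divergent lower bound on $\|f\|_{\mathcal{L}_p}^p$. Your side remark that $f\in\mathcal{L}_\infty$ is redundant is also right (an unbounded Lipschitz $f$ would fail to be in $\mathcal{L}_p$ by the same interval argument). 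The one caveat worth flagging is that your Lipschitz step needs $f$ itself to be continuous, whereas the paper's $\mathcal{L}_\infty$ and $\mathcal{L}_p$ spaces are defined over merely piecewise continuous functions; taken literally, the lemma is then false (take $f$ equal to $1$ on $[k,k+2^{-k}]$ and $0$ elsewhere: $\dot f=0$ off the break points, $f\in\mathcal{L}_1\cap\mathcal{L}_\infty$, yet $f\not\to 0$). This is a defect in the lemma's statement rather than in your argument --- every function to which the paper applies the lemma ($\vect v$, $\vect s_i$) is a continuous state trajectory --- but your parenthetical about summing over ``subintervals of differentiability'' should be read as presupposing continuity of $f$ across the break points, since otherwise the per-subinterval estimates do not concatenate into a global Lipschitz bound.
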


\begin{lem}\label{lem:W_bound}
Suppose $\mat A > 0$ is defined as in \eqref{eq:schur_def_A}. Let a quadratic function with arguments $\vect x, \vect y$ be expressed as $W = [\vect x^\top, \vect y^\top] \mat A [\vect x^\top, \vect y^\top]^\top$. Define $\mat F: = \mat B - \mat C \mat D^{-1} \mat C^\top $ and $\mat G := \mat{D}-\mat{C}^\top\mat{B}^{-1}\mat{C}$. Then there holds
\begin{subequations}\label{eq:W_bound_xy}
\begin{align}
\lambda_{\min}(\mat F) \vect x^\top \vect x \leq  \vect{x}^\top\mat F\vect{x}  & \leq W \label{eq:W_bound_x} \\
\lambda_{\min}(\mat G) \vect y^\top \vect y \leq  \vect{y}^\top\mat G\vect{y}  & \leq W \label{eq:W_bound_y} 
\end{align}
\end{subequations}
\begin{proof} The proof for \eqref{eq:W_bound_y} is immediately obtained by recalling \textbf{Theorem}~\ref{theorem:schur_complement} and observing that 
\begin{align*}
W &= \vect{y}^\top\mat G\vect{y}  + [\vect{y}^\top\mat{C}^\top\mat{B}^{-1}+\vect{x}^\top]\mat{B}[\mat{B}^{-1}\mat{C}\vect{y}+\vect{x}]
\end{align*}
An equally straightforward proof yields \eqref{eq:W_bound_x}.
\end{proof}
\end{lem}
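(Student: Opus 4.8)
The plan is to prove \eqref{eq:W_bound_y} in full and then obtain \eqref{eq:W_bound_x} by the mirror-image argument. The engine of the proof is a completion of squares in $\vect x$ (equivalently, a block $\mathbf{LDL}^\top$ factorisation of $\mat A$). Writing out $W = \vect x^\top\mat B\vect x + 2\vect x^\top\mat C\vect y + \vect y^\top\mat D\vect y$ and collecting the terms containing $\vect x$, I would verify the algebraic identity
\[
W = \vect y^\top(\mat D - \mat C^\top\mat B^{-1}\mat C)\vect y + (\vect x + \mat B^{-1}\mat C\vect y)^\top\mat B(\vect x + \mat B^{-1}\mat C\vect y),
\]
which is exactly the displayed identity in the statement, with $\mat G = \mat D - \mat C^\top\mat B^{-1}\mat C$. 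This manipulation needs $\mat B$ invertible, and indeed since $\mat A > 0$, \textbf{Theorem}~\ref{theorem:schur_complement} gives $\mat B > 0$ (and, in the other direction, $\mat G > 0$).

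Given the identity, the upper bound $\vect y^\top\mat G\vect y \leq W$ is immediate: the remainder term has the form $\vect z^\top\mat B\vect z$ with $\vect z = \vect x + \mat B^{-1}\mat C\vect y$, which is nonnegative because $\mat B > 0$. For the lower bound $\lambda_{\min}(\mat G)\vect y^\top\vect y \leq \vect y^\top\mat G\vect y$, I would simply invoke the Rayleigh--Ritz inequality for the symmetric matrix $\mat G$ (a Schur complement of a symmetric matrix is symmetric, so its eigenvalues are real and the quadratic form is bounded below by $\lambda_{\min}(\mat G)\|\vect y\|^2$); note that positive definiteness of $\mat G$ is not actually required for this half, although it holds. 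The bound \eqref{eq:W_bound_x} follows identically, completing squares in $\vect y$ instead: one gets $W = \vect x^\top\mat F\vect x + (\vect y + \mat D^{-1}\mat C^\top\vect x)^\top\mat D(\vect y + \mat D^{-1}\mat C^\top\vect x)$ with $\mat F = \mat B - \mat C\mat D^{-1}\mat C^\top$, using $\mat D > 0$ (again from \textbf{Theorem}~\ref{theorem:schur_complement}) both to invert $\mat D$ and to discard the nonnegative remainder, and then Rayleigh--Ritz applied to $\mat F$.

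There is no genuine obstacle here; the proof is a few lines of linear algebra. The only points requiring care are (i) invoking the Schur complement in the correct orientation so that the matrix being inverted is the one known to be positive definite ($\mat B$ for the $\vect y$-bound, $\mat D$ for the $\vect x$-bound), and (ii) keeping the cross terms and transposes straight in the completion of squares. Everything else is bookkeeping.
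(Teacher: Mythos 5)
Your proof is correct and follows essentially the same route as the paper's: the same completion-of-squares identity $W = \vect y^\top\mat G\vect y + (\vect x + \mat B^{-1}\mat C\vect y)^\top\mat B(\vect x + \mat B^{-1}\mat C\vect y)$, with \textbf{Theorem}~\ref{theorem:schur_complement} supplying $\mat B > 0$ (resp. $\mat D > 0$) to justify the inversion and discard the nonnegative remainder, and Rayleigh--Ritz for the eigenvalue lower bound. The mirror argument for \eqref{eq:W_bound_x} is exactly what the paper leaves to the reader.
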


\begin{lem}\label{lem:bounded_pd_function}
Let $g(x,y)$ be a function given as
\begin{equation}
g(x,y) = a x^2 + b y^2 - c x y^2 - d x y
\end{equation} 
for real positive scalars $a,c,d > 0$. Then for a given $\mathcal{X} > 0$, there exist $b > 0$ such that $g(x,y) > 0$ for all $y \in [0, \infty)$ and $x \in [0,\mathcal{X}]$.
\begin{proof}
See Appendix \ref{app:sec_background}.
\end{proof}
\end{lem}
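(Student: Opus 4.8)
The plan is to complete the square in $x$, which rewrites $g$ as a perfect square plus a residual multiple of $y^2$ whose coefficient depends affinely on $x$; the choice of $b$ then only has to make that single coefficient nonnegative over the compact interval $x\in[0,\mathcal X]$. Concretely, a direct expansion verifies the identity
\begin{equation*}
g(x,y) = \left(\sqrt{a}\,x - \frac{d}{2\sqrt{a}}\,y\right)^{\!2} + \left(b - \frac{d^2}{4a} - c x\right) y^2 .
\end{equation*}
The first term is always nonnegative, so the sign of $g$ on the region of interest is controlled by the bracketed coefficient $b - \frac{d^2}{4a} - c x$.

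Since $x$ ranges over the compact set $[0,\mathcal X]$ and $c>0$, the affine map $x\mapsto b - \frac{d^2}{4a} - c x$ attains its minimum at $x=\mathcal X$. Hence it suffices to pick any $b$ satisfying $b > \frac{d^2}{4a} + c\mathcal X$; then for every $x\in[0,\mathcal X]$ one has $b - \frac{d^2}{4a} - c x \ge b - \frac{d^2}{4a} - c\mathcal X > 0$. With such a $b$ both summands in the displayed decomposition are nonnegative for all $y\in[0,\infty)$ (indeed for all $y\in\mathbb R$), so $g(x,y)\ge 0$ on $[0,\mathcal X]\times[0,\infty)$. Strict positivity away from the origin is immediate: $g(x,y)=0$ forces $y=0$ from the second term and then $\sqrt a\,x=0$, i.e.\ $x=0$; since $g(0,0)=0$ is unavoidable, the statement is to be read as positivity on $([0,\mathcal X]\times[0,\infty))\setminus\{(\mathbf 0,\mathbf 0)\}$ (equivalently, $g$ dominates a positive definite quadratic in $(x,y)$ away from the origin).

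The only point needing care is the role of the compactness assumption $\mathcal X<\infty$. The term $-cxy^2$ is cubic, so if $x$ were unbounded the coefficient $b-\frac{d^2}{4a}-cx$ of $y^2$ would eventually turn negative for any fixed $b$, driving $g\to-\infty$ along large $y$; it is precisely the finiteness of $\mathcal X$ that allows one finite value of $b$ to work uniformly in $x$. Everything else is the routine completion of the square above together with evaluating an affine function of $x$ at an endpoint, so no further obstacle is anticipated.
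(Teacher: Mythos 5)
Your proof is correct and is essentially the paper's argument: both reduce to the single condition $b > c\mathcal{X} + \tfrac{d^2}{4a}$, the paper by first bounding $cxy^2 \le c\mathcal{X}y^2$ and checking that the resulting quadratic in $x$ has negative discriminant, you by completing the square in $x$ directly --- which is the same algebra. Your explicit handling of the unavoidable zero at the origin matches the caveat the paper also records at the end of its proof.
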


\begin{corollary}\label{cor:bounded_pd_function_2}
Let $h(x,y)$ be a function given as
\begin{equation}
h(x,y) = a x^2 + b y^2 - c x y^2 - d x y - e x - f y 
\end{equation} 
where the real, strictly positive scalars $c,d,e,f$ and two further positive scalars $\varepsilon, \vartheta$ are fixed. Suppose that for given $\mathcal{Y}, \varepsilon$ there holds $\mathcal{Y}- \varepsilon > 0$, and for a given $\mathcal{X} > 0$ there holds $\mathcal{X}-\vartheta > 0$. Define the sets $\mathcal{U} = \{x,y : x\in [\mathcal{X}-\vartheta, \mathcal{X}], y > 0\}$ and $\mathcal{V} = \{x, y : x > 0, y\in [\mathcal{Y}-\varepsilon,\mathcal{Y}]\}$. Define the region $\mathcal{R} = \mathcal{U} \cup \mathcal{V}$. Then there exist $a, b > 0$ such that $h(x,y)$ is positive definite in $\mathcal{R}$.
\begin{proof}
See Appendix \ref{app:sec_background}.
\end{proof}
\end{corollary}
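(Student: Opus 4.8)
The plan is to prove $h(x,y) > 0$ on $\mathcal{R}$ by treating the two pieces $\mathcal{U}$ and $\mathcal{V}$ separately, exploiting in each case that exactly one of the two variables is confined to a compact interval bounded away from the origin. Write $x_0 := \mathcal{X} - \vartheta > 0$ and $y_0 := \mathcal{Y} - \varepsilon > 0$. The constants $a$ and $b$ will be selected \emph{in that order}: $a$ first, depending only on the fixed data, and then $b$ taken large depending on $a$. This order is forced, because on $\mathcal{U}$ the choice of $a$ fixes a positive constant that $b$ must then be made large enough to dominate, while on $\mathcal{V}$ the value of $a$ must already be available before $b$ can be sized. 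The argument is essentially that of \textbf{Lemma}~\ref{lem:bounded_pd_function} with the two extra linear terms $-ex$ and $-fy$ absorbed into the completed squares.

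On $\mathcal{U}$, where $x \in [x_0,\mathcal{X}]$ and $y \ge 0$, I would regard $h$ as a quadratic in $y$,
\begin{equation*}
h(x,y) = (b - cx)\,y^2 - (dx + f)\,y + x(ax - e).
\end{equation*}
Choosing $a > e/x_0$ forces the $y$-independent term to satisfy $x(ax - e) \ge x_0(ax_0 - e) > 0$ for every admissible $x$ (the map $x \mapsto ax^2 - ex$ is increasing on $[x_0,\mathcal{X}]$ under this choice), and choosing $b > c\mathcal{X}$ forces the leading coefficient $b - cx \ge b - c\mathcal{X} > 0$. The parabola's vertex then lies at the positive abscissa $y = (dx+f)/\big(2(b-cx)\big)$, so completing the square and using $x \le \mathcal{X}$ gives
\begin{equation*}
h(x,y) \;\ge\; x(ax-e) - \frac{(dx+f)^2}{4(b-cx)} \;\ge\; x_0(ax_0 - e) - \frac{(d\mathcal{X}+f)^2}{4(b - c\mathcal{X})}.
\end{equation*}
Since $a$, and hence the constant $x_0(ax_0 - e)$, is already fixed, the right-hand side is strictly positive once $b$ exceeds an explicit threshold; this yields $h > 0$ on $\mathcal{U}$.

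On $\mathcal{V}$, where $x > 0$ and $y \in [y_0,\mathcal{Y}]$, I would instead regard $h$ as a quadratic in $x$,
\begin{equation*}
h(x,y) = a\,x^2 - (cy^2 + dy + e)\,x + y(by - f),
\end{equation*}
whose leading coefficient $a > 0$ is already positive and whose vertex sits at the positive abscissa $x = (cy^2+dy+e)/(2a)$. Completing the square, choosing $b > f/y_0$ (so that $y \mapsto by^2 - fy$ is increasing on $[y_0,\mathcal{Y}]$), and using $y \le \mathcal{Y}$, one obtains for all $x > 0$
\begin{equation*}
h(x,y) \;\ge\; y(by - f) - \frac{(cy^2+dy+e)^2}{4a} \;\ge\; y_0(by_0 - f) - \frac{(c\mathcal{Y}^2 + d\mathcal{Y} + e)^2}{4a}.
\end{equation*}
With $a$ fixed the subtracted term is a constant, whereas $y_0(by_0 - f) \to \infty$ as $b \to \infty$; hence $h > 0$ on $\mathcal{V}$ once $b$ is large enough.

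To finish, fix any $a > e/x_0$ and then fix any $b$ exceeding the finitely many lower bounds extracted above — namely $c\mathcal{X}$, $f/y_0$, and the two thresholds rendering the vertex minima positive — so that both estimates hold simultaneously. The only point requiring care is exactly this bookkeeping of the quantifier order, together with the (routine) verification that in each region the relevant parabola's vertex lies inside the admissible half-line, so that its unconstrained minimum is a legitimate lower bound for $h$; I expect this to be the main, though modest, obstacle, with the remainder being the standard algebra of completing the square.
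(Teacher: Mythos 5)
Your proof is correct. It departs from the paper's in organization rather than in the underlying mechanism: the paper decomposes $h = g + z$, with $g(x,y) = a^*x^2 + b^*y^2 - cxy^2 - dxy$ made positive by invoking \textbf{Lemma}~\ref{lem:bounded_pd_function} and $z(x,y) = a_1x^2 + b_1y^2 - ex - fy$ shown positive on $\mathcal{U}$ and on $\mathcal{V}$ by two separate discriminant computations, after which it sets $a = a^* + \max[a_{1,x},a_{1,y}]$ and $b = b^* + \max[b_{1,x},b_{1,y}]$. You instead keep $h$ intact and complete the square once per region, always in the unbounded variable ($y$ on $\mathcal{U}$, with leading coefficient $b - cx \geq b - c\mathcal{X} > 0$; $x$ on $\mathcal{V}$, with leading coefficient $a > 0$), which reduces to the same one-variable quadratic positivity conditions and the same quantifier order ($a$ fixed first, $b$ then taken large) that the paper uses; your bookkeeping of that order, and your verification that the unconstrained vertex minimum is a legitimate lower bound, are both handled correctly. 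What your version buys is that it is self-contained, produces explicit closed-form thresholds for $b$, and covers all $x>0$ on $\mathcal{V}$ by construction; the paper's appeal to \textbf{Lemma}~\ref{lem:bounded_pd_function} for ``$g>0$ on $x\in[0,\infty)$, $y\in[0,\mathcal{Y}]$'' quietly interchanges the roles the lemma actually establishes ($x\in[0,\mathcal{X}]$, $y\in[0,\infty)$) and would need the additional observation that $4a^*b^* > (c\mathcal{Y}+d)^2$ suffices on that swapped region, an issue your direct estimate avoids. What the paper's version buys is reuse of the already-proved lemma and the additive coefficient form that is quoted later in the proof of \textbf{Theorem}~\ref{theorem:directed_main_MI}.
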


\subsection{Graph Theory}
We model the interactions among the leader and $n$ followers by a weighted directed graph (digraph) $\mathcal{G}=(\mathcal{V},\mathcal{E},\mathcal{A})$ with vertex set $\mathcal{V}=\{v_0,v_1,\cdots,v_n\}$ and edge set $\mathcal{E}\subseteq \mathcal{V}\times \mathcal{V}$. Without loss of generality, the leader agent is numbered by $v_0$. We use $\mathcal{G}_F$ to describe the interactions among the $n$ follower agents with vertex set $\mathcal{V}_F=\{v_1,\cdots,v_n\}$ and edge set $\mathcal{E}_F\subseteq \mathcal{V}_F\times \mathcal{V}_F$. An ordered edge set of $\mathcal{G}$ is $e_{ij}=(v_i,v_j)$. The weighted adjacency matrix $\mathcal{A}=\mathcal{A}(\mathcal{G})=\{a_{ij}\}$ is the $(n+1)\times (n+1)$ matrix given by $a_{ij}>0$, if $e_{ji}\in \mathcal{E}$ and $a_{ij}=0$, otherwise. In this paper, it is assumed that $a_{ii} = 0$, i.e. there are no self-loops. The edge $e_{ij}$ is incoming with respect to $v_j$ and outgoing with respect to $v_i$. A graph is undirected if $e_{ij}\in\mathcal{E} \Leftrightarrow e_{ji}\in\mathcal{E}$ and $a_{ij} = a_{ji}$. The neighbour set of $v_i$ is denoted by $\mathcal{N}_i=\{v_j\in\mathcal{V}:(v_i,v_j)\in\mathcal{E}\}$. The $(n+1)\times (n+1)$ Laplacian matrix, $\mathcal{L}=\{l_{ij}\}$, of the associated directed graph $\mathcal{G}$ is defined as
\begin{equation*}
l_{ij}=
\begin{cases}
\sum_{k=1,k\neq i}^{n}a_{ik} & \text{for}~~j=i\\
-a_{ij} & \text{for}~~j\neq i
\end{cases}
\end{equation*}
A digraph with $n+1$ vertices is called a directed spanning tree if it has $n$ edges and there exists a root vertex with directed paths to every other vertex \cite{ren2008distributed}. The following result holds for the Laplacian matrix associated with a directed graph.

\begin{lem}(From \cite{ren2008distributed})
Let $\mathcal{L}$ be the Laplacian matrix associated with a directed graph. Then $\mathcal{L}$ has a simple zero eigenvalue and all other eigenvalues have positive real parts if and only if $\mathcal{G}$ has a directed spanning tree.
\end{lem}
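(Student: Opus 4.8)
The plan is to decouple the two assertions and to notice first that the location of the nonzero eigenvalues comes for free. Since $l_{ii} = \sum_{j\neq i} a_{ij} = \sum_{j\neq i} |l_{ij}| \geq 0$ for every $i$, the Gershgorin disc theorem confines every eigenvalue of $\mathcal{L}$ to $\bigcup_i \{ z \in \mathbb{C} : |z - l_{ii}| \leq l_{ii} \}$, and each of these discs meets the imaginary axis only at the origin; combined with $\mathcal{L}\mathbf{1}_{n+1} = \mathbf{0}$, this shows that, for any digraph, $0$ is an eigenvalue of $\mathcal{L}$ and every other eigenvalue has strictly positive real part. Hence the lemma reduces to the equivalence: $0$ is a \emph{simple} eigenvalue of $\mathcal{L}$ if and only if $\mathcal{G}$ has a directed spanning tree.

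For the ``if'' direction, suppose a directed spanning tree exists with root $v_r$, and I would first show $\dim\ker\mathcal{L} = 1$. If $\mathcal{L}\vect{x} = \mathbf{0}$, then each row reads $\sum_{j\neq i} a_{ij}(x_i - x_j) = 0$; taking $i$ so that $x_i$ is maximal makes this a sum of nonnegative terms, forcing $x_j = x_i$ for every in-neighbour $v_j$ of $v_i$. Reading a directed path from $v_r$ to a maximising vertex backwards then forces $x_r$ to attain the maximum, and the mirror-image argument forces $x_r$ to attain the minimum, so $\vect{x} \in \mathrm{span}\{\mathbf{1}_{n+1}\}$. To upgrade geometric simplicity to algebraic simplicity, I would exclude a length-two Jordan chain: such a chain would yield $\vect{v}$ with $\mathcal{L}\vect{v} = c\mathbf{1}_{n+1}$ for some $c \neq 0$, but evaluating this identity at a maximising index of $\vect{v}$ gives $c \geq 0$ and at a minimising index gives $c \leq 0$, a contradiction; hence $0$ is simple.

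For the ``only if'' direction I would argue the contrapositive. If $\mathcal{G}$ has no directed spanning tree, pass to the condensation of $\mathcal{G}$ (the acyclic digraph obtained by contracting each strongly connected component); it possesses at least two source components $C_1, C_2$ (strongly connected components receiving no edge from any other component), because a unique source component would reach every component and any of its vertices would then be a root. A source component receives no edge from its complement, so listing the vertices of $C_1$ and $C_2$ first makes $\mathcal{L}$ block lower triangular: the rows indexed by $C_k$ are supported on the columns indexed by $C_k$, on which they coincide with the Laplacian $\mathcal{L}_{C_k}$ of the induced subgraph $\mathcal{G}[C_k]$. Since each $\mathcal{L}_{C_k}$ annihilates $\mathbf{1}_{|C_k|}$, it has rank at most $|C_k| - 1$; bounding $\mathrm{rank}\,\mathcal{L}$ by the sum of the ranks of its three row blocks gives $\mathrm{rank}\,\mathcal{L} \leq (|C_1| - 1) + (|C_2| - 1) + (n + 1 - |C_1| - |C_2|) = n - 1$, so $\dim\ker\mathcal{L} \geq 2$ and $0$ is not simple.

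I expect the real work to be in the ``only if'' half: one must reliably translate ``no directed spanning tree'' into ``at least two strongly connected components with no incoming edges from outside'', and then recognise that this structural fact alone --- with no numerical estimate --- forces the two-dimensional kernel. The Gershgorin remark and the two extremal-component arguments used for the ``if'' half are short and routine by comparison.
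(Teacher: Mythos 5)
The paper does not prove this lemma; it is imported verbatim from the cited reference, so there is no in-paper argument to compare against. Your proof is correct and self-contained. The Gershgorin step correctly confines the spectrum to the closed right half-plane with the imaginary axis met only at the origin; the extremal-index argument (together with the backwards walk along a tree path from the root, which works because each tree edge $(v_p,v_m)$ contributes a positive weight $a_{mp}$ under the paper's convention) gives $\ker\mathcal{L}=\mathrm{span}\{\mathbf{1}_{n+1}\}$, and the Jordan-chain exclusion via evaluating $\mathcal{L}\vect v=c\mathbf{1}_{n+1}$ at maximising and minimising indices correctly upgrades this to algebraic simplicity (for complex $\vect v$ one just applies the argument to real and imaginary parts). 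The converse via the condensation is also sound: in a finite DAG every vertex is reachable from some source component, so a unique source would yield a root, and two source components give two row blocks each of rank at most $|C_k|-1$, whence $\mathrm{rank}\,\mathcal{L}\leq n-1$ and the zero eigenvalue has geometric multiplicity at least two. Compared with the treatment in the cited textbook, which reaches this result through the theory of nonnegative matrices and stochastic matrices (or an induction on the graph), your route is purely linear-algebraic and arguably more elementary; the price is that you must supply the combinatorial equivalence between ``no directed spanning tree'' and ``at least two source components,'' which you do correctly and which is, as you anticipated, where the real content of the ``only if'' half lives.
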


\begin{lem}(From \cite{song2012pinning})\label{lem:lem5}
Suppose a graph $\mathcal{G}$ contains a directed spanning tree, and there are no edges of $\mathcal{G}$ which are incoming to the root vertex $v_0$ of the tree. Then the Laplacian matrix associated with $\mathcal{G}$ has the following form:
\begin{equation*}
\mathcal{L}=
\begin{bmatrix}
0 & \mathbf{0}_{n-1}^T\\
\mathcal{L}_{21} & \mathcal{L}_{22}\\
\end{bmatrix}
\end{equation*}
and all eigenvalues of $\mathcal{L}_{22}$ have positive real parts. Moreover, there exists a diagonal positive definite matrix $\mat \Gamma$ such that $\mat Q:= \mat \Gamma \mathcal{L}_{22} + \mathcal{L}_{22}^\top \mat \Gamma > 0$. In addition, if $\mathcal{G}_F$ is undirected, then $\mathcal{L}_{22}$ is symmetric positive definite. 
\end{lem}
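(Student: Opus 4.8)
The plan is to establish the four assertions of the statement in turn. For the block form, the hypothesis that no edge of $\mathcal{G}$ is incoming to $v_0$ means, under the adjacency convention of this paper, that $a_{0j}=0$ for every $j$; substituting into the definition of $\mathcal{L}$ gives $l_{00}=\sum_{k\neq 0}a_{0k}=0$ and $l_{0j}=-a_{0j}=0$, so the first row of $\mathcal{L}$ vanishes and partitioning off the first row and column yields exactly the stated form with $\mathcal{L}_{22}\in\mathbb{R}^{n\times n}$. For the spectrum: because $\mathcal{L}$ is block lower triangular, $\det(\lambda I_{n+1}-\mathcal{L})=\lambda\,\det(\lambda I_n-\mathcal{L}_{22})$, so the eigenvalues of $\mathcal{L}$ are $0$ together with those of $\mathcal{L}_{22}$; since $\mathcal{G}$ has a directed spanning tree rooted at $v_0$, the preceding Laplacian lemma of \cite{ren2008distributed} gives that $\mathcal{L}$ has a simple zero eigenvalue with all others in the open right half-plane, whence $0\notin\mathrm{spec}(\mathcal{L}_{22})$ and every eigenvalue of $\mathcal{L}_{22}$ has strictly positive real part.

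The substantive step is the construction of $\mat\Gamma$. Note that $\mathcal{L}_{22}$ is a Z-matrix (its off-diagonal entries $-a_{ij}$ are nonpositive), while $-\mathcal{L}_{22}$ is Metzler (off-diagonal entries $a_{ij}\ge 0$) and, by the previous step, Hurwitz; hence $e^{-\mathcal{L}_{22}t}\ge 0$ entrywise for all $t\ge 0$ and $\mathcal{L}_{22}^{-1}=\int_0^\infty e^{-\mathcal{L}_{22}t}\,dt\ge 0$ entrywise, i.e. $\mathcal{L}_{22}$ is a nonsingular M-matrix. Put $\vect p=(\mathcal{L}_{22}^\top)^{-1}\mathbf{1}_n$; since $(\mathcal{L}_{22}^\top)^{-1}=(\mathcal{L}_{22}^{-1})^\top$ is entrywise nonnegative and nonsingular, each $p_i>0$, so $\mat\Gamma:=\diag\{p_1,\dots,p_n\}>0$. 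With $\mat Q=\mat\Gamma\mathcal{L}_{22}+\mathcal{L}_{22}^\top\mat\Gamma$ one has $Q_{ij}=-(p_i a_{ij}+p_j a_{ji})\le 0$ for $i\neq j$, and the $i$-th row sum of $\mat Q$ equals $p_i(\mathcal{L}_{22}\mathbf{1}_n)_i+(\mathcal{L}_{22}^\top\vect p)_i=p_i a_{i0}+1>0$, where $(\mathcal{L}_{22}\mathbf{1}_n)_i=a_{i0}\ge 0$ because the full $\mathcal{L}$ has zero row sums and the entry deleted from row $i$ was $l_{i0}=-a_{i0}$, and $\mathcal{L}_{22}^\top\vect p=\mathbf{1}_n$ by construction. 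A symmetric Z-matrix with strictly positive row sums is strictly diagonally dominant with positive diagonal, hence positive definite by Gershgorin, so $\mat Q>0$. I expect the sign book-keeping in the row sums of $\mathcal{L}_{22}$ and of $\mat Q$, together with correctly invoking ``Z-matrix with right-half-plane spectrum $\Rightarrow$ nonsingular M-matrix $\Rightarrow$ nonnegative inverse'', to be the only delicate part; everything else is routine.

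Finally, if $\mathcal{G}_F$ is undirected then $a_{ij}=a_{ji}$ for all followers, so $\mathcal{L}_{22}$ is symmetric; a real symmetric matrix all of whose eigenvalues have positive real part (by the second step) has real, strictly positive spectrum, so $\mathcal{L}_{22}>0$.
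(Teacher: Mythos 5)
Your proof is correct. The paper itself supplies no argument for this lemma --- it is imported wholesale from \cite{song2012pinning} --- so there is nothing in the text to compare against line by line; what you have written is essentially the standard proof that the cited reference (and the related pinning-control/M-matrix literature) relies on. The chain ``$\mathcal{L}_{22}$ is a Z-matrix whose spectrum lies in the open right half-plane $\Rightarrow$ nonsingular M-matrix $\Rightarrow$ $\mathcal{L}_{22}^{-1}\ge 0$ entrywise $\Rightarrow$ $\vect p=(\mathcal{L}_{22}^\top)^{-1}\mathbf{1}_n>0$ gives a diagonal $\mat\Gamma$ with $\mat\Gamma\mathcal{L}_{22}+\mathcal{L}_{22}^\top\mat\Gamma$ a symmetric Z-matrix that is strictly diagonally dominant'' is exactly the right skeleton, and your bookkeeping checks out: the positivity of each $p_i$ follows because a nonnegative nonsingular matrix has no zero column, and the row sum $p_i a_{i0}+1>0$ is correct given the zero-row-sum property of the full Laplacian (note the paper's displayed definition of $l_{ii}$ starts its sum at $k=1$, which is evidently a typo --- the example Laplacians in Section~\ref{sec:simulation} all have zero row sums, consistent with the convention you use). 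The block-triangular spectrum argument and the final symmetric case are routine and correct. The only presentational quibble is that the top-right block should be $\mathbf{0}_n^\top$ rather than $\mathbf{0}_{n-1}^\top$ (there are $n$ followers); this is an error in the statement as printed, not in your proof.
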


\subsection{Euler-Lagrange Systems}
A class of dynamical systems can be described using the Euler-Lagrange equations \cite{kelly2006control}. The general form for the $i^{th}$ agent equation of motion is:
\begin{align}\label{eq:el_dynamics_1}
\mat{M}_{i}(\vect{q}_i)\ddot{\vect{q}}_i+\mat{C}_i(\vect{q}_i,\dot{\vect{q}}_{i})\dot{\vect{q}}_{i} + \vect g_i(\vect q_i) =\vect{\tau}_i   
\end{align}
where $\vect{q}_i\in\mathbb{R}^p$ is a vector of the generalized coordinates, $\mat{M}_i(\vect{q}_i)\in\mathbb{R}^{p\times p}$ is the inertial matrix, $\mat{C}_i(\vect{q}_i,\dot{\vect{q}}_{i})\in\mathbb{R}^{p\times p}$ is the Coriolis and centrifugal torque matrix, $\vect g_i(\vect q_i) \in \mathbb{R}^p$ is the vector of gravitational forces and $\vect{\tau}_{i}\in\mathbb{R}^{p}$ is the control input vector. For agent $i$, we have $\vect{q}_i=[\vect{q}_i^{(1)},\ldots,\vect{q}_i^{(p)}]^\top$. We assume each agent is fully actuated. The dynamics in \eqref{eq:el_dynamics_1} are assumed to satisfy the following properties, details of which are provided in \cite{kelly2006control}.
\begin{enumerate}[label=P\arabic*]
\item \label{prty:EL_M_symPD} The matrix $\mat{M}_i(\vect{q}_i)$ is symmetric positive definite. 
\item \label{prty:EL_M_bound} There exist constants $k_{\underline{m}}, k_{\overline{M}} > 0$ such that $k_{\underline{m}}\mat{I}_p\leq \mat{M}_i(\vect{q}_i)\leq k_{\overline{M}}\mat{I}_p, \forall\,i,\vect q_i$. It follows that $\sup_{\vect{q}_i}\Vert\mat{M}_i\Vert_2 \leq k_{\overline{M}}$ and $k_{\underline{m}}\leq \inf_{\vect{q}_i}{ \Vert\mat{M}_i^{-1}\Vert_2}^{-1}$ $\forall\,i$.
\item \label{prty:EL_C_bound} There exists a constant $k_{C} > 0$ such that \mbox{$\Vert\mat{C}_i\Vert_2 \leq k_{C} \Vert\dot{\vect{q}}_i\Vert_2,\forall\,i,\dot{\vect q}_i$}.
\item \label{prty:EL_CM_skew} The matrix $\mat{C}_{i}(\vect{q}_i,\dot{\vect{q}}_i)$ is related to the inertial matrix $\mat{M}_{i}(\vect{q}_i,\dot{\vect{q}}_i)$ by the expression $\vect{x}^T(\frac{1}{2}\dot{\mat{M}}_{i}(\vect{q}_i)-\mat{C}_{i}(\vect{q}_i,\dot{\vect{q}}_i))\vect{x}=0$ for any $\vect{q},\dot{\vect{q}},\vect{x}\in\mathbb{R}^p$. This implies that $\dot{\mat{M}}_{i}(\vect{q}_i) = \mat{C}_{i}(\vect{q}_i,\dot{\vect{q}}_i) + \mat{C}_{i}(\vect{q}_i,\dot{\vect{q}}_i)^\top$.
\item \label{prty:EL_g_bound} There exists a constant $k_g>0$ such that $\Vert \vect{g}_{i}(\vect{q}_i)\Vert < k_g$.
\item \label{prty:EL_Ytheta} Linearity in the parameters: $\mat{M}_{i}(\vect{q}_{i})\vect{x}+\mat{C}_{i}(\vect{q}_{i},\dot{\vect{q}}_{i})\vect{y}+\vect{g}_{i}(\vect{q}_{i})=\mat{Y}_{i}(\vect{q}_{i},\dot{\vect{q}}_{i},\vect{x},\vect{y})\vect{\Theta}_{i}$ for all vectors $x,y\in\mathbb{R}^{p}$, where $\mat{Y}_{i}(\vect{q}_{i},\dot{\vect{q}}_{i},\vect{x},\vect{y})$ is the known regressor matrix and $\mat{\Theta}_{i}$ is a vector of unknown but constant parameters associated with the $i^{th}$ agent.
\end{enumerate}

\begin{rek}
It is assumed throughout this paper that the properties \ref{prty:EL_M_symPD} through to \ref{prty:EL_Ytheta} always hold.
\end{rek}

\begin{assumption}[Sub-class of dynamics]\label{assm:no_gravity}
In Sections~\ref{sec:undir_MI} and \ref{sec:dir_IC}, we assume that $\vect g_i(\vect q_i) = \vect 0, \forall\, i$. In other words, the dynamics of the agents belong to a subclass of Euler-Lagrange equations which do not have a gravity term. That is,
\begin{align}\label{eq:el_dynamics_2}
\mat{M}_{i}(\vect{q}_i)\ddot{\vect{q}}_i+\mat{C}_i(\vect{q}_i,\dot{\vect{q}}_{i})\dot{\vect{q}}_{i} =\vect{\tau}_i   
\end{align}
\end{assumption}

If the gravity term $\vect g_i(\vect q_i)$ is present, the adaptive controller proposed in Section~\ref{sec:adaptive} may be used.

\subsection{Problem Statement}
Denote the leader as agent 0 with $\vect{q}_0$ and $\dot{\vect{q}}_0$ being the generalised coordinates and generalised velocity of the leader, respectively. The aim is to develop event-based, distributed algorithms for each Euler-Lagrange follower agent, where the updates are such that $\tau_i$ is piecewise-constant. The distributed algorithms are designed to achieve leader-follower consensus to a stationary leader, i.e. $\dot{\vect{q}}_0(t) = 0, \forall t \geq 0$. Leader-follower consensus is said to be achieved if $\lim_{t\rightarrow\infty}\|\vect{q}_i(t)-\vect{q}_0(t)\|=0, \forall i=1,\ldots,n$ and $\lim_{t\rightarrow\infty}\|\dot{\vect{q}}_i(t)\|=0, \forall i=1,\ldots,n$ are satisfied.

Another aim of this paper is to exclude the possibility of Zeno behaviour. We provide a formal definition of Zeno behaviour in the sequel. Zeno behaviour of an event-based controller means an infinite number of controller updates occur in a finite time period, which is undesirable since no practical controller can do this.

In this paper, we assume that agent $i\in{1,\ldots,n}$ is equipped with sensors which continuously measures the relative generalised coordinates to agent $i$'s neighbours. In other words, $\vect q_i(t)-\vect q_j(t), \forall j\in\mathcal{N}_i$ is available to agent $i$. In Section~\ref{sec:dir_IC} we also assume that the relative generalised velocities are available, i.e. $\dot{\vect q}_i(t)-\dot{\vect q}_j(t), \forall j\in\mathcal{N}_i$. The scenario where agents collect relative information to execute algorithms can be found in many experimental testbeds, such as ground robots or UAVs equipped with high-speed cameras. It is also assumed that each agent $i$ can measure its own generalised velocity continuously, $\dot{\vect q}_i(t)$.

\begin{definition}\label{def:1}
Let a finite time interval be $t_\mathcal{Z} = [a,b]$ where $0 \leq a < b < \infty$. If, for some finite $k\geq 0$, the sequence of event triggers $\{t_k^i, ..., t_\infty^i\} \in [a,b]$ then the system exhibits Zeno behaviour. 
\end{definition}

\section{Main result: a Variable-gain, model-independent controller on undirected networks}\label{sec:undir_MI}
In this section, we introduce a variable gain, event-triggered control algorithm to achieve leader-follower consensus for Euler-lagrange agents where the network model of the follower agents is described by an undirected network. We show that the proposed algorithm do not require any knowledge of the multi-agent system (i.e totally distributed design) and is globally stable. Zeno behaviour is also excluded for each agent in the system.
\subsection{Main Result}
Define a new state variable for agent $i$ as
\begin{align*}
\vect{z}_i(t)=\sum_{j=0}^{n}a_{ij}(\vect{q}_i(t)-\vect{q}_j(t))+\mu_i(t)\dot{\vect{q}}_i(t),~~~~~i=1,\ldots,n 
\end{align*}
where $a_{ij}$ is an element of the adjacency matrix $\mathcal{A}$ associated with the digraph $\mathcal{G}$. Note that the follower graph $\mathcal{G}_F$ is undirected. We let $\mu_i(t)$ be subject to the following updating law:
\begin{align}
\dot{\mu}_i(t)& = \alpha \dot{\vect{q}}_i^\top(t) \dot{\vect{q}}_i(t) \label{eq:control_gain}
\end{align}
The scalar $\alpha$ is strictly positive and is universal for all agents. It is obvious that $\mu_i(t)$ is a monotonically increasing function. The variable-gain scalar function $\mu_i(t)$ is initialised at $t=0$ with an arbitrary $\mu_i(0) \geq 0$, which implies that $\mu_i(t) \geq 0,\forall\,t > 0$. 

The control algorithm is now proposed. Let the trigger time sequence of agent $i$ be $t_{0}^{i}, t_{1}^{i}, \ldots, t_{k}^{i},\ldots$ with \mbox{$t_0^i := 0$} and we detail below how each trigger time is determined. The event-triggered controller for follower agent $i$ is designed as:
\begin{align}
\vect{\tau}_i(t)&=-\vect{z}_i(t_k^i)   \label{eq:control_input_1}
\end{align}
for $t\in[t_k^i,t_{k+1}^i)$. The control input for each agent is held constant and equal to the last control update $\vect{\tau}_i(t_k^i)$ in the time interval $[t_k^i, t_{k+1}^i)$.

We define a state mismatch for agent $i$ between consecutive event times $t_k^i$ and $t_{k+1}^i$ as follows:
\begin{align}
\vect{e}_i(t)&=\vect{z}_i(t_k^i)-\vect{z}_i(t)  \label{eq:measurement_error}
\end{align}
for $t\in[t_k^i,t_{k+1}^i)$. Then we design the trigger function as follows:
\begin{align}\label{eq:trigger_function_1}
f_i(\vect{e}_i,\dot{\vect{q}}_i,\omega_i)=\|\vect{e}_i(t)\|^2-\beta_i\|\dot{\vect{q}}_i(t)\|^2-\omega_i(t) 
\end{align}
where $\beta_i$ is \emph{an arbitrarily chosen positive constant} (see the Proof of \textbf{Theorem} \ref{theorem_2} for the explanations), $\omega_i(t)$ is an offset function defined as $\omega_i(t)=\kappa_i\exp(-\varepsilon_it)$ with arbitrarily chosen $\kappa_i, \varepsilon_i>0$. The $k^{\text{th}}$ event for agent $i$ is triggered as soon as the trigger condition $f_i(\vect{e}_i,\dot{\vect{q}}_i,\omega_i)=0$ is satisfied. The control input $\vect{\tau}_i(t)$ is updated only when an event of agent $i$ is triggered. Furthermore, every time an event is triggered, and in accordance with their definitions, the measurement error $\vect e_i(t)$ is reset to be equal to zero and thus the trigger function assumes a non-positive value, that is, $f_i(\vect{e}_i,\dot{\vect{q}}_i,\omega_i)\leq 0$.

\begin{rek}
In existing event-based multi-agent control literature, the parameters of the state-dependent term are typically restricted. For example, the authors of \cite{huang2016distributed} studied an event-triggered controller which achieved leaderless consensus for networked Euler-Lagrange agents under an undirected graph. Different from our proposed variable-gain controller, their controller adopts fixed gains. As a result, the parameter $\varrho_i$ (see the trigger function in \cite{huang2016distributed}) of the state-dependent term has to be less than a computable upper bound. This bound requires knowledge of the control gains and graph topology (e.g. number of neighbours and degree of the agent). In comparison, our equivalent parameter $\beta_i$ in our proposed trigger function \eqref{eq:trigger_function_1} can be chosen as an arbitrary positive constant. This provides a much greater flexibility in the implementation of the algorithm.

We note that even in papers considering simple single integrator dynamics with a parameter for the state-dependent term, equivalent to our $\beta_i$, require an upper bound as well (see the seminal works of \cite{dimarogonas2012distributed,fan2013distributed}). To the best of the authors' knowledge, the event-based controller proposed in this paper is the first one to allow an arbitrarily chosen positive parameter for the state-dependent term in the trigger function.
\end{rek}

By substituting the control input \eqref{eq:control_input_1} into the system dynamics \eqref{eq:el_dynamics_2}, the closed-loop system can be written as
\begin{align}\label{eq:system_dynamic_1}
\mat{M}_{i}(\vect{q}_i)\ddot{\vect{q}}_i(t)+\mat{C}_i(\vect{q}_i,\dot{\vect{q}}_{i})\dot{\vect{q}}_{i}(t)=-\vect{z}_i(t_k^i)
\end{align}
Then by applying $\eqref{eq:measurement_error}$, we obtain
\begin{align}\label{eq:system_dynamic_2}
\mat{M}_{i}(\vect{q}_i)\ddot{\vect{q}}_i(t)+\mat{C}_i(\vect{q}_i,\dot{\vect{q}}_{i})\dot{\vect{q}}_{i}(t)=-(\vect{z}_i(t)+\vect{e}_i(t))
\end{align}
Define new state variables $\vect{u}_i = \vect{q}_{i} - \vect{q}_0$ and $\vect{v}_i = \dot{\vect{q}}_{i}$ and we drop the argument $t$ for brevity, and where there is no confusion. Define the stacked column vectors of all $\vect u_i, \vect v_i, \vect q_i, \vect e_i$ as $\vect{u} = [\vect{u}_1^\top, ..., \vect{u}_{n}^\top]^\top$, $\vect{v} = [\vect{v}_1^\top, ..., \vect{v}_{n}^\top]^\top$, $\vect{q} = [\vect{q}_{1}^\top, ..., \vect{q}_{n}^\top]^\top$, $\vect{z}= [\vect{z}_1^\top, ..., \vect{z}_{n}^\top]^\top$ and $\vect{e} = [\vect{e}_1^\top, ..., \vect{e}_{n}^\top]^\top$ respectively. It is easy to obtain that
\begin{align*}
\vect{z}&=(\mathcal{\mat L}_{22}\otimes\vect{I}_p)(\vect{q}-\vect{1}_n\otimes\vect{q}_0)+\mat K\dot{\vect q}\\
		&=(\mathcal{\mat L}_{22}\otimes\vect{I}_p)\vect{u}+\mat K\vect v
\end{align*}
where $\mat K = \diag[\mu_1 \mat I_p, ..., \mu_n \mat I_p]$. Define the following block diagonal matrices $\mat{M}(\vect{q}) = \diag[\mat{M}_1(\vect{q}_1), ..., \mat{M}_n(\vect{q}_n)]$, $\mat{C}(\vect{q},\dot{\vect{q}}) = \diag[\mat{C}_1(\vect{q}_1, \dot{\vect{q}}_1), ..., \mat{C}_n(\vect{q}_n, \dot{\vect{q}}_n)]$. It is obvious that $\mat{M}$ is symmetric positive definite since $\mat{M}_i > 0, \forall\,i $. With these notations, the compact form of system \eqref{eq:system_dynamic_2} can be expressed as
\begin{align}\label{eq:compact_system_dynamic_1}
&\dot{\vect{u}}  = \vect{v}\notag\\
&\dot{\vect{v}}  = -\mat{M}(\vect{q})^{-1}\left[\mat{C}(\vect{q},\vect{v})\vect{v}+(\mathcal{\mat L}_{22} \otimes \mat{I}_p)\vect{u}+\mat{K}\vect{v}+\vect{e}\right]\notag\\
&\dot{\mat K}  = \alpha (\mat \Xi \otimes \mat I_p) 
\end{align}
where $\mat \Xi = \diag[{\Vert\vect v_1\Vert_2}^2, {\Vert\vect v_2\Vert_2}^2, ..., {\Vert\vect v_n\Vert_2}^2]$. The leader-follower objective is achieved when there holds $\vect{u} = \vect{v} = \vect{0}_{np}$.

We now present the main result for this Section.

\begin{theorem}\label{theorem_2}
Suppose that each follower agent with dynamics \eqref{eq:el_dynamics_2}, under Assumption~\ref{assm:no_gravity}, employs the controller \eqref{eq:control_input_1} with trigger function \eqref{eq:trigger_function_1}. Suppose further that the directed graph $\mathcal{G}$ contains a directed spanning tree, with the leader agent $0$ as the root node (thus with no incoming edges) and  the follower graph $\mathcal{G}_F$ is undirected. Then the leader-follower consensus objective is globally asymptotically achieved and no agent will exhibit Zeno behaviour.
\end{theorem}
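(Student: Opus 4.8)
The plan is to pair a Lyapunov/Barbalat argument for the consensus claim with a direct inter-event-time estimate for the Zeno claim. Since $\mathcal{G}$ contains a directed spanning tree rooted at the leader (which has no incoming edges) and $\mathcal{G}_F$ is undirected, \textbf{Lemma}~\ref{lem:lem5} tells us $\mathcal{L}_{22}$ is symmetric positive definite; this is the structural fact I would lean on throughout.

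For the convergence part, I would take the Lyapunov candidate
\begin{align*}
V = \tfrac12 \vect v^\top \mat M(\vect q) \vect v + \tfrac12 \vect u^\top (\mathcal{L}_{22}\otimes \mat I_p) \vect u + \tfrac{1}{2\alpha}\sum_{i=1}^n (\mu_i - \mu^\ast)^2,
\end{align*}
with $\mu^\ast$ a constant to be fixed by the analysis. Differentiating along \eqref{eq:compact_system_dynamic_1} and using property \ref{prty:EL_CM_skew} (so the $\vect v^\top \mat C\vect v$ terms cancel with $\tfrac12\vect v^\top\dot{\mat M}\vect v$), symmetry of $\mathcal{L}_{22}$ (so $\vect v^\top(\mathcal{L}_{22}\otimes\mat I_p)\vect u$ cancels $\vect u^\top(\mathcal{L}_{22}\otimes\mat I_p)\vect v$), and the gain law \eqref{eq:control_gain} (so $\vect v^\top\mat K\vect v$ cancels the cross term from the $\mu_i$-part of $V$), I expect the derivative to collapse to $\dot V = -\mu^\ast\|\vect v\|^2-\vect v^\top\vect e$. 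Applying Young's inequality to $-\vect v^\top\vect e$ and then the between-events bound $\|\vect e_i\|^2\le \beta_i\|\vect v_i\|^2+\omega_i(t)$ (from $f_i\le 0$) gives $\dot V \le -\sum_i(\mu^\ast - \tfrac12 - \tfrac{\beta_i}{2})\|\vect v_i\|^2 + \tfrac12\sum_i\omega_i(t)$; choosing $\mu^\ast > \tfrac12 + \tfrac12\max_i\beta_i$ accommodates the arbitrariness of $\beta_i$ (note $\mu^\ast$ enters only the analysis, not the algorithm). Integrating and using $\int_0^\infty\omega_i\,dt = \kappa_i/\varepsilon_i<\infty$ bounds $V(t)$ and $\int_0^\infty\|\vect v\|^2dt$, so $\vect u,\vect v,\mu_i$ are bounded, hence (via the trigger inequality and \ref{prty:EL_M_bound}--\ref{prty:EL_C_bound}) so are $\vect e$ and $\dot{\vect v}$, and $\vect v\in\mathcal{L}_2\cap\mathcal{L}_\infty$ with $\dot{\vect v}\in\mathcal{L}_\infty$. \textbf{Lemma}~\ref{lem:l1} then yields $\vect v\to\vect 0$; moreover $\mu_i$ is monotone nondecreasing and bounded, hence convergent, and $\vect e\to\vect 0$, $\mat C\vect v\to\vect 0$, $\mat K\vect v\to\vect 0$.

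The step I expect to be the main obstacle is upgrading the velocity convergence $\vect v\to\vect 0$ to position consensus $\vect u\to\vect 0$: the closed-loop right-hand side carries the time-varying gain $\mat K(t)$ and the piecewise-continuous error $\vect e(t)$, so a naive LaSalle argument does not apply. I would resolve this either by (i) a second-stage Lyapunov argument on the now-known bounded invariant region, augmenting $V$ with a small cross term $\epsilon\,\vect u^\top\mat M(\vect q)\vect v$ (positive definiteness of the augmented $V$ holding for $\epsilon$ small relative to $\lambda_{\min}(\mathcal{L}_{22})$, $k_{\underline m}$, $k_{\overline M}$, and its derivative picking up the negative term $-\epsilon\,\vect u^\top(\mathcal{L}_{22}\otimes\mat I_p)\vect u$ while the remaining cross terms are dominated using the established bounds), giving $\vect u\in\mathcal{L}_2$ and hence $\vect u\to\vect 0$ by \textbf{Lemma}~\ref{lem:l1}; or (ii) by first establishing $\dot{\vect v}\to\vect 0$ (splitting $\dot{\vect v}$ into a Lipschitz part plus $-\mat M^{-1}\vect e$ and invoking a Barbalat-type argument) and then reading off $(\mathcal{L}_{22}\otimes\mat I_p)\vect u\to\vect 0$ from the velocity dynamics, using nonsingularity of $\mathcal{L}_{22}$.

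For the exclusion of Zeno behaviour, fix any $[a,b]$ with $0\le a<b<\infty$ and any agent $i$. Between consecutive events $t_k^i,t_{k+1}^i$ one has $\vect e_i(t_k^i)=\vect 0$ and $\dot{\vect e}_i=-\dot{\vect z}_i$ with $\dot{\vect z}_i=\sum_j a_{ij}(\vect v_i-\vect v_j)+\alpha\|\vect v_i\|^2\vect v_i+\mu_i\dot{\vect v}_i$ bounded by a constant $L$ (from the global bounds above), so $\|\vect e_i(t)\|\le L(t-t_k^i)$. On the other hand, at the triggering instant $t_{k+1}^i\le b$ the condition $f_i=0$ forces $\|\vect e_i(t_{k+1}^i)\|^2\ge\omega_i(t_{k+1}^i)\ge\kappa_i e^{-\varepsilon_i b}>0$ — the strictly positive floor contributed precisely by the offset $\omega_i$ in \eqref{eq:trigger_function_1}. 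Combining the two estimates gives $t_{k+1}^i-t_k^i\ge L^{-1}\sqrt{\kappa_i}\,e^{-\varepsilon_i b/2}>0$, so only finitely many events of agent $i$ occur in $[a,b]$; since $[a,b]$ was arbitrary, no agent exhibits Zeno behaviour in the sense of \textbf{Definition}~\ref{def:1}.
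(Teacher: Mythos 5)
Your proposal is correct in outline and, for the Lyapunov stage and the Zeno exclusion, essentially reproduces the paper's argument: the same function $V$ with the $(\mu_i-\bar\mu)^2$ term, the same cancellations via Property~\ref{prty:EL_CM_skew} and symmetry of $\mathcal{L}_{22}$, the same use of $f_i\leq 0$ with Young's inequality, and \textbf{Lemma}~\ref{lem:l1} to obtain $\vect v\to\vect 0_{np}$. Your Zeno argument is in fact slightly cleaner than the paper's: by using $\|\vect e_i(t_{k+1}^i)\|^2\geq\omega_i(t_{k+1}^i)\geq\kappa_i e^{-\varepsilon_i b}$ directly, you avoid the paper's two-case discussion of whether $\vect v_i(t_{k+1}^i)$ vanishes, you keep track of the square root that the paper's equation \eqref{eq:inter_event_bound} drops, and you obtain a uniform (rather than implicitly time-dependent) lower bound on the inter-event time over $[0,b]$. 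The genuine divergence is at the step from $\vect v\to\vect 0$ to $\vect u\to\vect 0$, which you rightly identify as the obstacle. The paper executes your option (ii), but in a finite-window form: it writes $\dot{\vect v}=\vect f+\vect r$ with $\vect f=-\mat M^{-1}(\mathcal{L}_{22}\otimes\mat I_p)\vect u$, integrates over $[t,t+\Delta]$, and uses \textbf{Theorem}~\ref{theorem:mean_value_vector} together with $\dot{\vect f}\to 0$ to conclude $\|\vect f(t)\|\to 0$. Be careful with your literal phrasing ``first establish $\dot{\vect v}\to\vect 0$ by a Barbalat-type argument'': the paper's own remark notes that this implication fails for the non-autonomous closed loop, and the $\mathcal{L}_1$-integrability of $\vect e$ that a naive Barbalat step on the continuous part of $\dot{\vect v}$ would require is not available ($\vect e$ is only $\mathcal{L}_2$); the finite-window averaging is what rescues the argument, since $\int_t^{t+\Delta}\vect r\,ds\to 0$ needs only pointwise decay of $\vect r$. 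Your option (i) --- augmenting $V$ with $\epsilon\,\vect u^\top\mat M\vect v$ on the now-known bounded region so that $\dot V$ acquires $-\epsilon\,\vect u^\top(\mathcal{L}_{22}\otimes\mat I_p)\vect u$ and hence $\vect u\in\mathcal{L}_2$ --- is a genuinely different and arguably more economical route: it dispenses with the mean-value-theorem machinery entirely, at the price of choosing $\epsilon$ using the a posteriori bounds on $\Vert\vect u\Vert$ and $\sup_t\mat K(t)$ from the first stage, which is legitimate since these constants enter only the analysis and not the algorithm.
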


\begin{proof}
We divide our proof into two parts. In the first part, we focus on the stability analysis of the system \eqref{eq:compact_system_dynamic_1}. Notice that \eqref{eq:compact_system_dynamic_1} is non-autonomous in the sense that it is not self-contained ($\mat{M}_i, \mat{C}_i$ depend on $\vect{q}_i$ and $\vect{q}_i, \vect{\dot{q}}_i$ respectively). However, study of a Lyapunov-like function shows leader-follower consensus is achieved. In the second part, analysis is provided to show the exclusion of Zeno behaviour \emph{for each agent}.

\subsubsection{Stability analysis}\label{controller1:stability}
Consider the following Lyapunov-like function
\begin{align}\label{eq:lyapunov_candidate_undirected}
V & =  \frac{1}{2}\vect{u}^\top(\mathcal{\mat L}_{22}\otimes\mat{I}_p)\vect{u} + \frac{1}{2}\vect{v}^\top\mat{M}\vect{v} +\sum_{i=1}^n\frac{1}{2\alpha} (\mu_i-\bar{\mu})^2 \notag\\
  &= V_1 + V_2 +V_3
\end{align}
where $\bar{\mu}$ is a strictly positive constant. The chosen of $\bar{\mu}$ will be presented below. Since $\mathcal{G}$ contains a directed spanning tree and $\mathcal{G}_F$ is undirected, according to \textbf{Lemma} \ref{lem:lem5}, $\mathcal{\mat L}_{22}$ is positive definite. Note that $\mat{M}$ is positive definite and $V_3$ is non-negative, we conclude that $V$ is strictly positive for nonzero $\vect{u}$ and $\vect{v}$.

Taking the derivative of $V$ with respect to time, along the trajectory of system \eqref{eq:compact_system_dynamic_1}, there holds $\dot{V} = \dot{V}_1 + \dot{V}_2 + \dot{V}_3$. Evaluating $\dot{V}_1$ yields
\begin{align*}
\dot{V}_1 = \vect{u}^\top(\mat{L}_{22}\otimes\mat{I}_p)\vect{v}
\end{align*}
Next, the derivative $\dot{V}_2$ is evaluated to be
\begin{align*}
\dot{V}_2 &= \vect{v}^\top\mat{M}\dot{\vect{v}} + \frac{1}{2}\vect{v}^\top\dot{\mat{M}}\vect{v} \notag\\
		  &= -\vect{v}^\top\mat{C}\vect{v} - \vect{v}^\top(\mathcal{\mat L}_{22}\otimes\mat{I}_p)\vect{u} - \vect{v}^\top\mat{K}\vect{v} - \vect{v}^\top\vect{e} + \frac{1}{2}\vect{v}^\top\dot{\mat{M}}\vect{v} \notag\\
          &= - \vect{v}^\top(\mathcal{\mat L}_{22}\otimes\mat{I}_p)\vect{u} - \vect{v}^\top\mat{K}\vect{v} - \vect{v}^\top\vect{e}
\end{align*}
Lastly, $\dot{V}_3$ evaluates to
\begin{align*}
\dot{V}_3 = \sum_{i=1}^n(\mu_i-\bar\mu)\vect{v}_i^\top\vect{v}_i = \vect{v}^\top\mat{K}\vect{v} - \bar{\mu}\vect{v}^\top\vect{v}		  	
\end{align*}
Since $\mat{L}_{22}$ is symmetric, summing $\dot{V}_1$, $\dot{V}_2$ and $\dot{V}_3$ yields
\begin{align}\label{eq:Vdot}
\dot{V} = -\bar{\mu}\vect{v}^\top\vect{v} + \vect{v}^\top\vect{e}	
\end{align}
By using the inequality $\vect{v}^\top\vect{e}\leq\frac{a}{2}\|\vect v\|^2+\frac{1}{2a}\|e\|^2, \forall a>0$, we obtain
\begin{align*}
\dot{V}\leq (\frac{a}{2}-\bar\mu)\|\vect v\|^2 + \frac{1}{2a}\|\vect e\|^2
\end{align*}
Note that the nonpositivity of $f_i(\vect{e}_i,\vect{v}_i,\omega_i)$ guarantees that $\|\vect e\|^2\leq\beta\|\vect v\|^2+\bar{\omega}(t)$, where $\beta=\max_i\{\beta_i\}$ and $\bar{\omega}(t)=\sum_{i=1}^N\omega_i(t)$. It follows that $\dot V$ satisfies
\begin{align*}
\dot{V}\leq (\frac{a}{2}+\frac{\beta}{2a}-\bar\mu)\|\vect v\|^2 + \bar{\omega}(t)
\end{align*}
For notation simplicity, we define $\chi=\bar\mu-\frac{a}{2}-\frac{\beta}{2a}$. Note that for any given $a$ and $\beta$, we can find a sufficiently large $\bar\mu$ to ensure $\chi>0$ and thus
\begin{align*}
\dot V\leq-\chi\|\vect v\|^2+\bar{\omega}(t)
\end{align*}
and it is straightforward to conclude that the parameter $\beta_i$ in the trigger function \eqref{eq:trigger_function_1} can be selected as an arbitrarily positive constant. Integrating both sides of the above equation from 0 to $t$, for any $t>0$, yields
\begin{align*}
V(t)+\chi\int_0^t\|\vect v(\epsilon)\|^2 .\text{d}\epsilon\leq V(0)+\sum_{i=1}^n\frac{\kappa_i}{\varepsilon_i}
\end{align*}
which implies that $V(t)$ and $\chi\int_0^t\|\vect v(\epsilon)\|^2 .\text{d}\epsilon$ are bounded since $V(0), \kappa_i, \varepsilon_i$ are all bounded. By recalling \eqref{eq:lyapunov_candidate_undirected}, it is straightforward to conclude that $\vect u, \vect v, \mu_i$ are all bounded. Now we turn to $\dot{\vect{v}}_i$. Notice that $\dot{\vect{q}}_0=0$ and \eqref{eq:system_dynamic_1}, we have
\begin{align}\label{eq:dot_vi}
\dot{\vect{v}}_i=-\mat{M}_i(\vect{q}_i)^{-1}[\mat{C}_i(\vect{q}_i,\dot{\vect{q}_i})\dot{\vect{q}_i}+\vect{z}_i(t_k^i)]
\end{align}
Since $\vect{u}, \vect{v}, \mu_i$ are bounded, $\dot{\vect{q}_i}$ and $\vect{z}_i(t_k^i)$ are bounded. Then by recalling properties \ref{prty:EL_M_bound} and \ref{prty:EL_C_bound}, we conclude that $\dot{\vect{v}}$ is bounded. From the fact that both $\vect{v}$ and $\dot{\vect{v}}$ are bounded, we obtain $\vect{v},\dot{\vect{v}}\in \mathcal{L}_\infty$. Moreover, the boundedness of $\chi\int_0^t\|\vect v(\epsilon)\|^2 .\text{d}\epsilon$ indicates $\vect v\in \mathcal{L}_2$. By applying \textbf{Lemma} \ref{lem:l1}, we conclude that $\vect{v}\rightarrow \vect{0}_{np}$ as $t\rightarrow\vect\infty$. From \eqref{eq:control_gain} we observe that $\mu_i$ is strictly monotonically increasing. Combining this with the fact that $\mu_i \geq 0$ is bounded, we conclude that $\mu_i(t),\forall\,i$ tends to a finite constant value as $t\to\infty$.

Now we turn to prove that $\vect{u}\rightarrow \vect{0}_{np}$. Due to the difficulty arising from the term $\omega_i(t)$, and the second order dynamics, the proof is more complex than existing proofs for showing convergence to the consensus objective. Consider firstly $\vect e$ and $\vect K$. By recalling the definitions of $\vect{e}_i$ and the trigger function $f_i$, we observe that $\|\vect e\|^2\leq\beta\|\vect v\|^2+\bar{\omega}(t),\forall\,t$. We concluded above that $\lim_{t\to\infty} \|\vect v\|, \bar{\omega}(t) = 0$ which implies that $\lim_{t\to\infty} \vect e = \vect{0}_{np}$. Recalling the definition of $\mat{K}$ above \eqref{eq:compact_system_dynamic_1}), and the fact that $\mu_i,\forall\,i$ tends to a constant value as $t\to\infty$, we conclude that $\lim_{t\to\infty} \mat{K} = \bar{\mat{K}}$ where $\bar{\mat{K}}$ is some finite constant matrix. Rewrite the second equation of \eqref{eq:compact_system_dynamic_1} as
\begin{align}\label{eq:compact_system_dynamic_r}
\dot{\vect{v}} = \vect{f}(t)+\vect{r}(t)
\end{align}
where \mbox{$\vect{f}(t)=-\mat{M}(\vect{q})^{-1}(\mathcal{\mat L}_{22}\otimes \mat{I}_p)\vect{u}$} and $\vect{r}(t)=-\mat{M}(\vect{q})^{-1}\big[\mat{C}(\vect{q},\vect{v})\vect{v}+\mat{K}\vect{v}+\vect{e}\big]$ are both vector functions. Since $\lim_{t\to\infty} \vect v, \vect e = \vect{0}_{np}$, $\mat{K}$ is finite, and $\mat{M}, \mat{C}$ are bounded according to Properties~\ref{prty:EL_M_bound} and \ref{prty:EL_C_bound}, it is obvious that $\lim_{t\to\infty}\vect{r}(t) = \vect{0}_{np}$. Then by integrating both sides of \eqref{eq:compact_system_dynamic_r} from $t$ to $t+\Delta$, where $\Delta$ is a finite positive constant and $t\geq 0$, we obtain
\begin{align}{\label{eq:integral_equality}}
\vect{v}(t+\Delta)-\vect{v}(t)=\int_{t}^{t+\Delta}\vect{f}(s)\text{d}s
+\int_{t}^{t+\Delta}\vect{r}(s).\text{d}s
\end{align}
This implies that there holds
\begin{align}{\label{eq:integral_inequality_1}}
\left\|\int_{t}^{t+\Delta}\vect{f}(s)\text{d}s\right\|&\leq\|\vect{v}(t+\Delta)-\vect{v}(t)\|
+\left\|\int_{t}^{t+\Delta}\vect{r}(s)\text{d}s\right\|
\end{align}
Consider the term $\Vert\int_{t}^{t+\Delta}\vect{f}(s)\text{d}s\Vert$. By applying \textbf{Theorem} \ref{theorem:mean_value_vector}, we conclude that there holds
\begin{align*}
\left\|\int_{t}^{t+\Delta}\vect{f}(s)\text{d}s\right\|\leq\Delta\|\vect{f}(t+\theta(t))\|
\end{align*}
where $\theta(t)\in(0,\Delta)$. Subtracting $\Delta\|\vect{f}(t)\|$ from the both sides of the above inequality yields
\begin{align*}
\left\|\int_{t}^{t+\Delta}\vect{f}(s)\text{d}s\right\|-\Delta\|\vect{f}(t)\| \leq \Delta\left(\|\vect{f}(t+\theta(t))\|-\|\vect{f}(t)\| \right)
\end{align*}
Considering the above right hand side, we observe that $\Delta\left(\|\vect{f}(t+\theta(t))\|-\|\vect{f}(t)\| \right) \leq \Delta \| \vect{f}(t+\theta(t)) -\vect{f}(t) \| = \Delta \| \int_{t}^{t+\theta(t)}\dot{\vect{f}}(s)\text{d}s \|$, which implies that
\begin{align}\label{eq:integral_inequality_2}
\left\|\int_{t}^{t+\Delta}\vect{f}(s)\text{d}s \right\|-\Delta\|\vect{f}(t)\| \leq \Delta \left \| \int_{t}^{t+\theta(t)}\dot{\vect{f}}(s)\text{d}s \right \|
\end{align}
Note that $\text{d}(\mat{M}^{-1})/\text{d}t = -\mat{M}^{-1}\dot{\mat{M}}\mat{M}^{-1}$ because $\text{d}(\mat{M}^{-1} \mat{M})/\text{d}t = \mat{M}^{-1}\dot{\mat{M}} + (\text{d}(\mat{M}^{-1})/\text{d}t)\mat{M} = \mat{0}$. From Properties~\ref{prty:EL_C_bound} and \ref{prty:EL_CM_skew}, we observe that $\lim_{t\to\infty} \| \dot{\mat{M}} \| \leq 2k_C \| \vect{v} \| = 0$. Observe that
\begin{align*}
\dot{\vect{f}}=-\left(\frac{\text{d}(\mat{M}(\vect{q})^{-1})}{\text{d}t}(\mathcal{\mat L}_{22}\otimes \mat{I}_p)\vect{v}+\mat{M}(\vect{q})^{-1}(\mathcal{\mat L}_{22}\otimes \mat{I}_p)\vect{v}\right)
\end{align*}
We proved below \eqref{eq:dot_vi} that $\vect{u}$ is bounded and $\lim_{t\to\infty}\vect{v} = \vect{0}_{np}$. Recall also that $\|\mat{M}(\vect{q})^{-1}\|$ is bounded according to Property~\ref{prty:EL_M_bound}. It follows that $\lim_{t\to\infty} \|\dot{\vect{f}} \| = 0$ which implies that $\| \int_{t}^{t+\theta(t)}\dot{\vect{f}}(s)\text{d}s \| = 0$ since $\theta(t) \in (0,\Delta)$. The inequality \eqref{eq:integral_inequality_2} then implies that $\lim_{t\to\infty} \left\|\int_{t}^{t+\Delta}\vect{f}(s)\text{d}s\right\| = \Delta\|\vect{f}(t)\|$. By substituting this into the left hand side of \eqref{eq:integral_inequality_1}, we obtain
\begin{align}\label{eq:inequality_01}
\Delta\|\vect{f}(t)\|&\leq\|\vect{v}(t+\Delta)-\vect{v}(t)\|
+\left\|\int_{t}^{t+\Delta}\vect{r}(s)\text{d}s\right\|
\end{align}
as $t\rightarrow\infty$. Immediately above \eqref{eq:integral_equality}, we showed that $\lim_{t\to\infty} \vect{r}=\vect{0}_{np}$. In addition, $\lim_{t\to\infty} \vect{v} = \vect{0}_{np}$ and $\Delta$ is a positive constant. We conclude that $\lim_{t\to\infty} \|\vect{v}(t+\Delta)-\vect{v}(t)\|+\left\|\int_{t}^{t+\Delta}\vect{r}(s)\text{d}s\right\| = 0$, which according to \eqref{eq:inequality_01} implies that $\lim_{t\to\infty} \|\vect{f}(t)\| = 0$. By recalling that $\vect{f}(t)=-\mat{M}(\vect{q})^{-1}(\mathcal{\mat L}_{22}\otimes\mat{I}_p)\vect{u}$, we conclude $\lim_{t\to\infty} \vect{u} = \vect{0}_{np}$ since both $\mat{M}(\vect{q})^{-1}$ and $\mat{L}_{22}$ are both positive definite. It is obvious that $\lim_{t\to\infty} \vect{u}, \vect{v} = \vect{0}_{np}$ implies the leader-follower objective is asymptotically achieved.

\subsubsection{Absence of Zeno behaviour}
According to \textbf{Definition} \ref{def:1}, we can prove that Zeno behaviour does not occur for $t\in[0, b]$ by showing that for all $k \geq 0$ there holds $t_{k+1}^i - t_k^i \geq \xi$ where $\xi > 0$ is a strictly positive constant.

Let $\xi_i$ denote the lower bound of the inter-event interval $t_{k+1}^{i}-t_k^i$ for agent $i$, i.e. $t_{k+1}^{i}-t_k^i\geq \xi_i~,\forall k : t_k^i \in [0,b]$. In this part of the proof, we show that $\xi_i$ is strictly positive for $k<\infty$ and thus no Zeno behaviour can occur. From the definition of $\vect{e}_i(t)$ in \eqref{eq:measurement_error} and the fact that $\vect{z}_i(t_k^i)$ is a constant, we observe that the derivative of $\|\vect{e}_i(t)\|$ with respect to time satisfies
\begin{align}
\frac{d}{dt}\|\vect{e}_i(t)\|\leq\|\dot{\vect{z}}_i(t)\|    \label{eq:r10}
\end{align}
where $
\dot{\vect z}_i(t)=\sum_{j=0}^{n}a_{ij}(\dot{\vect q}_i(t)-\dot{\vect q}_j(t))+\dot{\mu}_i(t)\vect{\dot{q}}_i(t)+\mu_i(t)\vect{\ddot{q}}_i(t),~i=1,\ldots,n
$. Note that it is straightward to conclude $\dot{\vect q}_i(t), \ddot{\vect q}_i(t), \dot{\mu}_i(t), \mu_i(t)$ are bounded according to the arguments in \emph{Part} 1). This implies $\dot{\vect z}_i(t)$ is bounded. By letting a positive constant $B_e$ represent the upper bound of $\|\dot{\vect z}_i(t)\|$, we obtain
\begin{align*}
\frac{d}{dt}\|\vect{e}_i(t)\|\leq B_e
\end{align*}
It follows that
\begin{align}
\|\vect{e}_i(t)\|\leq \int_{t_k^i}^tB_edt=B_e(t-t_k^i)   \label{eq:r13}
\end{align}
for $t\in[t_k^i, t_{k+1}^i)$ and for any $k$. It is obvious that the next event time $t_{k+1}^{i}$ is determined both by the changing rate of $\|\vect{e}_i(t)\|$ and by the value of the comparison term $\beta_i\|\vect{z}_i(t)\|^2+\mu_i(t)$ at $t_{k+1}^{i}$. Moreover, $t_{k+1}^{i}$ is the time that
\begin{align}\label{eq:threshold}
\|\vect{e}_i(t)\|^2=\beta_i\|\vect{v}_i(t)\|^2+\omega_i(t),~~~~t>t_k^i
\end{align}
holds. In \emph{Part} 1) we conclude that global state variable $\vect{v}(t) \to \vect{0}_{np}$ as $t\to\infty$ but notice that in the evolution of the system \eqref{eq:compact_system_dynamic_1}, the state variable $\vect{v}_{i}(t)$ may be equal to $\vect{0}_p$ instantaneously ($\vect{v}_{i}(t)$ is a component of $\vect{v}(t)$) may also hold at $t_{k+1}^i$. However, this does not imply leader-follower consensus is reached since $\dot{\vect v}_{i}(t)$ can be non-zero at $t_{k+1}^i$. We refer to such points as ``zero-crossing points'' for convenience. Here we provide Fig. \ref{fig:one_arm} to show the trigger performance at the zero-crossing points of $\vect{v}_{i}(t)$ when $\omega_i(t)=0$. \emph{It is observed that dense trigger behaviour occurs whenever $\vect{v}_{i}(t)$ crosses zero. Theoretically, it can be proved that Zeno behaviour takes place at those zero-crossing points. We refer interested readers to \cite{sun2016new} with detail arguments of the Zeno triggering issues at zero-crossing points.} 

Now we return to the trigger time interval analysis. By recalling \eqref{eq:threshold}, we conclude that at $t_{k+1}^i$, the triggering of the event can only occur according to the following two cases:
\begin{itemize}
\item
Case 1: If $\|\vect{v}_i(t_{k+1}^i)\|\neq0$, the equality $\|\vect{e}_i(t_{k+1}^i)\|=\beta_i\|\vect{v}_i(t_{k+1}^i)\|^2+\omega_i(t_{k+1}^i)$ is satisfied.
\item
Case 2: If $\|\vect{v}_i(t_{k+1}^i)\|=0$, the equality $\|\vect{e}_i(t_{k+1}^i)\|=\omega_i(t_{k+1}^i)$ is satisfied.
\end{itemize}
Compare the above two cases, and note that $\| \vect{v}_i(t^i_{k+1})\| > 0$ for any $\|\vect{v}_{i}(t_{k+1}^i)\|\neq 0$. By recalling that $e_i(t)$ is equal to zero at $t_k^i$, it is straightforward to conclude that it takes longer for the quantity $\|e_i(t)\|^2$ to increase to be equal to the quantity $\beta_i\|\vect{v}_i(t_{k+1}^i)\|^2+\omega_i(t_{k+1}^i)$ (i.e. Case 1) than to increase to be equal to the quantity $\omega_{i}(t_{k+1}^i)$ (i.e. Case 2) and thus trigger an event and resetting $\vect{e}_i(t)$. This implies that $\xi_{Case\:2} < \xi_{Case\:1}$ and proving that there exists a strictly positive $\xi_{Case\:2}$ allows us to draw the conclusion that no Zeno behaviour occurs. According to \eqref{eq:r13}, we have
\begin{align*}
B_e\xi_{Case\:2}\geq\omega_{i}(t)=\exp(-\kappa_{i}(t_k^i+\xi_{Case\:2}))
\end{align*}
This implies that the inter-event time $\xi_{Case\:2}$ is lower bounded by the solution of the following equation
\begin{align}\label{eq:inter_event_bound}
B_e\xi_{Case\:2}=\exp(-\kappa_{i}(t_k^i+\xi_{Case\:2}))
\end{align}
The solution is time-dependent and strictly positive for any finite time since $B_e$ is strictly positive and upper bounded. Zeno behaviour is thus excluded for all agents.
\end{proof}

\begin{rek}
The reader will have noticed the complexity and length of argument required to go from concluding $\lim_{t\to\infty} \vect v = \vect 0_{np}$ below \eqref{eq:dot_vi}, to concluding $\lim_{t\to\infty} \vect{u} = \vect 0_{np}$ below \eqref{eq:inequality_01}. The key reason is the combination of second-order dynamics and the non-autonomous nature of the networked system \eqref{eq:compact_system_dynamic_1} resulting from the offset term $\omega_i(t)$ in the trigger function \eqref{eq:trigger_function_1}. The authors in \cite{huang2016EL_eventtrigger} use a similar trigger function with the same offset term, and claim that $\lim_{t\to\infty} \vect v = \vect 0_{np}$ implies that $\lim_{t\to\infty} \dot{\vect v} = \vect 0_{np}$. This is not correct since the system is non-autonomous. The paper \cite{li2015eventtrigger} uses a trigger function without the offset term, and thus they are able to avoid the non-autonomous issue. However, the lack of the offset term can yield Zeno behaviour, something which was not recorded by \cite{li2015eventtrigger}. We explore the use of the offset term for avoiding Zeno behaviour in the next section.
\end{rek}

\begin{rek}
Unfortunately, we cannot find a constant lower bound for the inter-event time interval. The lower bound $\xi_{Case2}$ found by solving \eqref{eq:inter_event_bound} is still time-dependent and tends to zero as $t\rightarrow\infty$. The avoidance of Zeno behaviour depends on the exponential decay offset completely and the trigger performance when $t\rightarrow \infty$ is not discussed in the theoretical analysis.  However, we note that the state-dependent term in \eqref{eq:trigger_function_1} provides a performance advantage when $t\rightarrow\infty$ due to its own specific effects and should not be removed. We will provide detail explanations for the advantages of our proposed trigger function \eqref{eq:trigger_function_1} in the following subsection. 
\end{rek}

\subsection{Discussions on the chosen of trigger functions} \label{sec3-b}
In this subsection, we provide discussions regarding the trigger performance of controller \eqref{eq:control_input_1} under the following three trigger functions
\begin{itemize}
\item State-dependent trigger function (SDTF)
\begin{align}\label{eq:state_trigger_function}
f_i=\|\vect{e}_i(t)\|-\beta_i\|\vect{v}_i(t)\| 
\end{align}
\item Time-dependent trigger function (TDTF)
\begin{align}\label{eq:time_trigger_function}
f_i=\|\vect{e}_i(t)\|-\kappa_i\exp(-\varepsilon_it) 
\end{align}
\item Mixed trigger function (MTF), which is the proposed \eqref{eq:trigger_function_1}
\begin{align} \label{eq:mixed_trigger_function}
f_i=\|\vect{e}_i(t)\|^2-\beta_i\|\vect{v}_i(t)\|^2-\kappa_i\exp(-\varepsilon_it)
\end{align}
\end{itemize}
from both the viewpoints of theoretical analysis and numerical simulations. In doing so, we highlight the advantages
of our proposed trigger function \eqref{eq:trigger_function_1}. Note that it is hard, but not impossible, to observe the zero-crossing phenomenon for $\vect{v}_i(t) \in\mathbb{R}^p, p \geq 2$ (i.e. when $\vect v_i(t) = \vect 0_p$ occurs, Zeno behaviour is observed as discussed in the proof of Theorem~\ref{theorem_2} and in \cite{sun2016new}). This is because each entry of $\vect v_i(t)$ must be simultaneously equal to $0$. For purposes of illustration, in this subsection we therefore simulate using  dynamics of a one-arm mechanic manipulator ($\vect v_i(t)\in\mathbb{R}^1$). The dynamics are described by equation 3.5 in \cite{kelly2006control}. For all simulations presented in this subsection, we set a constant step size in MATLAB to be 0.00005 seconds (the numerical accuracy of the simulation) and the running time to be 30 seconds. In order to compare performance, we require the following two definitions
\begin{definition}[Minimum Inter-Event Time for Agent $i$]\label{def:min_IE_time}
For $j = \{\text{SDTF,TDTF,MTF}\}$, and for $i = \{1, \hdots, n\}$ define the minimum inter-event time for Agent $i$ $\Delta_{j}^i$ as $\Delta_j^i \triangleq \min_{k} t^i_{k+1} - t^i_k$.
\end{definition}

\begin{definition}[Infimum Time of $\Delta_j$ For Agent $i$]\label{def:inf_time_delta}
For $j = \{\text{SDTF,TDTF,MTF}\}$, and for $i = \{1, \hdots, n\}$, define the ``infimum time of $\Delta_j$ for Agent $i$'' as $t_{\Delta_j^i} \triangleq \inf_{t_k^i} t_k^i : t^{k+2} - t_{k+1} = t^{k+1}_i - t^k_i = \Delta_j^i,\forall k$.
\end{definition}
In other words, for Agent $i$, $t_{\Delta_j^i}$ is the infimum of all event times $t_k^i, \forall\,i$ such that the inter-event time between consecutive events $k+1$ and $k+2$ is equal to the minimum inter-event time $\Delta_j^i$. If there are multiple consecutive events with inter-event time $\Delta_j^i > 0$ then we call this a \emph{dense triggering of events}. Note that because $\Delta_j^i > 0$, dense triggering is not Zeno behaviour, but is nevertheless undesirable.

\begin{table*}[!ht]
\subfigure{
\begin{minipage}{0.5\linewidth}
\begin{center}
\includegraphics[width=0.7\linewidth]{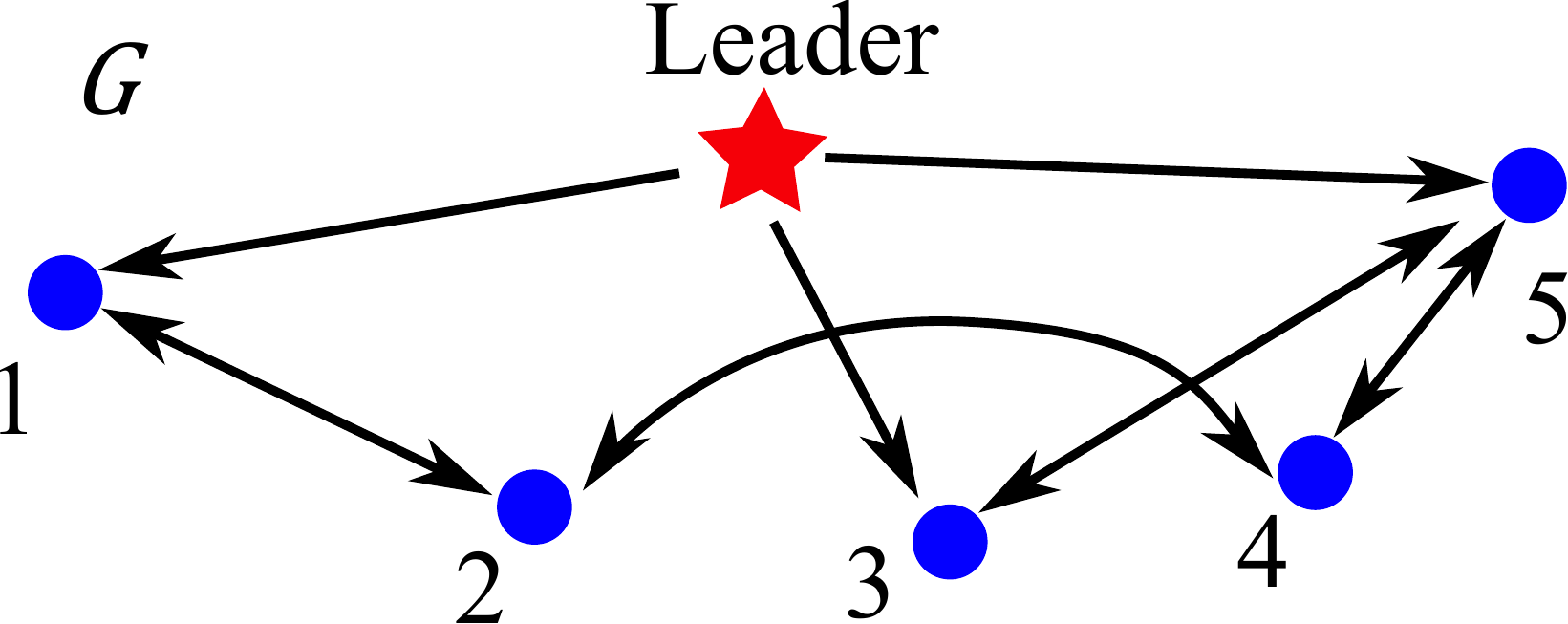}
\captionof{figure}{Graph topology used in simulations}  
\label{fig:topology}
\end{center}
\end{minipage}
\hfill
}
\subfigure{
\begin{minipage}{0.5\linewidth}
\caption{Number of events for three different trigger functions}
\begin{center}
{\renewcommand{\arraystretch}{1.1}
\begin{tabular}{c c c c c}
• & \text{State-dependent} & \text{Time-dependent} & \text{Mixed}  \\\hline
Agent 1 & $259$ & $5581$ & 74     \\\hline
Agent 2 & $184$ & $8106$ & 63      \\\hline
Agent 3 & $575$ & $2251$ & 168      \\\hline
Agent 4 & $94$ & $7365$ & 101       \\\hline
Agent 5 & $438$ & $3845$ & 200     \\\hline
Total   & $1550$ & $27148$ & 606   \\\hline
\end{tabular}
}\label{ta:event_numbers}
\end{center}
\end{minipage}
\hfill
}
\subfigure{
\begin{minipage}{0.5\linewidth}
\caption{Minimum inter-event time $\Delta_j^i$ under three trigger functions}
\begin{center}
{\renewcommand{\arraystretch}{1.1}
\begin{tabular}{c c c c c}
• & \text{State-dependent} & \text{Time-dependent} & \text{Mixed}  \\\hline
Agent 1 & $0.00005$ & $0.00005$ & 0.0388    \\\hline
Agent 2 & $0.00005$ & $0.00005$ & 0.0235    \\\hline
Agent 3 & $0.00005$ & $0.00005$ & 0.0010    \\\hline
Agent 4 & $0.00005$ & $0.00005$ & 0.0037    \\\hline
Agent 5 & $0.00005$ & $0.00005$ & 0.0006    \\\hline
\end{tabular}
}
\end{center}
\label{ta:event_interval}
\end{minipage}
\hfill
}
\subfigure{
\begin{minipage}{0.5\linewidth}
\caption{The infimum Time of $\Delta_j$, $t_{\Delta_j^i}$ as defined in Definition~\ref{def:inf_time_delta}}
\begin{center}
{\renewcommand{\arraystretch}{1.1}
\begin{tabular}{c c c c c}
• & \text{State-dependent} & \text{Time-dependent} & \text{Mixed}  \\\hline
Agent 1 & $0.6736$ & $29.8177$ & 18.0246     \\\hline
Agent 2 & $0.3042$ & $29.6469$ & 2.3991    \\\hline
Agent 3 & $0.4722$ & $29.8398$ & 14.0583     \\\hline
Agent 4 & $1.3219$ & $29.6458$ & 1.3182     \\\hline
Agent 5 & $0.0798$ & $29.9830$ & 15.435    \\\hline
\end{tabular}
}\label{ta:event_time}
\end{center}
\end{minipage}
\hfill
}
\subfigure{
\begin{minipage}{0.5\linewidth}
\begin{center}
\includegraphics[width=\linewidth]{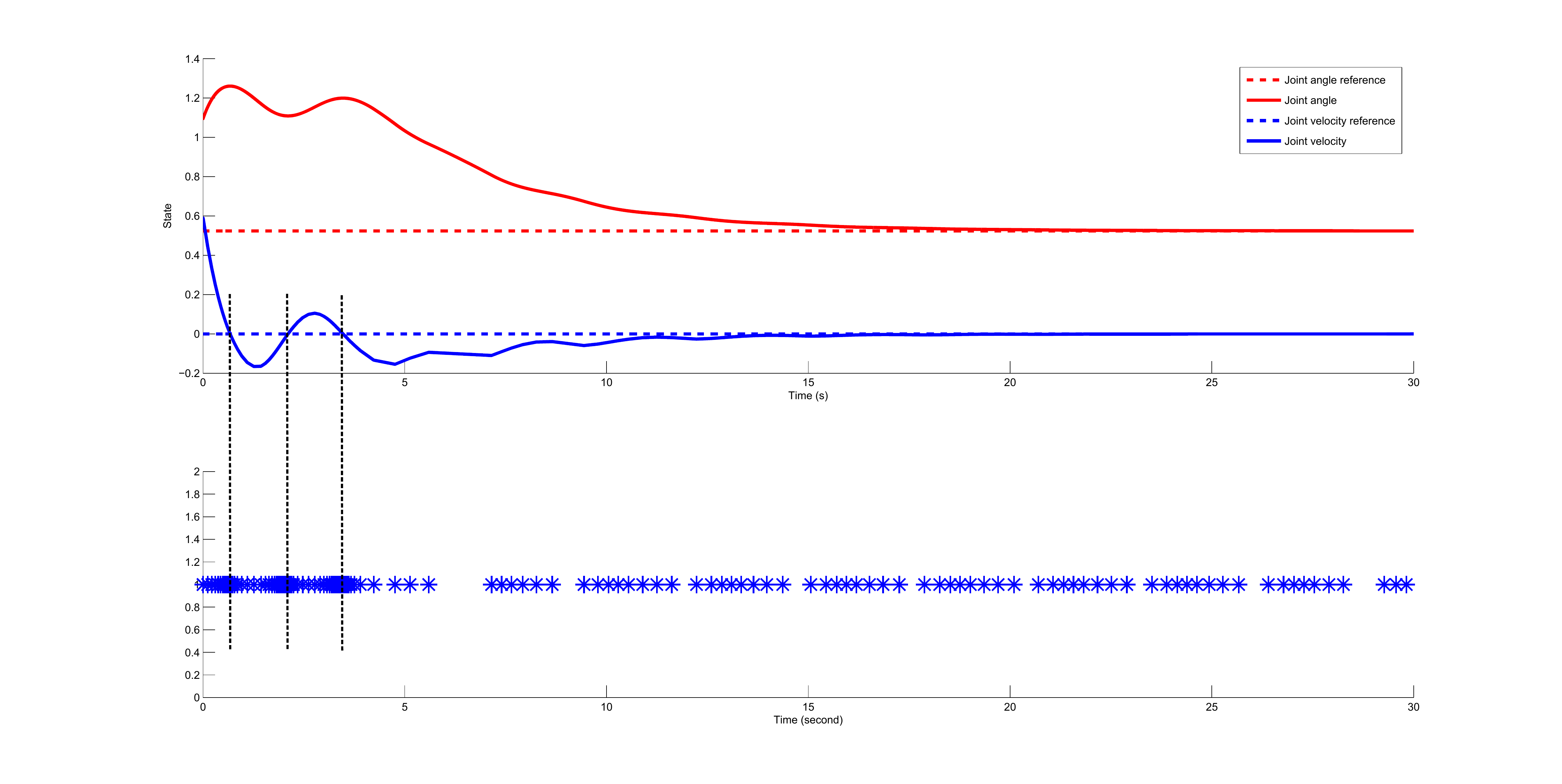}
\captionof{figure}{Top: the evolutions of the generalized coordinate and velocity of agent 1. Bottom: the trigger event times of agent 1} 
\label{fig:one_arm}
\end{center}
\end{minipage}
}
\subfigure{
\begin{minipage}{0.5\linewidth}
\begin{center}
\includegraphics[width=\linewidth]{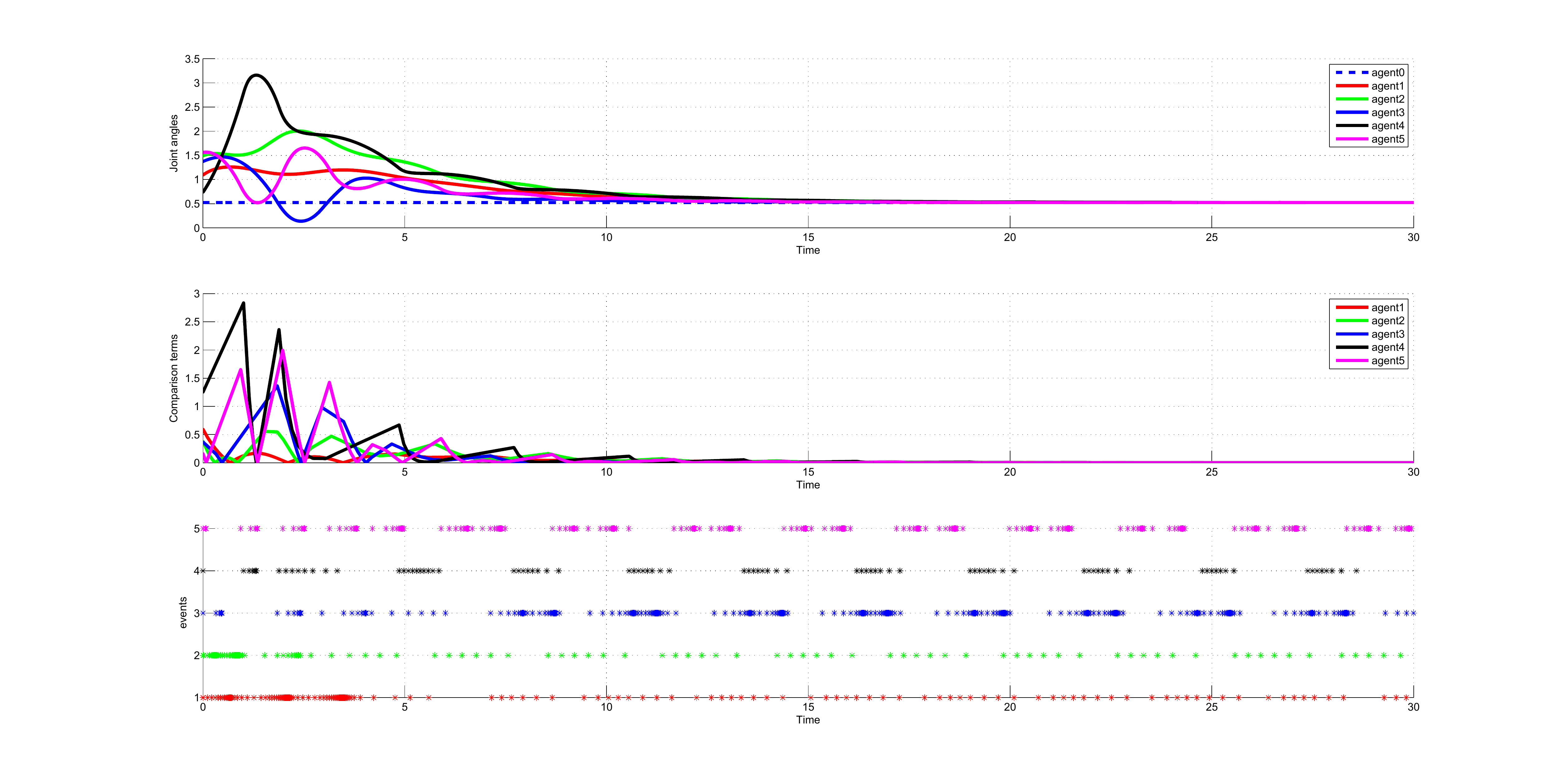}
\captionof{figure}{Performance of controller \eqref{eq:control_input_1} using  SDTF \eqref{eq:state_trigger_function}. We set $\beta_i=2.4$. From top to bottom: 1) the rendezvous of the generalized coordinates; 2) the evolution of $\beta_i\|v_i(t)\|$; 3) event times for each agent.} 
\label{fig:state_trigger_function}
\end{center}
\end{minipage}
}
\subfigure{
\begin{minipage}{0.5\linewidth}
\begin{center}
\includegraphics[width=\linewidth]{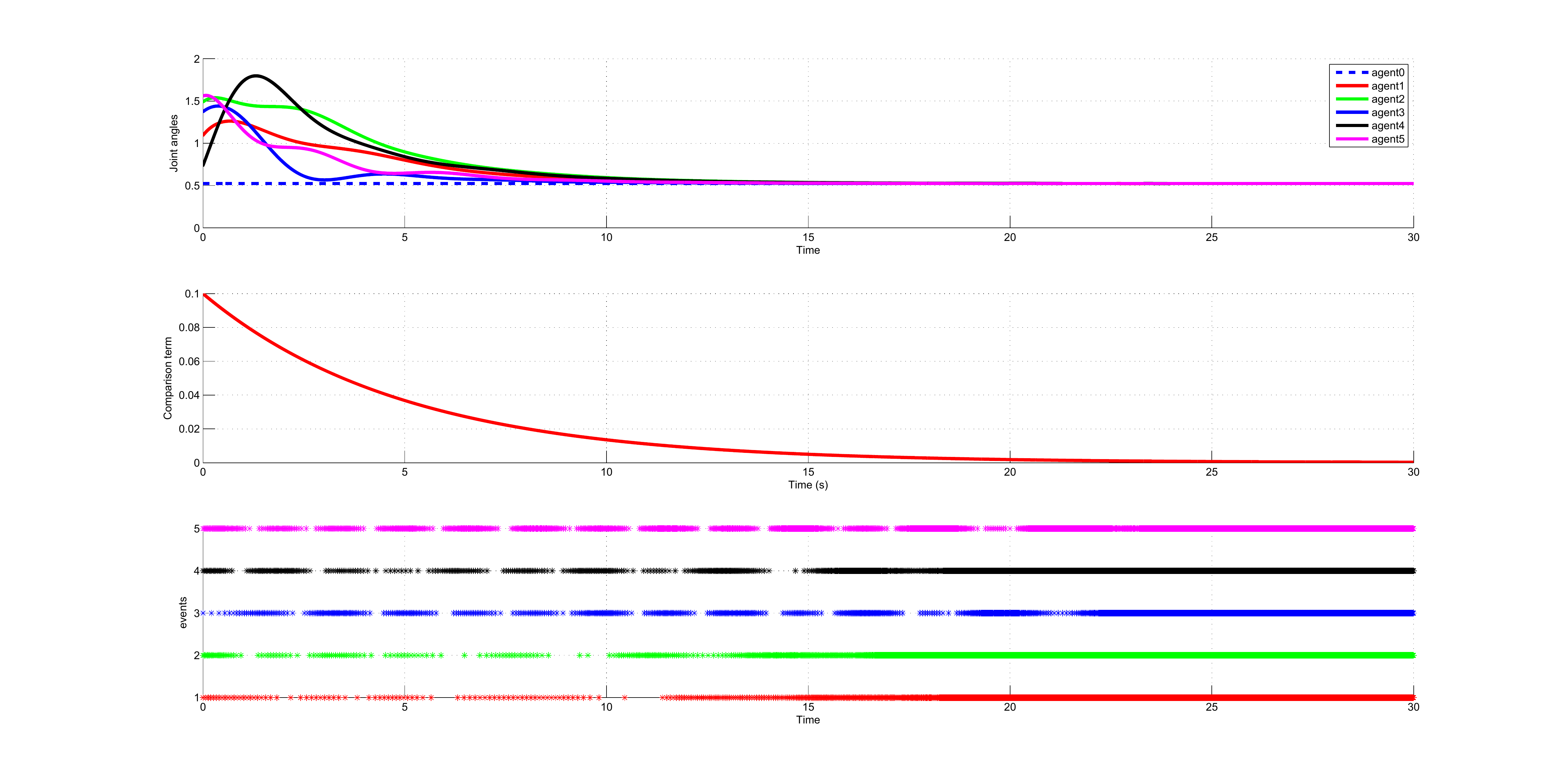}
\captionof{figure}{Performance of controller \eqref{eq:control_input_1} using TDTF \eqref{eq:time_trigger_function}. We set $\kappa_i\exp(-\varepsilon_it) = 0.1\exp(-0.2t)$. From top to bottom: 1) the rendezvous of the generalized coordinates; 2) the evolution of $\kappa_i\exp(-\varepsilon_it)$; 3) event times for each agent.}  
\label{fig:time_trigger_function}
\end{center}
\end{minipage}
}
\subfigure{
\begin{minipage}{0.5\linewidth}
\begin{center}
\includegraphics[width=\linewidth]{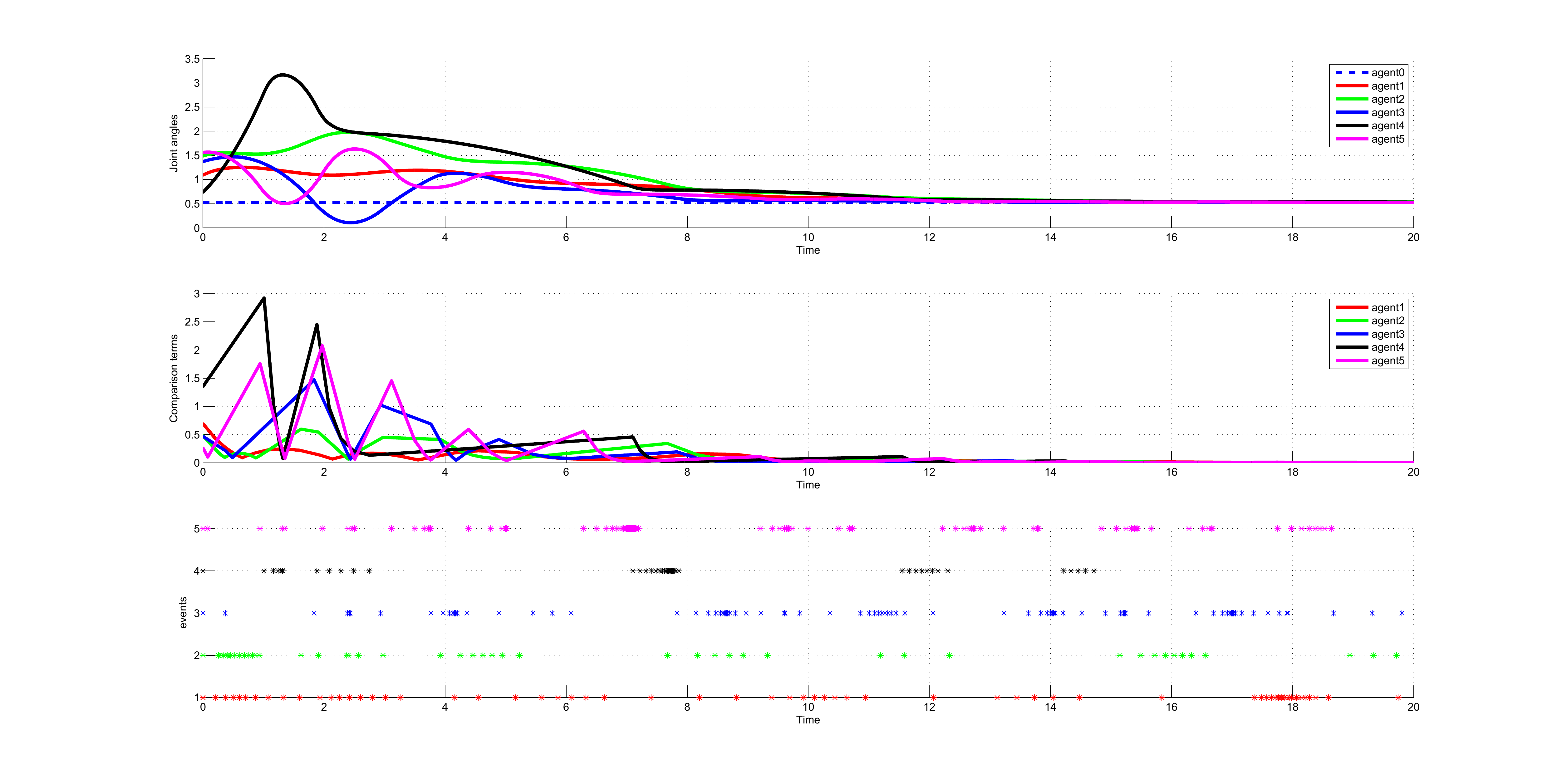}
\captionof{figure}{Performance of controller \eqref{eq:control_input_1} with MTF \eqref{eq:mixed_trigger_function}. We set $\beta_i= 2.4$ and $\kappa_i\exp(-\varepsilon_it) = 0.1\exp(-0.2t)$. From top to bottom: 1) the rendezvous of the generalized coordinates; 2) the evolution of $\beta_i\|v_i(t)\|+\kappa_i\exp(-\varepsilon_it)$; 3) event times for each agent.} 
\label{fig:mixed_trigger_function}
\end{center}
\end{minipage}
}
\end{table*}

Due to space limitations and the similarity of the proofs, we omit the proofs of convergence of system \eqref{eq:system_dynamic_1} under trigger functions \eqref{eq:state_trigger_function} and \eqref{eq:time_trigger_function}. Figures \ref{fig:state_trigger_function} and \ref{fig:time_trigger_function} illustrate the controller \eqref{eq:control_input_1} using SDTF \eqref{eq:state_trigger_function}, and TDTF \eqref{eq:time_trigger_function}, respectively. The figures show  rendezvous of the generalized coordinates, the evolutions of comparison terms ($\beta_i\|\vect v_i(t)\|$ in SDTF and $\kappa_i\exp(-\varepsilon_it)$ in TDTF) and event times. Figure \ref{fig:mixed_trigger_function} shows the performance of controller \eqref{eq:control_input_1} using MTF. We also provide three tables to compare the trigger performance when using SDTF, TDTF and MTF. Table \ref{ta:event_numbers} records the total number of events which occur when using the three different trigger functions. Table \ref{ta:event_interval} records the minimum inter-event time, $\Delta_j^i$. Table \ref{ta:event_time} records the infimum time value, $t_{\Delta_j^i}$, which was defined in Definition~\ref{def:inf_time_delta} above.


Note that the SDTF and TDTF are widely adopted in event-based multi-agent consensus literature. We hereby review and illustrate the advantages and disadvantages regarding the trigger performance by separately using SDTF and TDTF.

\subsubsection{SDTF}
The papers \cite{dimarogonas2012distributed,fan2013distributed,garcia2013decentralised,hu2016consensus, liu2016decentralised} used SDTF to determine the event times. The disadvantage of adopting state-dependent trigger function is that Zeno behaviour can occur when the local state-dependent term crosses zero at a finite time value as indicated in \cite{sun2016new} (i.e. in \eqref{eq:state_trigger_function}, the term $v_i(t) = 0$ instantaneously, for $t<\infty$). According to the first column of Table \ref{ta:event_interval}, the minimum inter-event time is $\Delta_{\text{SDTF}}^i = 0.00005$ seconds, for all $i$, which is equal to the fixed time step of the MATLAB simulations. As shown in \cite{sun2016new}, Zeno behaviour will occur at these instants. From the first column of Table \ref{ta:event_time} and the second sub-graph of Fig. \ref{fig:state_trigger_function}, we observe that Zeno behaviour occurs at the time instants that $v_i(t)$ crosses $0$, which supports the conclusion of Zeno behaviour. However, according to the arguments in \cite{dimarogonas2012distributed, sun2016new}, if each agent uses SDTF, then at any time $t$, there exists at least one agent for which the next inter-event interval is strictly positive at any time $t$. In other words, for all $t \in [0,\infty)$, some agents may exhibit Zeno behaviour, \emph{but at least one agent will be Zeno free}. 


\subsubsection{TDTF}
In \cite{seyboth2013event, yang2016decentralized, wei2017edge}, by using carefully-designed TDTF (typically the decay rate of $\exp(-\epsilon_i t)$ in \eqref{eq:time_trigger_function} must be upper bounded), a strictly positive and \emph{constant lower bound} on the inter-event time interval for each agent can be obtained. However, the use of the TDTF has the following two limitations: 1) the applied system has to be exponentially stable, and 2) accurate information (agent's dynamic model and network topology) is required to design the decay rate of $\exp(-\epsilon_i t)$. We emphasise that the use of TDTF with arbitrary decay rate for $\exp(-\epsilon_i t)$ in enough to exclude Zeno behaviour (see the second part of the proof of \textbf{Theorem} \ref{theorem_2}). However, if the decay rate is not selected to be sufficiently slow, the lower bound on the inter-event time cannot be guaranteed to be constant, but instead becomes time dependent. This results in dense triggering behaviour as consensus is almost reached, i.e. multiple events occur in a very short time interval (see Fig. \ref{fig:time_trigger_function}). From the second columns of Table \ref{ta:event_interval} and Table \ref{ta:event_time}, it is observed that $\Delta_j^i$ occurs around $29s$, for all $i$, which is when the system is close to consensus. Note that dense triggering as $t\rightarrow\infty$ is not Zeno behaviour (See Definition \ref{def:1}). However, it can be observed from Table \ref{ta:event_numbers} that unsuitably chosen trigger function will introduce a large amount of events, which is obviously undesirable.    

\subsubsection{MTF}
According to Table \ref{ta:event_numbers}, it is straightforward to conclude that using a MTF shows the best trigger performance with the least number of total events. According to Table \ref{ta:event_numbers}, using MTF also shows that the minimum inter-event time, $\Delta_j^i$ is greater than the constant MATLAB step size of $0.00005$ seconds, which indicates Zeno behaviour is excluded. These observations reveal that MTF is able to combine the advantages of using SDTF and TDTF separately, i.e., the exclusion of Zeno behaviour in finite time (TDTF) and guarantee that dense triggering does not occur as consensus is reached (SDTF). However, a thorough analysis to find a \emph{constant} lower bound on $\Delta_j^i$ when using MTF remains an open challenge (we can find a time-dependent bound).

\begin{rek}
The work \cite{zhu2015event, wang2017event} also use MTF. However, the authors design the evolution speeds of the exponential functions using exact knowledge of agent dynamic models and the graph topology. The effects of adding state-dependent terms to the trigger functions were not well-addressed by the authors of \cite{zhu2015event, wang2017event}. 
\end{rek}

\section{Model-Independent Controller on Directed Graph}\label{sec:dir_IC}
In this section, we propose and analyse a distributed event-triggered model-independent algorithm to achieve leader-follower consensus on a directed network where each fully-actuated agent has self-dynamics described by the Euler-Lagrange equation. For design of the control laws, the following assumption is required.

\begin{assumption}[Limited Use of Centralised Design]\label{assm:central_design}
Three parameters in the algorithm in this Section must be designed to exceed several lower bounding inequalities. These inequalities require knowledge of the constants $k_{\underline{m}}, k_{\overline{M}}, k_C$ defined in the properties \ref{prty:EL_M_bound} and \ref{prty:EL_C_bound}. We therefore assume these constants are known to the designer. 
\end{assumption}

\subsection{Main Result}
Let the triggering time sequence of agent $i$ be $t_{0}^{i}, t_{1}^{i}, \ldots, t_{k}^{i},\ldots$ with \mbox{$t_0^i := 0$}. Consider a model-independent, event-triggered algorithm for the $i^{th}$ follower agent of the form
\begin{align}\label{eq:agent_control_law_direct}
\vect{\tau}_i(t) &= -\sum_{j \in \mathcal{N}_i}  a_{ij}\Big((\vect{q}_i(t_k^i)-\vect{q}_j(t_k^i))+\mu(\dot{\vect{q}}_i(t_k^i)-\dot{\vect{q}}_j(t_k^i)) \Big)\notag\\
&~~~~~~~~~~~~~~~~~~~~~~~~~~~~~~~t\in[t_k^i,t_{k+1}^i)
\end{align}
where $a_{ij}$ is the weighted $(i, j)$ entry of the adjacency matrix $\mathcal{A}$ associated with the weighted directed graph $\mathcal{G}$. The control gain scalar $\mu > 0$ is universal to all agents. To ensure the control objective is achieved, $\mu$ must be designed to satisfy several inequalities, which will be detailed below. Note that if the leader is a neighbour of agent $i$ then for $j= 0$ we have $\mu(\dot{\vect{q}}_i(t_k^i)-\dot{\vect{q}}_0(t_k^i)) = \mu(\dot{\vect{q}}_i(t_k^i))$, which is simply a damping term.

Define a new variable
\begin{align*}
\vect{z}_i(t)=\sum_{j\in\mathcal{N}_i}a_{ij}\Big((\vect{q}_i(t)-\vect{q}_j(t))+\mu(\dot{\vect{q}}_i(t)-\dot{\vect{q}}_j(t)) \Big)  
\end{align*}
We define a state mismatch for agent $i$ between consecutive event times $t_k^i$ and $t_{k+1}^i$ as follows:
\begin{align}
\vect{e}_i(t)&=\vect{z}_i(t_k^i)-\vect{z}_i(t)  \label{eq:measurement error}
\end{align}
for $t\in[t_k^i,t_{k+1}^i)$.

The trigger function is proposed as follows:
\begin{align}\label{eq:agent_trigger_direct}
f_{i}(e_{i}(t)) & =\|\vect e_i(t)\|^2- \mu^{-2}{\beta_1}^2  \Vert \sum_{j\in\mathcal{N}_i}a_{ij}(\vect{q}_i(t)-\vect{q}_j(t)) \Vert^2 \nonumber \\
& \; - {\beta_2}^2  \Vert \sum_{j\in\mathcal{N}_i}a_{ij}(\vect{v}_i(t)-\vect{v}_j(t)) \Vert^2 -\omega_{i}(t)
\end{align}
where $\omega_i(t)=a_i\exp(-\kappa_i t)$ where $a_i,\kappa_i>0$. The parameters $\beta_1$ and $\beta_2$ are to be determined in the sequel. The $k^{th}$ event for agent $i$ is triggered as soon as the trigger condition $f_{i}(\vect e_{i}(t))=0$ is fulfilled at $t=t_k^i$. For $t\in[t_k^i, t_{k+1}^i)$, the control input is $\tau_i(t) = \tau_i(t_k^i)$; the control input is updated when the next event is triggered. Furthermore, every time an event is triggered, and in accordance with their definitions, the measurement error $\vect e_i(t)$ is reset to be equal to zero and thus the trigger function assumes a negative value. One can immediately observe that for all $t$ 
\begin{align*}
\Vert \vect e_i(t) \Vert^2 & \leq \mu^{-2}{\beta_1}^2 \Vert \sum_{j\in\mathcal{N}_i}a_{ij}(\vect{q}_i(t)-\vect{q}_j(t)) \Vert^2 \nonumber \\
& \; + {\beta_2}^2  \Vert \sum_{j\in\mathcal{N}_i}a_{ij}(\vect{v}_i(t)+\vect{v}_j(t)) \Vert^2 + \omega_{i}(t)
\end{align*}
and note that $\sum_{j\in\mathcal{N}_i}a_{ij}(\vect{q}_i(t)-\vect{q}_j(t)) = \sum_{j\in\mathcal{N}_i}a_{ij}[(\vect{q}_i(t)-\vect{q}_0)-(\vect{q}_j(t)-\vect{q}_0)] =\vect l_i^\top \vect u$ where $\vect l_i^\top$ is the $i^{th}$ row of $\mathcal{L}_{22}$. Likewise, $\sum_{j\in\mathcal{N}_i}a_{ij}(\vect{v}_i(t)-\vect{v}_j(t)) = \vect l_i^\top \vect v$. The stacked column vector $\vect e = [\vect e_1^\top, \hdots, \vect e_n^\top]^\top$ then has the following property
\begin{align}
\Vert \vect e \Vert^2 & = \sum_{i = 1}^n \Vert \vect e_i(t) \Vert^2 \notag\\
& \leq \sum_{i = 1}^n \Big( \mu^{-2}{\beta_1}^2 \Vert \vect l_i^\top \vect u \Vert^2 + {\beta_2}^2 \Vert \vect l_i^\top \vect v \Vert^2 + \omega_{i}(t) \Big)
 \end{align}
It is straightforward to verify that $\sum_{i = 1}^n \Vert \vect l_i^\top \vect u \Vert^2 = \Vert \mathcal{L}_{22} \vect u \Vert^2$, and $\sum_{i = 1}^n  \Vert \vect l_i^\top \vect v \Vert^2 = \Vert \mathcal{L}_{22} \vect v \Vert^2$. It then follows that 
\begin{align}
\Vert \vect e \Vert^2 & \leq \mu^{-2}{\beta_1}^2 \Vert \mathcal{L}_{22} \vect u \Vert^2 + {\beta_2}^2 \Vert \mathcal{L}_{22} \vect v \Vert^2 + \bar\omega(t)^2 \\
\Vert \vect e \Vert & \leq \mu^{-1}\beta_1 \Vert \mathcal{L}_{22}\Vert \Vert \vect u \Vert + \beta_2 \Vert \mathcal{L}_{22}\Vert \Vert \vect v \Vert + \bar\omega(t) \label{eq:e_norm_bound}
 \end{align}
where $\bar \omega(t) = (\sum_{i=1}^n \omega_i(t))^{\frac{1}{2}}$

It is obvious that
\begin{align*}
\vect \tau_i(t) = \vect{z}_i(t)+\vect{e}_i(t)
\end{align*}
Applying control law \eqref{eq:agent_control_law_direct} to each agent we can express the networked system using the new variables $\vect{u}, \vect{v}$ as below
\begin{equation}\label{eq:euler_lagrange_general_system_error}
\mat{M}(\vect{q})\dot{\vect{v}} + \mat{C}(\vect{q},\vect{v})\vect{v} + (\mathcal{L}_{22} \otimes \mat{I}_p)(\vect{u} + \mu\vect{v}) + \vect e= \vect{0}
\end{equation}
and expressed as the non-autonomous system
\begin{align}\label{eq:euler_Lagrange_nonautonomous_network}
\dot{\vect{u}} & = \vect{v} \notag \\
\dot{\vect{v}} & = -\mat{M}(\vect{q})^{-1}\left[\mat{C}(\vect{q},\vect{v})\vect{v}+(\mathcal{L}_{22} \otimes \mat{I}_p)(\vect{u} + \mu\vect{v}) + \vect e\right] 
\end{align}
By using arguments like those of usual Lyapunov theory, we will be able to prove the stability of \eqref{eq:euler_Lagrange_nonautonomous_network}. Before we present the main theorem of this section, we state a mild assumption used \emph{only in this Section}.

\begin{assumption}\label{assm:IC}
All possible initial conditions lie in some fixed but arbitrarily large set, which is known a priori. In particular, $\Vert \vect{u}_i(0) \Vert \leq k_a/\sqrt{n}$ and $\Vert \vect{v}_i(0) \Vert_2 \leq k_b/\sqrt{n}$, where $k_a, k_b$ are known a priori. 
\end{assumption}
This assumption is entirely reasonable; many Euler-Lagrange systems will have an expected operating range for $\vect{q}$ and $\dot{\vect{q}}$.

\begin{theorem}\label{theorem:directed_main_MI}
Suppose that each follower agent with dynamics \eqref{eq:el_dynamics_2}, under Assumption~\ref{assm:no_gravity}, employs the controller \eqref{eq:agent_control_law_direct} with trigger function \eqref{eq:agent_trigger_direct}. Suppose further that the directed graph $\mathcal{G}$ contains a directed spanning tree, with the leader agent $0$ as the root node (and thus with no incoming edges). Then there exists a sufficiently large $\mu$, and sufficiently small $\beta_1, \beta_2$ which ensures that the leader-follower consensus objective is achieved semi-globally exponentially fast.
\end{theorem}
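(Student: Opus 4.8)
The plan is to build a single quadratic-type Lyapunov function for the non-autonomous closed loop \eqref{eq:euler_Lagrange_nonautonomous_network}, use the skew-symmetry Property~\ref{prty:EL_CM_skew} to cancel the Coriolis term, and then show $\dot V \le -c_3 V + c_4\,\bar\omega(t)^2$ on a forward-invariant sublevel set whose radius is dictated by Assumption~\ref{assm:IC}. First I would invoke \textbf{Lemma}~\ref{lem:lem5} to obtain a diagonal $\mat\Gamma>0$ with $\mat Q:=\mat\Gamma\mathcal{L}_{22}+\mathcal{L}_{22}^\top\mat\Gamma>0$, write $\bar{\mat\Gamma}=\mat\Gamma\otimes\mat I_p$, and take
\begin{align*}
V=\tfrac12\vect u^\top(\mat\Gamma\mathcal{L}_{22}\otimes\mat I_p)\vect u+\tfrac12\vect v^\top\bar{\mat\Gamma}\mat M(\vect q)\vect v+\epsilon\,\vect u^\top\bar{\mat\Gamma}\mat M(\vect q)\vect v
\end{align*}
with $\epsilon>0$ small. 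Since $\vect u^\top(\mat\Gamma\mathcal{L}_{22}\otimes\mat I_p)\vect u=\tfrac12\vect u^\top(\mat Q\otimes\mat I_p)\vect u$, Property~\ref{prty:EL_M_bound} together with a completion-of-squares argument in the spirit of \textbf{Lemma}~\ref{lem:W_bound} shows that for $\epsilon$ small enough there exist $c_1,c_2>0$ with $c_1(\|\vect u\|^2+\|\vect v\|^2)\le V\le c_2(\|\vect u\|^2+\|\vect v\|^2)$.

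Differentiating $V$ along \eqref{eq:euler_Lagrange_nonautonomous_network}: in $\frac{d}{dt}\big(\tfrac12\vect v^\top\bar{\mat\Gamma}\mat M\vect v\big)$ the quantity $\vect v^\top\bar{\mat\Gamma}(\tfrac12\dot{\mat M}-\mat C)\vect v$ vanishes blockwise by Property~\ref{prty:EL_CM_skew} ($\bar{\mat\Gamma}$ having scalar diagonal blocks), leaving $-\mu\,\vect v^\top\bar{\mat\Gamma}(\mathcal{L}_{22}\otimes\mat I_p)\vect v\le-\tfrac{\mu}{2}\lambda_{\min}(\mat Q)\|\vect v\|^2$, a cross term $-\vect v^\top\bar{\mat\Gamma}(\mathcal{L}_{22}\otimes\mat I_p)\vect u$, and $-\vect v^\top\bar{\mat\Gamma}\vect e$. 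The derivative of the $\epsilon$-term supplies the essential $-\epsilon\,\vect u^\top\bar{\mat\Gamma}(\mathcal{L}_{22}\otimes\mat I_p)\vect u\le-\tfrac{\epsilon}{2}\lambda_{\min}(\mat Q)\|\vect u\|^2$, plus $\epsilon\,\vect v^\top\bar{\mat\Gamma}\mat M\vect v$, a cross term $-\epsilon\mu\,\vect u^\top\bar{\mat\Gamma}(\mathcal{L}_{22}\otimes\mat I_p)\vect v$, the cubic terms $\pm\epsilon\,\vect u^\top\bar{\mat\Gamma}(\dot{\mat M}\text{ or }\mat C)\vect v$ which are $O(\epsilon k_C\|\vect u\|\,\|\vect v\|^2)$ by Properties~\ref{prty:EL_C_bound}--\ref{prty:EL_CM_skew}, and $-\epsilon\,\vect u^\top\bar{\mat\Gamma}\vect e$; finally $\frac{d}{dt}\big(\tfrac12\vect u^\top(\mat\Gamma\mathcal{L}_{22}\otimes\mat I_p)\vect u\big)=\tfrac12\vect u^\top(\mat Q\otimes\mat I_p)\vect v$ combines with $-\vect v^\top\bar{\mat\Gamma}(\mathcal{L}_{22}\otimes\mat I_p)\vect u$ to leave only the skew-symmetric remainder $\tfrac12\vect u^\top\big((\mat\Gamma\mathcal{L}_{22}-\mathcal{L}_{22}^\top\mat\Gamma)\otimes\mat I_p\big)\vect v$.

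The error terms are handled by the trigger condition: since $\dot V$ is evaluated on intervals $t\in[t_k^i,t_{k+1}^i)$ where each $f_i\le 0$, summing gives \eqref{eq:e_norm_bound}, $\|\vect e\|\le\mu^{-1}\beta_1\|\mathcal{L}_{22}\|\,\|\vect u\|+\beta_2\|\mathcal{L}_{22}\|\,\|\vect v\|+\bar\omega(t)$; the first two contributions are made arbitrarily small by choosing $\beta_1,\beta_2$ small, and Young's inequality turns the $\bar\omega$ contribution into a small multiple of $\|\vect u\|^2+\|\vect v\|^2$ plus a harmless $c_4\bar\omega(t)^2$. Applying Young's inequality to every bilinear remainder and collecting, $\dot V\le-\big(\tfrac{\epsilon}{2}\lambda_{\min}(\mat Q)-O(\beta_1+\epsilon)\big)\|\vect u\|^2-\big(\tfrac{\mu}{2}\lambda_{\min}(\mat Q)-O(\beta_2)-O(\epsilon\mu^2)-O(\epsilon k_C\|\vect u\|)\big)\|\vect v\|^2+c_4\bar\omega(t)^2$. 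The parameters are then fixed in order: $\beta_1,\beta_2$ small; $\epsilon$ small (for $V>0$ and so $O(\epsilon\mu^2)$ stays controllable for the eventual $\mu$); and $\mu$ large. The single remaining obstruction is the cubic term $O(\epsilon k_C\|\vect u\|)\|\vect v\|^2$, dominated only while $\|\vect u\|$ stays below a threshold proportional to $\mu/(\epsilon k_C)$ --- which is exactly where Assumption~\ref{assm:IC} and the semi-global qualifier enter, and is the main obstacle.

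To close this I would run an invariance/bootstrap argument on $\Omega=\{(\vect u,\vect v):V\le c_2(k_a^2+k_b^2)+c_4\int_0^\infty\bar\omega(s)^2\,ds\}$: there $\|\vect u\|\le\sqrt{V/c_1}$ is bounded, so for $\mu$ chosen large enough (depending on $k_a,k_b$, hence only semi-globally) the cubic term is dominated and $\dot V\le-c_3 V+c_4\bar\omega(t)^2$ holds on all of $\Omega$; since the initial state lies in $\Omega$ by Assumption~\ref{assm:IC} (so $V(0)\le c_2(k_a^2+k_b^2)$) and $\dot V<0$ on $\partial\Omega$, $\Omega$ is forward invariant and the reasoning is not circular. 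The comparison lemma then gives $V(t)\le e^{-c_3 t}V(0)+c_4\int_0^t e^{-c_3(t-s)}\bar\omega(s)^2\,ds$, and since $\bar\omega(t)^2=\sum_i a_i\exp(-\kappa_i t)$ decays exponentially, so does $V(t)$; hence $\|\vect u(t)\|,\|\vect v(t)\|\to0$ exponentially, i.e. semi-global exponential leader--follower consensus. Exclusion of Zeno behaviour I would then establish exactly as in the second part of the proof of \textbf{Theorem}~\ref{theorem_2}: the exponential offset $\omega_i(t)$ again furnishes a strictly positive (time-dependent) lower bound on each inter-event interval over any finite horizon, so \textbf{Definition}~\ref{def:1} is not met.
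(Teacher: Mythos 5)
Your skeleton matches the paper's: a quadratic Lyapunov function with a $\vect u$--$\vect v$ cross term, cancellation of the Coriolis contribution via Property~\ref{prty:EL_CM_skew}, the trigger-induced bound \eqref{eq:e_norm_bound} on $\Vert\vect e\Vert$, a semi-global invariant-set bootstrap (via Assumption~\ref{assm:IC}) to tame the cubic term $k_C\Vert\vect u\Vert\,\Vert\vect v\Vert^2$, and an exponential-decay conclusion. Your treatment of the offset --- Young's inequality folding the $\bar\omega(t)$-linear terms into $-c_3V+c_4\bar\omega(t)^2$ and then the comparison lemma --- would even be cleaner than the paper's route, which keeps those linear terms, invokes \textbf{Corollary}~\ref{cor:bounded_pd_function_2} to get $\dot V<0$ only outside a shrinking region $\mathcal{T}(t)$, and argues two cases on the exit time $T_2$. (Your subsidiary claim that $\dot V<0$ on $\partial\Omega$ is not automatic, since $c_4\bar\omega(0)^2$ can exceed $c_3$ times the level of $\Omega$; but the first-exit-time contradiction you gesture at fixes this.)

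The genuine gap is in the bilinear bookkeeping that precedes all of this. After the cancellations you identify, $\dot V$ still contains $\tfrac12\vect u^\top\big((\mat\Gamma\mathcal{L}_{22}-\mathcal{L}_{22}^\top\mat\Gamma)\otimes\mat I_p\big)\vect v$ together with $-\epsilon\mu\,\vect u^\top\big((\mat\Gamma\mathcal{L}_{22})\otimes\mat I_p\big)\vect v$, i.e.\ a cross term of size $(D+\epsilon\mu C)\Vert\vect u\Vert\,\Vert\vect v\Vert$ with $D=\tfrac12\Vert\mat\Gamma\mathcal{L}_{22}-\mathcal{L}_{22}^\top\mat\Gamma\Vert$ and $C=\Vert\mat\Gamma\mathcal{L}_{22}\Vert$ fixed by the graph. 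The only negative terms available to absorb it are $-\mathcal{O}(\epsilon)\Vert\vect u\Vert^2$ and $-\mathcal{O}(\mu)\Vert\vect v\Vert^2$, so Young's inequality requires $(D+\epsilon\mu C)^2\lesssim \epsilon\mu\,\lambda_{\min}(\mat Q)^2$. With your quantifier order (fix $\epsilon$ small, then send $\mu\to\infty$) the left side grows like $\epsilon^2\mu^2$ against $\epsilon\mu$ on the right, so the inequality eventually fails --- this is exactly the $\mathcal{O}(\epsilon\mu^2)\Vert\vect v\Vert^2$ term you flag but do not actually control. If you instead couple $\epsilon\propto\mu^{-1}$, optimizing over the proportionality constant leaves the requirement $4CD\lesssim\lambda_{\min}(\mat Q)^2$, a condition on the directed graph that the designer cannot enforce and that fails whenever $\lambda_{\min}(\mat Q)$ is small relative to $\Vert\mathcal{L}_{22}\Vert$ (e.g.\ long directed chains). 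The paper avoids this obstruction by construction rather than by domination: it takes the first term of $V$ to be $\tfrac12\vect u^\top(\mat Q\otimes\mat I_p)\vect u$ (twice yours) and the cross-term coefficient to be exactly $\mu^{-1}$, whereupon
\begin{align*}
\vect u^\top(\mat Q\otimes\mat I_p)\vect v-\vect u^\top\big((\mathcal{L}_{22}^\top\mat\Gamma)\otimes\mat I_p\big)\vect v-\mu^{-1}\!\cdot\!\mu\,\vect u^\top\big((\mat\Gamma\mathcal{L}_{22})\otimes\mat I_p\big)\vect v=0
\end{align*}
identically (compare \eqref{eq:lyapunov_candidate_derivative_b}); the only surviving $\Vert\vect u\Vert\,\Vert\vect v\Vert$ term is the one inherited from \eqref{eq:e_norm_bound}, whose coefficient is proportional to $\beta_1+\beta_2$ and hence at the designer's disposal. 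Your proof needs this exact cancellation --- replace your first term and set $\epsilon=\mu^{-1}$ --- before the rest of the argument can go through.
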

\begin{proof}
The proof of this theorem is lengthy and complex due to the combination of the highly nonlinear Euler-Lagrange dynamics, the directed graph, and the event-based controller. To ensure the presentation of results is not disrupted, we move the proof to Appendix~\ref{app:sec_dir_IC}.
\end{proof}

\section{Adaptive, model-dependent controller on directed network}\label{sec:adaptive}
In this section, we propose an adaptive, distributed event-triggered controller to achieve leader-follower consensus for a directed network of Euler-Lagrange agents. This allows for uncertain parameters in each agent, e.g. the mass of a robotic manipulator arm, and includes the gravitational forces. 

\subsection{Main Result}
Before we present the main results, we introduce variables which allow us to rewrite the multi-agent system in a way which facilitates stability analysis. A lemma on stability is also provided. To begin, we introduce the following auxiliary variables $\vect{q}_{ri}$ and $\vect{s}_i$, which appeared in \cite{mei2013distributed, mei2012distributed} studying leader-follow problems in directed Euler-Lagrange networks. Define
\begin{align}
&\dot{\vect q}_{ri}(t)=-\alpha\sum_{j=0}^{n}a_{ij}(\vect{q}_{i}(t)-\vect{q}_{j}(t)),\\
&\vect{s}_{i}(t)=\dot{\vect q}_{i}(t)-\dot{\vect q}_{ri}(t)=\dot{\vect q}_{i}(t)+\alpha\sum_{j=0}^{n}a_{ij}(\vect{q}_{i}(t)-\vect{q}_{j}(t)), \notag\\
&~~~~~~~~~~~~~~~~~~~~~~~~~~~~~~~~~~~~~~~~~i=1,\ldots,n \label{eq:r3}
\end{align}
where $\alpha$ is a positive constant, $a_{ij}$ is the weighted $(i,j)$ entry of the adjacency matrix $\mathcal{\mat{A}}$ associated with the directed graph $\mathcal{G}$ that characterises the sensing flows among the $n$ followers. According to \textbf{Lemma} \ref{lem:lem5}, one can then verify that the compact form of \eqref{eq:r3} can be written as:
\begin{align}
\dot{\vect q}(t)=-\alpha(\mat{\mathcal L}_{22}\otimes \mat{I}_p)(\vect{q}(t)-\mathbf{1}_{n}\otimes \vect{q}_{0}) + \vect{s}(t)   \label{eq:r4}
\end{align}

The following lemma will later be used for stability analysis of the networked system.
\begin{lem}(From \cite{mei2012distributed}) \label{lem:l2}
Suppose that, for the system~\eqref{eq:r4}, the graph $\mathcal G$ contains a directed spanning tree with the leader as the root vertex. Then system \eqref{eq:r4} is input-to-state stable with respect to input $\vect{s}(t)$. If $\vect{s}(t)\rightarrow\vect{0}_p$ as $t\rightarrow\infty$, then $\dot{\vect q}_{i}(t)\rightarrow \vect{0}_p$ and $\vect{q}_{i}(t)\rightarrow \vect{q}_{0}$ as $t\to\infty$.
\end{lem}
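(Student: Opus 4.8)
The plan is to establish input-to-state stability (ISS) of the linear, time-invariant system~\eqref{eq:r4} with respect to the input $\vect s(t)$, and then extract the asymptotic convergence claim as a standard corollary of ISS combined with a vanishing input. First I would observe that, since $\mathcal G$ contains a directed spanning tree rooted at the leader and the leader has no incoming edges, \textbf{Lemma}~\ref{lem:lem5} applies: all eigenvalues of $\mathcal{L}_{22}$ have positive real parts, hence $-\mathcal{L}_{22}$ is Hurwitz, and consequently $-(\mathcal{L}_{22}\otimes\mat I_p)$ is Hurwitz as well (its spectrum is the set of products of eigenvalues of $-\mathcal L_{22}$ with $1$, counted with multiplicity). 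Writing $\vect{\tilde q}(t) = \vect q(t) - \mathbf 1_n\otimes\vect q_0$ and noting $\dot{\vect q}_0 = 0$, equation~\eqref{eq:r4} becomes the LTI system $\dot{\vect{\tilde q}} = -\alpha(\mathcal L_{22}\otimes\mat I_p)\vect{\tilde q} + \vect s(t)$ with a Hurwitz system matrix $-\alpha(\mathcal L_{22}\otimes\mat I_p)$.

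The key step is then to invoke the standard fact that an LTI system $\dot x = Ax + B u$ with $A$ Hurwitz is ISS with respect to $u$: using the Lyapunov equation, there exists $\mat P = \mat P^\top > 0$ solving $\mat P A + A^\top \mat P = -\mat I$ (here $A = -\alpha(\mathcal L_{22}\otimes\mat I_p)$), and $\mathcal V = \vect{\tilde q}^\top \mat P \vect{\tilde q}$ serves as an ISS-Lyapunov function, since $\dot{\mathcal V} = -\|\vect{\tilde q}\|^2 + 2\vect{\tilde q}^\top \mat P \vect s \le -\tfrac12\|\vect{\tilde q}\|^2$ whenever $\|\vect{\tilde q}\| \ge 4\|\mat P\|\,\|\vect s\|$. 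This yields the ISS estimate $\|\vect{\tilde q}(t)\| \le \gamma_1 e^{-\gamma_2 t}\|\vect{\tilde q}(0)\| + \gamma_3 \sup_{0\le\tau\le t}\|\vect s(\tau)\|$ for suitable positive constants, which is precisely input-to-state stability of~\eqref{eq:r4} with respect to $\vect s$. Since this part is entirely classical, I would likely just cite \cite{mei2012distributed} (as the lemma statement already does) or a standard nonlinear systems reference for the ISS-of-Hurwitz-LTI fact, rather than reprove it.

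For the second claim, suppose $\vect s(t)\to\vect 0_p$ as $t\to\infty$. Then by the converging-input converging-state property of ISS systems (or directly: split the Duhamel integral $\vect{\tilde q}(t) = e^{At}\vect{\tilde q}(0) + \int_0^t e^{A(t-\tau)}\vect s(\tau)\,d\tau$ at $t/2$, bounding the first piece by exponential decay and the second by $\sup_{\tau\ge t/2}\|\vect s(\tau)\|$ times $\int_0^\infty\|e^{As}\|\,ds < \infty$), we get $\vect{\tilde q}(t)\to\vect 0$, i.e. $\vect q_i(t)\to\vect q_0$ for every $i$. Finally, reading off $\dot{\vect q}_i$ from~\eqref{eq:r3}, $\dot{\vect q}_i(t) = -\alpha\sum_{j=0}^n a_{ij}(\vect q_i(t)-\vect q_j(t)) + \vect s_i(t)$, and both terms on the right tend to zero (the first because $\vect q_i - \vect q_j = (\vect q_i - \vect q_0) - (\vect q_j - \vect q_0)\to\vect 0$, the second by hypothesis), so $\dot{\vect q}_i(t)\to\vect 0_p$.

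The main obstacle, such as it is, is mostly bookkeeping rather than genuine difficulty: one must be slightly careful that $-(\mathcal L_{22}\otimes\mat I_p)$ is Hurwitz (the Kronecker structure and \textbf{Lemma}~\ref{lem:lem5} handle this cleanly), and that the ISS gain does not blow up — but since $\alpha > 0$ is fixed and $\mathcal L_{22}$ is a fixed matrix, all constants are finite and no semi-global or gain-design caveats are needed here. I expect the entire argument to be short, and indeed the cited reference \cite{mei2012distributed} already contains it.
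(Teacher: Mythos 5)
Your proof is correct. Note that the paper itself offers no proof of this lemma: it is imported verbatim from \cite{mei2012distributed}, with the remark that the argument is contained in the proof of Corollary~3.7 there, and that argument is essentially the one you give --- rewrite \eqref{eq:r4} in the error variable $\vect q - \mathbf 1_n\otimes\vect q_0$, use \textbf{Lemma}~\ref{lem:lem5} to conclude that $-\alpha(\mathcal L_{22}\otimes\mat I_p)$ is Hurwitz, invoke ISS of a Hurwitz LTI system with additive input, and then apply the converging-input/converging-state property together with the definition of $\vect s_i$ in \eqref{eq:r3} to get $\vect q_i\to\vect q_0$ and $\dot{\vect q}_i\to\vect 0_p$. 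Your Lyapunov-equation derivation of the ISS estimate and the Duhamel-split argument are both sound, so your writeup is in fact a complete, self-contained proof of a statement the paper only cites.
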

Note that the proof of the above lemma is part of the proof of \textbf{Corollary}~3.7 in \cite{mei2012distributed}.

From {P(6) and the definition of $\dot{q}_{ri}$, we obtain 
\begin{align}
\mat{M}_{i}(\vect{q}_{i})\ddot{\vect q}_{ri}+\mat{C}_{i}(\vect{q}_{i},\dot{\vect q}_{i})\dot{\vect q}_{ri}+\vect{g}_{i}(\vect{q}_{i})&=\mat{Y}_{i}(\vect{q}_{i},\dot{\vect q}_{i},\ddot{\vect q}_{ri},\dot{\vect q}_{ri})\vect{\Theta}_{i},\notag\\
i&=1,\ldots,n\label{eq:r12}
\end{align}
Note that $\vect{\Theta}_{i}$ is an unknown but constant vector for agent $i$. Let $\hat{\vect{\Theta}}_i(t)$ be the estimate of $\vect{\Theta}_{i}$ at time $t$. We update $\vect{\hat{\Theta}}_i(t)$ by the following adaptation law:
\begin{align}
\dot{\hat{\vect{\Theta}}}_{i}(t)=-\mat{\Lambda}_{i}\mat{Y}_{i}^\top(t)\vect{s}_{i}(t),~~~i=1,\ldots,n   \label{eq:r17}
\end{align}
where $\mat{\Lambda}_{i}$ is a symmetric positive-definite matrix.

The control algorithm is now proposed. Let the triggering time sequence of agent $i$ be $t_{0}^{i}, t_{1}^{i}, \ldots, t_{k}^{i},\ldots$ with \mbox{$t_0^i := 0$}. The event-triggered controller for follower agent $i$ is designed as:
\begin{align}
&~~~~~~~~~\vect{\tau}_{i}(t)=-\mat{K}_{i}\vect{s}_{i}(t_{k}^{i})+\mat{Y}_{i}(t_{k}^{i})\hat{\mat{\Theta}}_{i}(t_{k}^{i}),\\
&~~~~~~~~~~~~~~~~~~~~~~~~~~i=1,\ldots,n,~~t\in[t_k^i,t_{k+1}^i)      \label{eq:r9}
\end{align}
where $\mat{K}_{i}>0$ is a symmetric positive definite matrix. It is observed that the control torque remains constant in the time interval $[t_k^i, t_{k+1}^i)$, i.e. $\vect{\tau}_i(t)$ is a piecewise-constant function in time. From the definitions of $\vect{q}_{ri}$ and $\vect{s}_i$, calculations show that the system in \eqref{eq:el_dynamics_1} can be written as
\begin{align}
\mat{M}_{i}(\vect{q}_{i})\dot{\vect s}_{i}(t)+\mat{C}_{i}(\vect{q}_{i},\dot{\vect{q}}_{i})\vect{s}_{i}(t)&=-\mat{K}_{i}\vect{s}_{i}(t_{k}^{i})+\mat{Y}_{i}(t_{k}^{i})\hat{\vect\Theta}_{i}(t_{k}^{i})\notag\\
&~~~-\mat{Y}_{i}(t)\vect{\Theta}_{i}\label{eq:r2}
\end{align}
Before the trigger function is presented, we define two types of measurement errors:
\begin{align}
&\vect{e}_{i}(t)=\vect{s}_{i}(t_{k}^{i})-\vect{s}_{i}(t);\notag\\
&\vect{\varepsilon}_{i}(t)=\vect{Y}_{i}(t_{k}^{i})\hat{\vect{\Theta}}_{i}(t_{k}^{i})-\vect{Y}_{i}(t)\hat{\vect{\Theta}}_{i}(t);   \label{eq:r16}
\end{align}
The trigger function is proposed as follows:
\begin{align}\label{eq:trigger_function_3}
f_{i}(\vect{\varepsilon}_{i}(t),\vect{e}_{i}(t),\mu_{i}(t))&=\|\vect{\varepsilon}_{i}(t)\|+\lambda_{\max}(\mat{K}_{i})\|\vect{e}_{i}(t)\|\notag\\
                                  &~~~~-\frac{\gamma_i}{2}\lambda_{\min}(\mat{K}_{i})\|\vect{s}_{i}(t)\|-\omega_{i}(t)   
\end{align}
where $0<\gamma_i<1$, $\omega_{i}(t)=\sigma_i\sqrt{\lambda_{\min}(\mat{K}_i)}\exp(-\kappa_{i}t)$ with $\sigma_i,\kappa_i>0$. The $k^{th}$ event for agent $i$ is triggered as soon as the trigger condition $f_{i}(\vect{\varepsilon}_{i}(t),\vect{e}_{i}(t))=0$ is fulfilled at $t=t_k^i$. For $t\in[t_k^i, t_{k+1}^i)$, the control input is $\vect{\tau}_i(t) = \vect{\tau}_i(t_k^i)$; the control input is updated when the next event is triggered. Furthermore, every time an event is triggered, and in accordance with their definitions, the measurement errors $\vect{\varepsilon}_i(t)$ and $\vect{e}_i(t)$ are reset to be equal to zero. Thus $f_i(\vect{\epsilon}_i(t), \vect{e}_i(t),\omega_{i}(t)) \leq 0$ for all $t\geq 0$.

We now present our main result.
\begin{theorem}
Consider the multi-agent system \eqref{eq:el_dynamics_1} with control law \eqref{eq:r9}. If $\mathcal{G}$ contains a directed spanning tree with the leader as the root vertex (and thus with no incoming edges), then leader-follower consensus ($\|\vect{q}_i-\vect{q}_0\|\rightarrow 0$ and $\|\dot{\vect{q}}_i\|\rightarrow 0$, $i=1,\ldots,n$) is globally asymptotically achieved as $t\rightarrow\infty$ and no agent will exhibit Zeno behaviour.
\end{theorem}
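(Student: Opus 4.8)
The plan is to mirror the two-part structure used in the proof of \textbf{Theorem}~\ref{theorem_2}: first show that the sliding variables $\vect{s}_i(t)$ converge to zero through a Lyapunov-type argument combined with the Barbalat-type \textbf{Lemma}~\ref{lem:l1}, and then exclude Zeno behaviour by lower-bounding the inter-event times on any finite interval.

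For the convergence part I would take the Lyapunov candidate
\begin{align*}
V = \frac{1}{2}\sum_{i=1}^n \vect{s}_i^\top\mat{M}_i(\vect{q}_i)\vect{s}_i + \frac{1}{2}\sum_{i=1}^n\tilde{\vect{\Theta}}_i^\top\mat{\Lambda}_i^{-1}\tilde{\vect{\Theta}}_i,
\end{align*}
with $\tilde{\vect{\Theta}}_i := \hat{\vect{\Theta}}_i - \vect{\Theta}_i$. Rewriting the sampled control terms in \eqref{eq:r2} using the error variables \eqref{eq:r16}, namely $-\mat{K}_i\vect{s}_i(t_k^i) + \mat{Y}_i(t_k^i)\hat{\vect{\Theta}}_i(t_k^i) = -\mat{K}_i\vect{s}_i - \mat{K}_i\vect{e}_i + \mat{Y}_i\hat{\vect{\Theta}}_i + \vect{\varepsilon}_i$, the closed loop takes the form $\mat{M}_i\dot{\vect s}_i + \mat{C}_i\vect{s}_i = -\mat{K}_i\vect{s}_i - \mat{K}_i\vect{e}_i + \mat{Y}_i\tilde{\vect{\Theta}}_i + \vect{\varepsilon}_i$. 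Differentiating $V$ along trajectories, cancelling $\frac{1}{2}\vect{s}_i^\top\dot{\mat{M}}_i\vect{s}_i - \vect{s}_i^\top\mat{C}_i\vect{s}_i$ by Property~\ref{prty:EL_CM_skew}, and cancelling the $\mat{Y}_i\tilde{\vect{\Theta}}_i$ contribution by the adaptation law \eqref{eq:r17}, leaves $\dot V = \sum_i(-\vect{s}_i^\top\mat{K}_i\vect{s}_i - \vect{s}_i^\top\mat{K}_i\vect{e}_i + \vect{s}_i^\top\vect{\varepsilon}_i)$. Bounding $-\vect{s}_i^\top\mat{K}_i\vect{e}_i + \vect{s}_i^\top\vect{\varepsilon}_i \leq \|\vect{s}_i\|(\lambda_{\max}(\mat{K}_i)\|\vect{e}_i\| + \|\vect{\varepsilon}_i\|)$ and invoking the trigger condition $f_i \leq 0$ from \eqref{eq:trigger_function_3} gives $\dot V \leq -\sum_i(1-\gamma_i/2)\lambda_{\min}(\mat{K}_i)\|\vect{s}_i\|^2 + \sum_i\|\vect{s}_i\|\omega_i(t)$; a Young inequality on $\|\vect{s}_i\|\omega_i(t)$ absorbs half of the quadratic term (using $0 < \gamma_i < 1$) and leaves an $\mathcal{L}_1$ remainder since $\omega_i(t)^2$ decays exponentially. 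Integrating then shows $V$ bounded and $\vect{s} \in \mathcal{L}_2$, hence $\vect{s}_i, \tilde{\vect{\Theta}}_i \in \mathcal{L}_\infty$.

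To finish the convergence argument I would propagate boundedness: \textbf{Lemma}~\ref{lem:l2} applied to \eqref{eq:r4} with input $\vect{s}$ shows $\vect{q} - \mathbf{1}_n\otimes\vect{q}_0$ and $\dot{\vect{q}}$ are bounded; then $\mat{M}_i, \mat{C}_i$ and the regressor $\mat{Y}_i$ are bounded by Properties~\ref{prty:EL_M_bound}, \ref{prty:EL_C_bound} and smoothness (noting $\dot{\vect q}_{ri}, \ddot{\vect q}_{ri}$ are bounded), $\dot{\hat{\vect{\Theta}}}_i$ is bounded by \eqref{eq:r17}, and \eqref{eq:r2}, whose sampled quantities are bounded as samples of bounded signals, gives $\dot{\vect s}_i$ bounded. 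Thus $\vect{s}, \dot{\vect s} \in \mathcal{L}_\infty$ and $\vect{s} \in \mathcal{L}_2$, so \textbf{Lemma}~\ref{lem:l1} yields $\vect{s}(t) \to \vect{0}$, and \textbf{Lemma}~\ref{lem:l2} then delivers $\dot{\vect{q}}_i \to \vect{0}$ and $\vect{q}_i \to \vect{q}_0$, which is leader-follower consensus.

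For the Zeno part I would argue exactly as in the second part of the proof of \textbf{Theorem}~\ref{theorem_2}: on any finite interval $[0,b]$ all signals — including $\ddot{\vect{q}}_i = \dot{\vect s}_i + \ddot{\vect q}_{ri}$, and hence the third derivative of $\vect q_{ri}$ and $\dot{\mat{Y}}_i$ — remain bounded, so the time derivatives of $\|\vect{e}_i(t)\|$ and $\|\vect{\varepsilon}_i(t)\|$ are uniformly bounded by some $B > 0$; since $\vect{e}_i$ and $\vect{\varepsilon}_i$ reset to zero at each event, the left-hand side of $f_i$ grows at most linearly, $\|\vect{\varepsilon}_i(t)\| + \lambda_{\max}(\mat{K}_i)\|\vect{e}_i(t)\| \leq B(t - t_k^i)$, while the right-hand side is at least $\omega_i(t) = \sigma_i\sqrt{\lambda_{\min}(\mat{K}_i)}\exp(-\kappa_i t) > 0$. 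Splitting into the cases $\|\vect{s}_i(t_{k+1}^i)\| \neq 0$ and $\|\vect{s}_i(t_{k+1}^i)\| = 0$ as before, the worst case yields an inter-event lower bound $\xi$ solving $B\xi = \omega_i(t_k^i + \xi)$, which is strictly positive for every finite time, so no Zeno behaviour occurs. The main obstacle I anticipate is making the boundedness-propagation chain rigorous and free of circular reasoning — $\dot{\vect{q}}$ must be bounded via the ISS property of \eqref{eq:r4} \emph{before} appealing to \eqref{eq:r2} for $\dot{\vect s}_i$ — and verifying that $\mat{Y}_i$ and its time derivative stay bounded, since the latter is needed both for \textbf{Lemma}~\ref{lem:l1} and for the Zeno estimate; the remaining steps are a standard adaptive-control Lyapunov computation adapted to the piecewise-constant controller.
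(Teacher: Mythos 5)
Your proposal is correct and follows essentially the same route as the paper: the identical Lyapunov function $\frac{1}{2}\sum_i \vect{s}_i^\top\mat{M}_i\vect{s}_i + \frac{1}{2}\sum_i\tilde{\vect{\Theta}}_i^\top\mat{\Lambda}_i^{-1}\tilde{\vect{\Theta}}_i$, the same cancellations via Property~\ref{prty:EL_CM_skew} and the adaptation law \eqref{eq:r17}, the same use of the trigger condition plus Young's inequality to obtain $\dot V\leq\sum_i(\gamma_i-1)\lambda_{\min}(\mat{K}_i)\|\vect{s}_i\|^2$ up to an exponentially decaying remainder, and the same boundedness chain through \textbf{Lemma}~\ref{lem:l2} and \textbf{Lemma}~\ref{lem:l1}. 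Your Zeno argument is also the one the paper intends (it simply defers to Part~2 of the proof of \textbf{Theorem}~\ref{theorem_2}), and your sign convention for $\tilde{\vect{\Theta}}_i$ is immaterial since the cross terms still cancel.
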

\begin{proof}
In this part, we focus on the stability analysis of the system \eqref{eq:r2}. The proof on the exclusion of Zeno behaviour is omitted since the idea is the same with that in Part 2) of the proof of \textbf{Theorem} \ref{theorem_2}. Notice that \eqref{eq:r2} is non-autonomous in the sense that it is not self-contained ($\mat{M}_i, \mat{C}_i$ depend on $\vect{q}_i$ and $\vect{q}_i, \dot{\vect{q}}_i$ respectively). However, study of a Lyapunov-like function shows leader-follower consensus is achieved.

We make use of abuse of notation by omitting the argument of time $t$ for time-dependent functions when appropriate, e.g. $\vect{q}_i$ denotes $\vect{q}_i(t)$.

Consider the following Lyapunov-like function
\begin{align}
V=\frac{1}{2}\sum_{i=1}^{N}\vect{s}_{i}^{\top}\mat{M}_{i}(\vect{q}_{i})\vect{s}_{i}+\frac{1}{2}\sum_{i=1}^{N}\vect{\tilde{\Theta}}_{i}^{\top}\mat{\Lambda}_{i}^{-1}\vect{\tilde{\Theta}}_{i}   \label{eq:r7}
\end{align}
where
\begin{align}
\vect{\tilde{\Theta}}_{i}=\vect{\Theta}_{i}-\vect{\hat{\Theta}}_{i} \label{eq:r11}
\end{align}
The derivative of $V$ along the solution of \eqref{eq:r2} is
\begin{align*}
\dot{V}&=\frac{1}{2}\sum_{i=1}^{N}\vect{s}_{i}^{\top}\dot{\mat M}_{i}(\vect{q}_{i})\vect{s}_{i}+\sum_{i=1}^{N}\vect{s}_{i}^{\top}\mat{M}_{i}(\vect{q}_{i})\dot{\vect{s}}_{i}+\sum_{i=1}^{N}\vect{\tilde{\Theta}}^{\top}_{i}\mat{\Lambda}_{i}^{-1}\dot{\vect{\tilde{\Theta}}}_{i}\\
       &=\sum_{i=1}^{N}\vect{s}_{i}^{\top}\left(\frac{1}{2}\dot{\mat M}_{i}(\vect{q}_{i})-\mat{C}_{i}(\vect{q}_{i},\dot{\vect q}_{i})\right)\vect{s}_{i}-\sum_{i=1}^{N}\vect{s}_{i}^{\top}\mat{K}_{i}\vect{s}_{i}(t_{k}^{i})\\
       &~~~+\sum_{i=1}^{N}\vect{s}_{i}^{\top}\mat{Y}_{i}(t_{k}^{i})\vect{\hat{\Theta}}_{i}(t_{k}^{i})-\sum_{i=1}^{N}\vect{s}_{i}^{\top}\mat{Y}_{i}\vect{\Theta}_{i}+\sum_{i=1}^{N}\vect{\tilde{\Theta}}^{\top}_{i}\mat{Y}^{\top}_{i}\vect{s}_{i}\\
\end{align*}
From P(4) we have $\frac{1}{2}\dot{\mat M}_{i}(\vect{q}_{i})-\mat{C}_{i}(\vect{q}_{i})$ is skew-symmetric and with $\vect{\Theta}_i=\vect{\tilde{\Theta}}_i+\vect{\hat{\Theta}}_i$, we obtain
\begin{align*}
\dot{V}&=-\sum_{i=1}^{N}\vect{s}_{i}^{\top}\mat{K}_{i}\vect{s}_{i}(t_{k}^{i})+\sum_{i=1}^{N}\vect{s}_{i}^{\top}\mat{Y}_{i}(t_{k}^{i})\vect{\hat{\Theta}}_{i}(t_{k}^{i})\\
       &~~~-\sum_{i=1}^{N}\vect{s}_{i}^{\top}\mat{Y}_{i}(\vect{\tilde{\Theta}}_{i}+\vect{\hat{\Theta}}_{i})+\sum_{i=1}^{N}\vect{\tilde{\Theta}}^{\top}_{i}\mat{Y}^{\top}_{i}\vect{s}_{i}\\
\end{align*}
By recalling the definition of $\vect{e}_i$ and $\vect{\varepsilon}_i$ in \eqref{eq:r16}, we have
\begin{align*}
\dot{V}
       &=-\sum_{i=1}^{N}\vect{s}_{i}^{\top}\mat{K}_{i}\vect{s}_{i}-\sum_{i=1}^{N}\vect{s}_{i}^{\top}\mat{K}_{i}\vect{e}_{i}+\sum_{i=1}^{N}\vect{s}_{i}^{\top}\vect{\varepsilon}_{i}\\
\end{align*}
Since $\mat{K}_i$ is a symmetric positive definite matrix, the upper bound of $\dot{V}$ is expressed as
\begin{align*}
\dot{V}&\leq -\sum_{i=1}^{N}\lambda_{\min}(\mat{K}_{i})\|\vect{s}_{i}\|^{2}+\sum_{i=1}^{N}\lambda_{\max}(\mat{K}_{i})\|\vect{s}_{i}\|\|\vect{e}_{i}\|\\
       &~~~+\sum_{i=1}^{N}\|\vect{s}_{i}\|\|\vect{\varepsilon}_{i}\|
\end{align*}
Note that the trigger condition $f_{i}(\vect{\varepsilon}_{i}(t),\vect{e}_{i}(t),\omega_i(t))=0$ guarantees that $\|\vect{\varepsilon}_{i}\|+\lambda_{\max}(\mat{K}_{i})\|\vect{e}_{i}\|\leq\frac{\gamma_i}{2}\lambda_{\min}(\mat{K}_{i})\|\vect{s}_{i}\|+\omega_{i}(t)$ holds throughout the evolution of system \eqref{eq:r2}. By further introducing the definition of $\omega_{i}(t)$ in \eqref{eq:trigger_function_3}, we obtain
\begin{align*}
\dot{V} &\leq -\sum_{i=1}^{N}\lambda_{\min}(\mat{K}_{i})\|\vect{s}_{i}\|^{2} + \sum_{i=1}^{N}\frac{\gamma_i}{2}\lambda_{\min}(\mat{K}_{i})\|\vect{s}_{i}\|^{2}\\
        &~~~+ \sum_{i=1}^{N}\sqrt{\lambda_{\min}(\mat{K}_i)}\|\vect{s}_{i}\|\sigma_i\exp(-\kappa_{i}t)\\
\end{align*}
Because there holds $|xy|\leq\frac{\gamma_i}{2}x^2+\frac{1}{2\gamma_i}y^2, \forall x,y\in\mathbb{R}$, for $0 < \gamma_i < 1$, analysis of the right hand side of the above inequality implies that $\dot V$ can be further upper bounded as
\begin{align}
\dot{V} &\leq \sum_{i=1}^{N}(\gamma_i-1)\lambda_{\min}(K_{i})\|\vect{s}_{i}\|^{2} + \sum_{i=1}^{N}\frac{\sigma_i^2}{2\gamma_i}\exp(-2\kappa_i t) \label{eq:r6}
\end{align}
Integrating both sides of \eqref{eq:r6} for any $t>0$ yields:
\begin{align}
&V + \sum_{i=1}^{N}(1-\gamma_i)\lambda_{\min}(\mat{K}_{i})\int_0^t\|\vect{s}_i(\tau)\|^2d\tau\notag\\
&~~~~~~~~~~~~~~~~~~~~~~~~\leq V(0)+ \sum_{i=1}^{N}\frac{\sigma_i^2}{4\gamma_i\kappa_i}\label{eq:r8}
\end{align}
which implies that $V$ is bounded. Since $V$ is bounded, according to \eqref{eq:r7}, both $s_i$ and $\vect{\tilde{\Theta}}_i(t)$, for all $i\in \{1, ..., n\}$, are bounded. Now we return to \eqref{eq:r12} and obtain that
\begin{align*}
\|\mat{Y}_{i}\vect{\Theta}_{i}\|\leq\|\mat{M}_{i}\|\|\ddot{\vect{q}}_{ri}\|+\|\mat{C}_{i}\|\|\dot{\vect{q}}_{ri}\|+\|\vect{g}_{i}\|
\end{align*}
By recalling that the linear system \eqref{eq:r4} is input-to-state stable and the fact that $s$ is bounded, we conclude that $\vect{q}_i$ and $\dot{\vect{q}}_i$ are both bounded. Because $\vect{q}_i$ and $\dot{\vect{q}}_i$ are bounded then, from their definitions, so are $\dot{\vect q}_{ri}$ and $\ddot{\vect q}_{ri}$. Then from P(2), P(3) and P(5), the assumed properties of Euler-Lagrange equations, we have that $\|\mat{Y}_i\|$ is upper bounded by a positive value. From the above conclusions, it is straightforward to see that the right hand side of \eqref{eq:r2}, $\mat{M}_i$, $\mat{C}_i$ and $\vect{s}_i$ are all bounded. We thus obtain that $\dot{\vect s}_i$ is bounded. From this, it is obvious that $\vect{s}_i, \dot{\vect{s}}_i\in L_\infty$. Turning to \eqref{eq:r8}, it follows that
\begin{align}
\sum_{i=1}^{N}(1-\gamma_i)\lambda_{\min}(\mat{K}_{i})\int_0^t\|\vect{s}_i(\tau)\|^2d\tau \leq V(0) + \sum_{i=1}^{N}\frac{\sigma_i^2}{4\gamma_i\kappa_i}
\end{align}
which indicates that $\int_0^t\|s_i(\tau)\|^2d\tau$ is bounded and thus $s_i\in \mathcal{L}_2$. By applying \textbf{Lemma} \ref{lem:l1}, we have that $s_i\rightarrow \vect{0}_p$ as $t\rightarrow\infty$. Then by applying \textbf{Lemma} \ref{lem:l2}, we conclude that $\vect{q}_i-\vect{q}_0\rightarrow\vect{0}_p$ and $\dot{\vect{q}}_i\rightarrow\vect{0}_p$ as $t\rightarrow\infty$. The leader-follower objective is globally asymptotically achieved.

\end{proof}

\section{Simulations}\label{sec:simulation}
In this subsection, we will provide three simulations to respectively demonstrate the performance of the three proposed controllers in this paper for application to industrial manipulators (See Fig. \ref{fig:manipulator}). Although the effectiveness of controller \eqref{eq:control_input_1} is verified in subsection \ref{sec3-b}, the applied system is simple one-arm manipulators. In this subsection we assume that all two-link manipulators share the same dynamic models and parameters. The Euler-Lagrange equations for the $i^{\text{th}}$ two-link manipulator is: 
\begin{equation*}
\begin{bmatrix} 
M^{11}_i & M^{12}_i\\
M^{21}_i & M^{22}_i\\
\end{bmatrix}
\begin{bmatrix}
\ddot{q}^{(1)}_{i}\\
\ddot{q}^{(2)}_{i}\\
\end{bmatrix}
+\begin{bmatrix}
C^{11}_i & C^{12}_i\\
C^{21}_i & C^{22}_i\\
\end{bmatrix}
\begin{bmatrix}
\dot{q}^{(1)}_{i}\\
\dot{q}^{(2)}_{i}\\
\end{bmatrix}
+\begin{bmatrix}
g^{(1)}_{i}\\
g^{(2)}_{i}\\
\end{bmatrix}
=\begin{bmatrix}
\tau^{(1)}_{i}\\
\tau^{(2)}_{i}\\
\end{bmatrix}
\end{equation*}
The elements in $M_i$, $C_i$ matrices and $g_i$ vector are given below:
\begin{equation*}
\begin{split}
M^{11}_i &= (m_1+m_2)d_{1}^{2} + m_2d_{2}^{2} + 2m_2d_1d_2\cos(q_{i}^{(2)})\\
M^{12}_i &= M^{21}_i = m_2(d_{2}^{2}+d_1d_2\cos(q_{i}^{(2)}))\\
M^{22}_i &= m_2d_{2}^{2}\\
C^{11}_i &= -m_2d_1d_2\sin(q_{i}^{(2)})\dot{q}_{i}^{(2)}\\
C^{12}_i &= -m_2d_1d_2\sin(q_{i}^{(2)})\dot{q}_{i}^{(2)} - m_2d_1d_2\sin(q_{i}^{(2)})\dot{q}_{i}^{(1)}\\
C^{21}_i &= m_2d_1d_2\sin(q_{i}^{(2)})\dot{q}_{i}^{(1)}\\
C^{22}_i &= 0\\
g^{(1)}_{i} &= (m_1+m_2)gd_1\sin(q_{i}^{(1)}) + m_2gd_2\sin(q_{i}^{(1)}+q_{i}^{(2)})\\
g^{(2)}_{i} &= m_2gd_2\sin(q_{i}^{(1)}+q_{i}^{(2)})\\
\end{split}
\end{equation*}
where $g$ is the acceleration due to gravity, $d_1$ and $d_2$ are lengths of the $1^{st}$ and $2^{nd}$ links of the manipulator, respectively; $m_1$ and $m_2$ are mass of the $1^{st}$ and $2^{nd}$ of the manipulator. The physical parameters of each manipulator are selected as $g=9.8\,\mathrm{m/s^2}$, $d_1=1.5\,\mathrm{m}$, $d_2=1\,\mathrm{m}$, $m_1=1\,\mathrm{kg}$, $m_2=2\,\mathrm{kg}$. The initial states of each manipulator are shown in Table I. Note that in simulations 1 and 2, we assume that $g^{(1)}_{i}, g^{(2)}_{i}=0$.

\newcounter{mytempeqncnt}
\begin{figure*}[!t]
\normalsize
\begin{align*}
&\Theta_i=
\begin{bmatrix}
(m_1+m_2)d_{1}^{2} + m_2d_{2}^{2} & m_2d_1d_2 & m_2d_{2}^{2} & (m_1+m_2)gd_1 &m_2gd_2
\end{bmatrix}^T\\
&Y_i=
\begin{bmatrix}
x_1 & 0\\
2x_1\cos(q_{i}^{(2)})+x_2\cos(q_{i}^{(2)})-y_1\sin(q_{i}^{(2)})\dot{q}_{i}^{(2)}-y_2\sin(q_{i}^{(2)})\dot{q}_{i}^{(2)}-y_2\sin(q_{i}^{(2)})\dot{q}_{i}^{(1)} & x_1\cos(q_{i}^{(2)})+y_1\sin(q_{i}^{(2)})\dot{q}_{i}^{(1)}\\
x_2 & x_1+x_2\\
\sin(q_{i}^{(1)}) & 0\\
\sin(q_{i}^{(1)}+q_{i}^{(2)}) & \sin(q_{i}^{(1)}+q_{i}^{(2)})\\
\end{bmatrix}^T
\end{align*}
\hrulefill
\vspace*{4pt}
\end{figure*}

\begin{figure}[htb]
\centering{
\resizebox{55mm}{!}{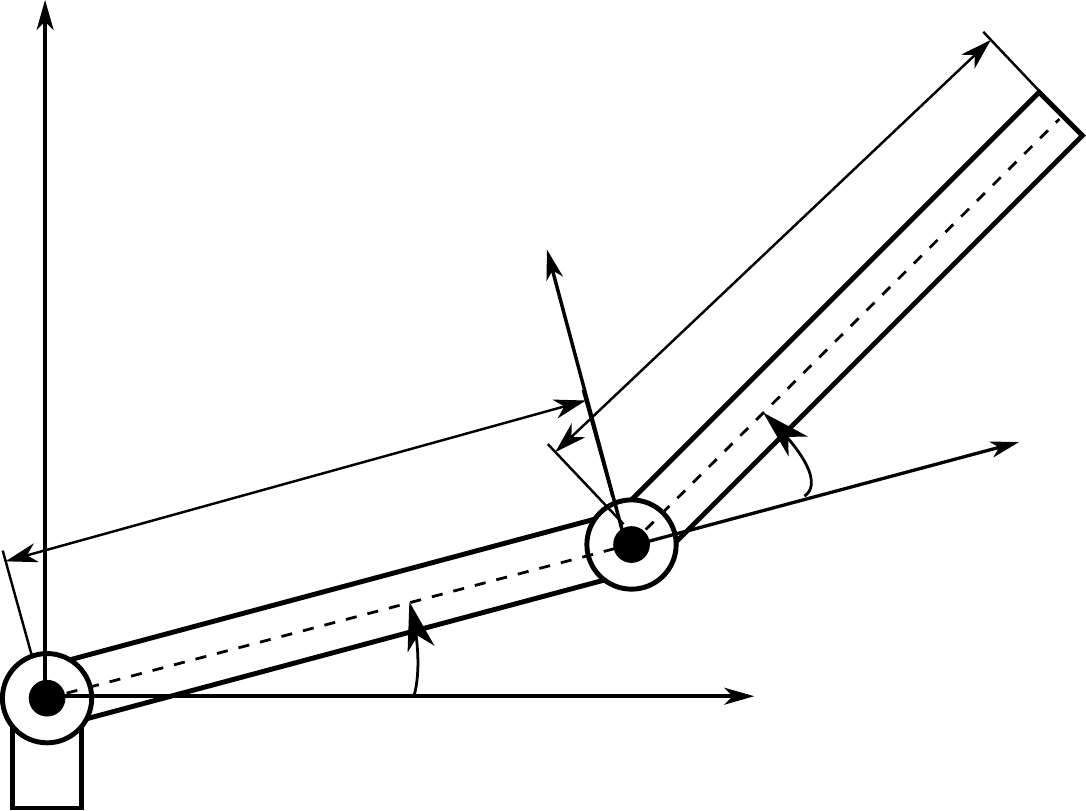}
\caption{Two-link manipulator, generalized coordinates $\vect{q}=[q^1,q^2]^\top$}.  
\label{fig:manipulator}
}
\end{figure}

\begin{table}
\caption{Agents' initial states used in simulations}
\label{tab:agent_param}
\begin{center}
{\renewcommand{\arraystretch}{1.1}
\begin{tabular}{c c c c c}
• & $q_i^{(1)}(0)$ & $q_i^{(1)}(0)$ & $\dot{q}_i^{(1)}(0)$ & $\dot{q}_i^{(1)}(0)$  \\\hline
Agent 0 & $\pi/6$ & $\pi/3$ & 0.0 & 0.0  \\\hline
Agent 1 & $\pi/5$ & $\pi/6$ & 0.8 & 0.2  \\\hline
Agent 2 & $\pi/6$ & $\pi/4$ & -0.2 & 0.3  \\\hline
Agent 3 & $\pi/9$ & $\pi/6$ & 0.6 & -0.4 \\\hline
Agent 4 & $\pi/8$ & $\pi/4$ & 0.5 & 0.1 \\\hline
Agent 5 & $\pi/9$ & $\pi/6$ & 0.1 & 0.0 \\\hline
\end{tabular}
}
\end{center}
\end{table}

\begin{simulation} \label{sim:1}
This simulation will demonstrate the performance of controller \eqref{eq:control_input_1} under trigger condition \eqref{eq:trigger_function_1}. The sensing graph $\mathcal{G}$ associated with the five follower manipulators and the leader manipulator has the following weighted Laplacian
\begin{equation*}
\mathcal{L}=
\begin{bmatrix}
0 & 0 & 0 & 0 & 0 & 0\\
-1 & 3.9 & -1.55 & 0 & -1.35 & 0\\
-1 & -1.55 & 6.4 & -2.1 & -1.75 & 0\\
0 & 0 & -2.1 & 7.35 & -2.35 & -2.9\\
0 & -1.35 & -1.75 & -2.35 & 6.7 & -1.25\\
0 & 0 & 0 & -2.9 & -1.25 & 4.15\\
\end{bmatrix}
\end{equation*}
The initial value of the variable-gain scalar $\mu_(t)$ is chosen as $\mu_i(0)=0$. The exponential function used in trigger function \eqref{eq:trigger_function_1} is selected as $\omega_i(t)=1.8*\exp(-0.2*t)$. The performance of the controller is demonstrated in Fig. \ref{fig:controller_1}.
\end{simulation}

\begin{simulation} \label{sim:2}
The directed sensing graph $\mathcal{G}$ associated with the five follower manipulators and the leader manipulator has the following weighted Laplacian
\begin{equation}\label{eq:directed_laplacian}
\mathcal{L}=
\begin{bmatrix}
0 & 0 & 0 & 0 & 0 & 0\\
-1 & 2.55 & -1.55 & 0 & 0 & 0\\
-1 & 0 & 3.1 & -2.1 & 0 & 0\\
0 & 0 & -2.1 & 2.1 & 0 & 0\\
0 & -1.35 & -1.75 & -2.35 & 5.45 & 0\\
0 & 0 & 0 & -2.9 & -1.25 & 4.15\\
\end{bmatrix}
\end{equation}
and it contains a directed spanning tree rooted at $v_0$. There are no incoming edges to $v_0$. The gain $\mu$ used in controller \eqref{eq:agent_control_law_direct} is selected as $4$. The parameters $\beta_1$ and $\beta_2$ introduced in trigger function \eqref{eq:agent_trigger_direct} are both chosen as $0.6$. The performance of the proposed control algorithm is shown in Fig. \ref{fig:controller_2}. 
\end{simulation}

\begin{simulation} \label{sim:3}
The uncertain parameter vector $\Theta_i$ for each manipulator and the regression matrix are given at the top of the next page. Note that $\Theta_i$ is unknown for manipulator $i$. The Laplacian matrix is also chosen as \eqref{eq:directed_laplacian}. The control gain matrix (in \eqref{eq:r9}) for all follower manipulators is chosen as $K_i=\emph{I}_{2}$. The parameter $\gamma_i$ required in the trigger function \eqref{eq:trigger_function_3} is selected as $0.6$ for manipulator $i=1,\ldots,5$. Lastly, $\mu_i(t)=5\exp(-0.6t)$ is used in the trigger functions for all manipulators in the follower network.
The performance of controller \eqref{eq:r9} with trigger function \eqref{eq:trigger_function_3} is presented in Fig. \ref{fig:controller_3}.
\end{simulation}

\begin{figure*}[!ht]
\begin{center}
\includegraphics[width=1.05\linewidth]{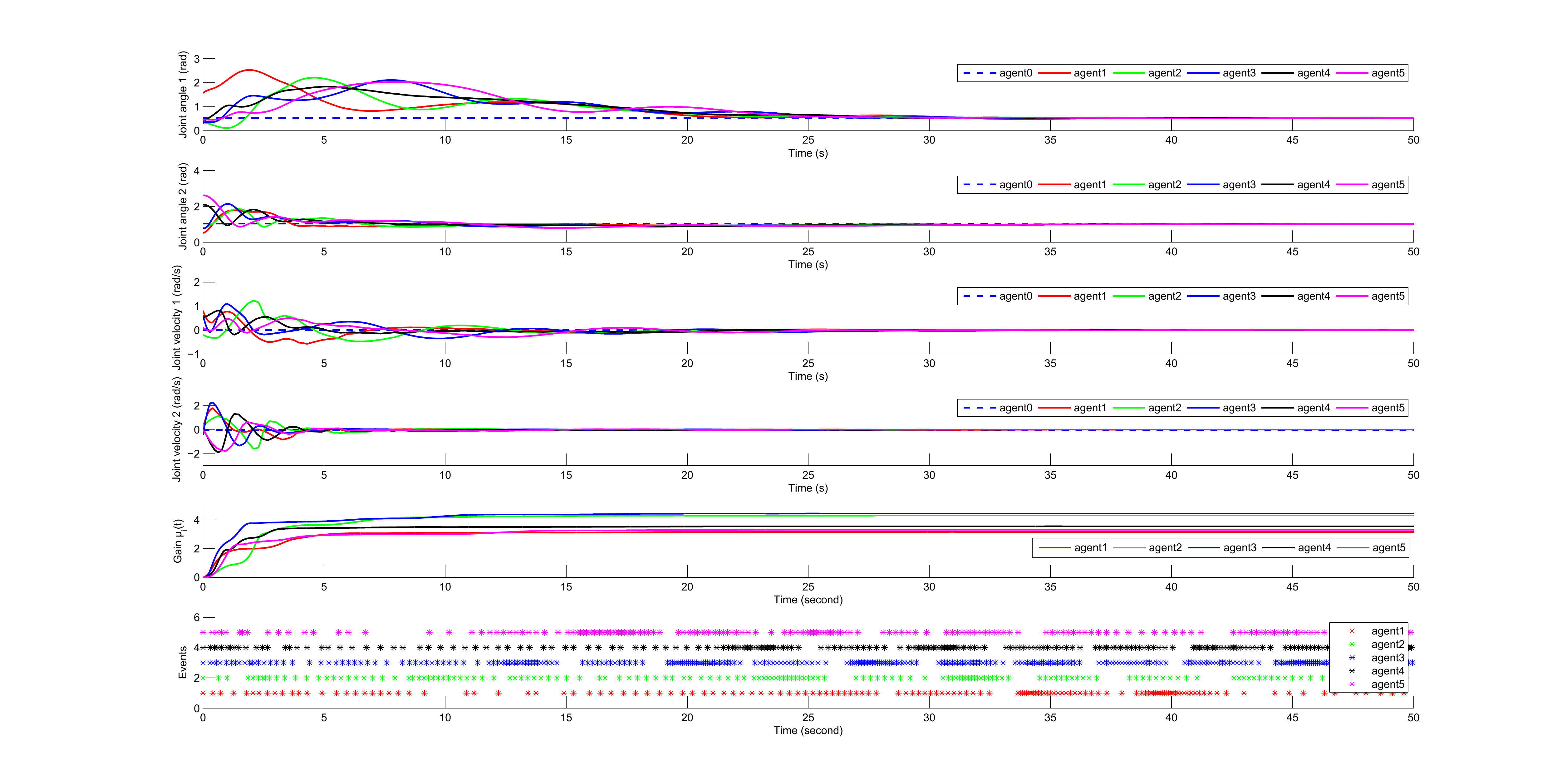}
\caption{Simulation results for controller \eqref{eq:control_input_1} under trigger function \eqref{eq:trigger_function_1}. From top to bottom: the plots the generalized coordinates; the plots of generalised velocities of all the follower manipulators; the plot of variable gain $\mu_i(t)$; the plot of trigger events}
\label{fig:controller_1}
\end{center}
\end{figure*}

\begin{figure*}[!ht]
\begin{center}
\includegraphics[width=1.05\linewidth]{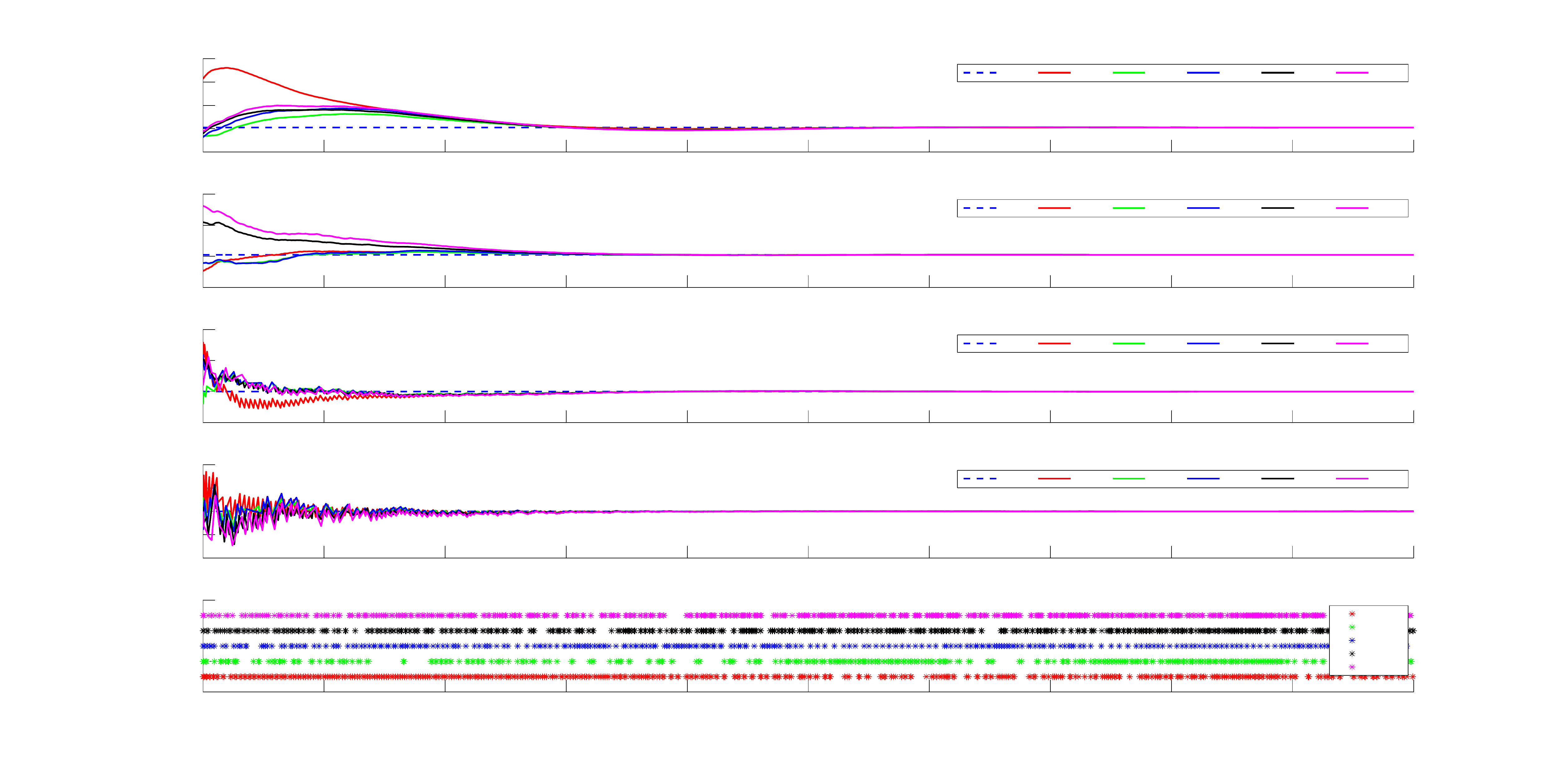}
\caption{Simulation results for controller \eqref{eq:agent_control_law_direct} under trigger function \eqref{eq:agent_trigger_direct}. From top to bottom: the plots the generalized coordinates; the plots of generalised velocities of all the follower manipulators; the plot of trigger events}
\label{fig:controller_2}
\end{center}
\end{figure*}

\begin{figure*}[!ht]
\begin{center}
\includegraphics[width=1.05\linewidth]{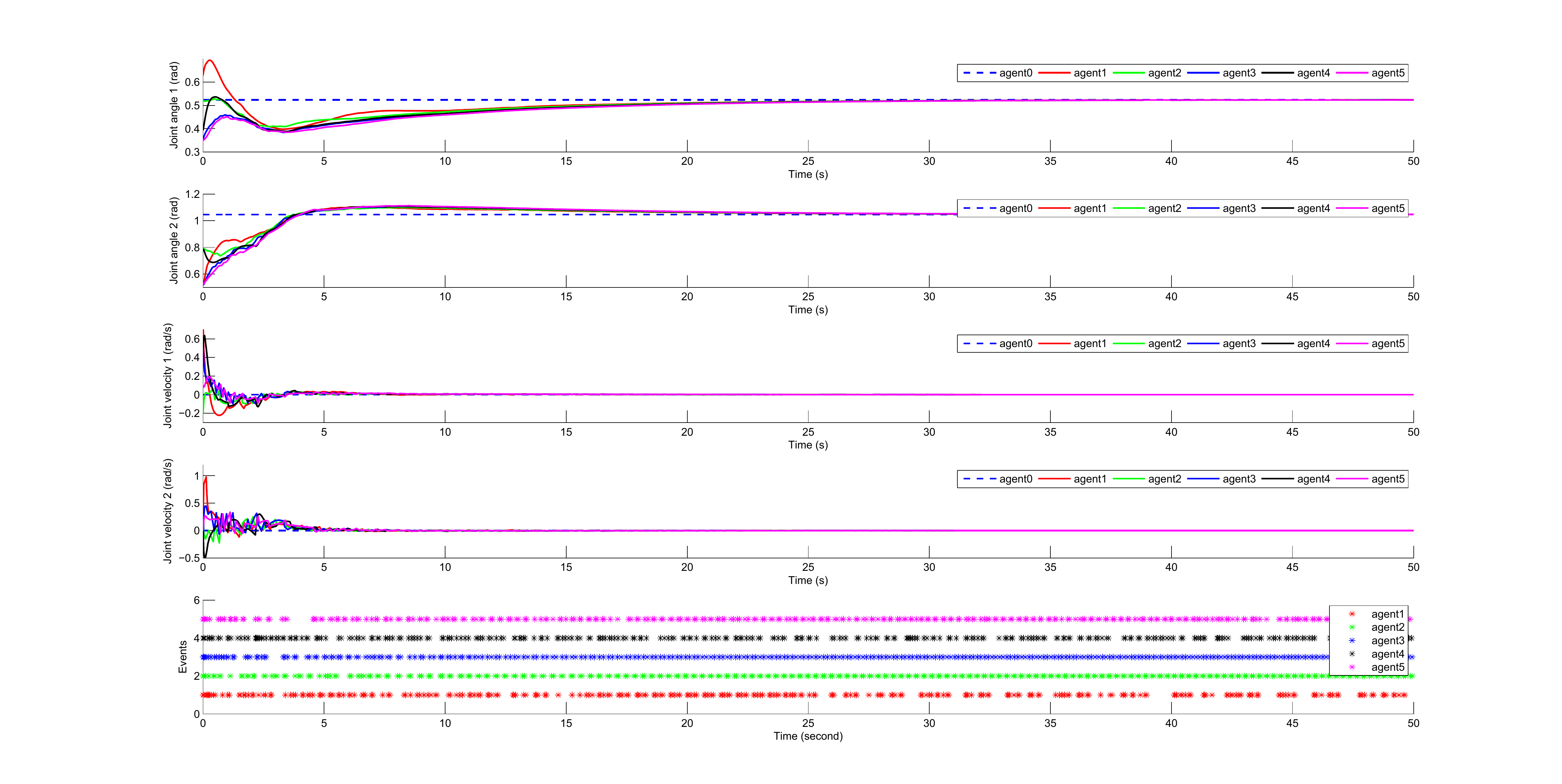}
\caption{Simulation results for controller \eqref{eq:r9} under trigger function \eqref{eq:trigger_function_3}. From top to bottom: the plots the generalized coordinates; the plots of generalised velocities of all the follower manipulators; the plot of trigger events}
\label{fig:controller_3}
\end{center}
\end{figure*}

\section{Conclusions}\label{sec:conclusion}
This paper proposed and showed the stability of three different algorithms for achieving leader-follower consensus for a network of Euler-Lagrange agents. Each algorithm is suited for a different scenario and have their advantages and disadvantages, and can be chosen depending on the problem requirements. For each algorithm, we propose a mixed trigger function containing an error term (a fundamental part of event-based control), a state-dependent term (to prevent frequent/dense triggering of events as time goes to infinity), and an exponentially decaying offset (to guarantee exclusion of Zeno behaviour in finite time intervals). The effectiveness of such a mixed trigger function is extensively explained via simulations. Additional simulations verify the effectiveness of each algorithm. Future work includes relaxation of the continuous sensing requirement to allow for event-based sensing. One possible approach is via self-triggered controllers, but may be difficult due to the complex Euler-Lagrange dynamics. A second key future work is to obtain a constant lower bound on the inter-event times (currently the lower bound decreases as time increases). This might be achieved through additional analysis or modification of the offset function.

\section*{Acknowledgements}
The authors would like to acknowledge Brian D.O. Anderson for his invaluable discussions on stability of non-autonomous systems.

\bibliographystyle{ieeetr}
\bibliography{bib_event_lagrange,MYE_ANU}

\begin{thebibliography}{10}

\bibitem{cao2013}
Y.~Cao, W.~Yu, W.~Ren, and G.~Chen, ``{An Overview of Recent Progress in the
  Study of Distributed Multi-Agent Coordination},'' {\em IEEE Transactions on
  Industrial Informatics}, vol.~9, pp.~427--438, Feb 2013.

\bibitem{ren2008distributed}
W.~Ren and R.~W. Beard, {\em Distributed consensus in multi-vehicle cooperative
  control}.
\newblock Springer, 2008.

\bibitem{kelly2006control}
R.~Kelly, V.~S. Davila, and J.~A.~L. Perez, {\em Control of robot manipulators
  in joint space}.
\newblock Springer Science \& Business Media, 2006.

\bibitem{ortega2013passivity_book}
R.~Ortega, J.~A.~L. Perez, P.~J. Nicklasson, and H.~Sira-Ramirez, {\em
  Passivity-based Control of Euler-Lagrange systems: Mechanical, Electrical and
  Electromechanical Applications}.
\newblock Springer Science \& Business Media, 2013.

\bibitem{ren2011distributed_book}
W.~Ren and Y.~Cao, {\em Distributed Coordination of Multi-agent Networks:
  Emergent Problems, Models and Issues}.
\newblock Springer London, 2011.

\bibitem{ye2015ELrendezvous}
M.~Ye, C.~Yu, and B.~D.~O. Anderson, ``{Model-Independent Rendezvous of
  Euler-Lagrange Agents on Directed Networks},'' in {\em Proceedings of IEEE
  54th Annual Conference on Decision and Control, Osaka, Japan},
  pp.~3499--3505, 2015.

\bibitem{mei2013distributed}
J.~Mei, W.~Ren, J.~Chen, and G.~Ma, ``Distributed adaptive coordination for
  multiple lagrangian systems under a directed graph without using
  neighborsâ€™ velocity information,'' {\em Automatica}, vol.~49, no.~6,
  pp.~1723--1731, 2013.

\bibitem{qin2016recent}
J.~Qin, Q.~Ma, Y.~Shi, and L.~Wang, ``Recent advances in consensus of
  multi-agent systems: A brief survey,'' {\em IEEE Transactions on Industrial
  Electronics}, 2016.

\bibitem{zhang2001zeno}
J.~Zhang, K.~H. Johansson, J.~Lygeros, and S.~Sastry, ``Zeno hybrid systems,''
  {\em International Journal of Robust and Nonlinear Control}, vol.~11, no.~5,
  pp.~435--451, 2001.

\bibitem{ames2006stability}
A.~D. Ames, P.~Tabuada, and S.~Sastry, ``On the stability of zeno equilibria,''
  in {\em International Workshop on Hybrid Systems: Computation and Control},
  pp.~34--48, Springer, 2006.

\bibitem{dimarogonas2012distributed}
D.~V. Dimarogonas, E.~Frazzoli, and K.~H. Johansson, ``Distributed
  event-triggered control for multi-agent systems,'' {\em Automatic Control,
  IEEE Transactions on}, vol.~57, no.~5, pp.~1291--1297, 2012.

\bibitem{seyboth2013event}
G.~S. Seyboth, D.~V. Dimarogonas, and K.~H. Johansson, ``Event-based
  broadcasting for multi-agent average consensus,'' {\em Automatica}, vol.~49,
  no.~1, pp.~245--252, 2013.

\bibitem{fan2013distributed}
Y.~Fan, G.~Feng, Y.~Wang, and C.~Song, ``Distributed event-triggered control of
  multi-agent systems with combinational measurements,'' {\em Automatica},
  vol.~49, no.~2, pp.~671--675, 2013.

\bibitem{Fan015ZenoFree}
Y.~Fan, L.~Liu, G.~Feng, and L.~Wang, ``Self-triggered consensus for
  multi-agent systemswith zeno-free triggers,'' {\em Automatic Control, IEEE
  Transactions on}, vol.~60, no.~10, pp.~2779 -- 2784, 2015.

\bibitem{liu2016QuantizedEvent}
Q.~Liu, J.~Qin, and C.~Yu, ``Event-based multi-agent cooperative control with
  quantized relative state measurements,'' in {\em Decision and Control, 2016.
  55th IEEE Conference on}, pp.~2233 -- 2239, IEEE, 2016.

\bibitem{Cameron2016EventTrigger}
C.~Nowzari and J.~Cortés, ``Distributed event-triggered coordination for
  average consensus on weight-balanced digraphs,'' {\em Automatica}, vol.~68,
  pp.~237--244, February 2016.

\bibitem{mu2014consensus}
B.~Mu, H.~Li, W.~Li, and Y.~Shi, ``Consensus for multiple euler-lagrange
  dynamics with arbitrary sampling periods and event-triggered strategy,'' in
  {\em Intelligent Control and Automation (WCICA), 2014 11th World Congress
  on}, pp.~2596--2601, IEEE, 2014.

\bibitem{huang2016distributed}
N.~Huang, Z.~Duan, and Y.~Zhao, ``Distributed consensus for multiple
  euler-lagrange systems: An event-triggered approach,'' {\em Science China
  Technological Sciences}, vol.~59, pp.~33--44, 2016.

\bibitem{meng2012FT_EL_track}
Z.~Meng and Z.~Lin, ``{Distributed Finite-time Cooperative Tracking of
  Networked Lagrange Systems via Local Interactions},'' in {\em American
  Control Conference (ACC), Montr\'{e}al, Canada}, pp.~4951--4956, IEEE, 2012.

\bibitem{mei2011distributed_EL}
J.~Mei, W.~Ren, and G.~Ma, ``{Distributed Coordinated Tracking With a Dynamic
  Leader for Multiple Euler-Lagrange Systems},'' {\em IEEE Transactions on
  Automatic Control}, vol.~56, no.~6, pp.~1415--1421, 2011.

\bibitem{liu2016decentralised}
X.~Liu, C.~Du, P.~Lu, and D.~Yang, ``Decentralised consensus for multiple
  lagrangian systems based on event-triggered strategy,'' {\em International
  Journal of Control}, vol.~89, no.~6, pp.~1111--1124, 2016.

\bibitem{huang2016EL_eventtrigger}
N.~Huang, Z.~Duan, and Y.~Zhao, ``{Distributed consensus for multiple
  Euler-Lagrange systems: An event-triggered approach},'' {\em Science China:
  Technological Sciences}, vol.~59, pp.~33--44, January 2016.

\bibitem{garcia2013decentralised}
E.~Garcia, Y.~Cao, H.~Yu, P.~Antsaklis, and D.~Casbeer, ``Decentralised
  event-triggered cooperative control with limited communication,'' {\em
  International Journal of Control}, vol.~86, no.~9, pp.~1479--1488, 2013.

\bibitem{mei2014variable_linear}
J.~Mei, W.~Ren, J.~Chen, and B.~D. Anderson, ``Consensus of linear multi-agent
  systems with fully distributed control gains under a general directed
  graph,'' in {\em IEEE 53rd Annual Conference on Decision and Control}, (Los
  Angeles, USA), pp.~2993--2998, IEEE, 2014.

\bibitem{mei2016adaptive}
J.~Mei, W.~Ren, and J.~Chen, ``{Distributed Consensus of Second--Order
  Multi--Agent Systems with Heterogeneous Unknown Inertias and Control Gains
  Under a Directed Graph},'' {\em IEEE Transactions on Automatic Control},
  vol.~61, no.~8, pp.~2019--2034, 2016.

\bibitem{li2014cooperative_book}
Z.~Li and Z.~Duan, {\em {Cooperative Control of Multi-Agent Systems: A
  Consensus Region Approach}}.
\newblock CRC Press, 2014.

\bibitem{liu2016ELevent_trigger}
Q.~Liu, M.~Ye, J.~Qin, and C.~Yu, ``{Event-based leader-follower consensus for
  multiple Euler-Lagrange systems with parametric uncertainties},'' in {\em
  Proceedings of IEEE 55th Annual Conference on Decision and Control (CDC), Las
  Vegas, USA}, pp.~2240--2246.

\bibitem{rudin1964principles}
W.~Rudin {\em et~al.}, {\em Principles of mathematical analysis}, vol.~3.
\newblock McGraw-Hill New York, 1964.

\bibitem{horn2012matrixbook}
R.~A. Horn and C.~R. Johnson, {\em {Matrix Analysis}}.
\newblock {Cambridge University Press, New York}, 2012.

\bibitem{ioannou2006adaptive}
P.~A. Ioannou and B.~Fidan, {\em Adaptive Control Tutorial}.
\newblock SIAM, 2006.

\bibitem{song2012pinning}
Q.~Song, F.~Liu, J.~Cao, and W.~Yu, ``Pinning-controllability analysis of
  complex networks: an m-matrix approach,'' {\em Circuits and Systems I:
  Regular Papers, IEEE Transactions on}, vol.~59, no.~11, pp.~2692--2701, 2012.

\bibitem{sun2016new}
Z.~Sun, N.~Huang, B.~D. Anderson, and Z.~Duan, ``A new distributed zeno-free
  event-triggered algorithm for multi-agent consensus,'' in {\em Decision and
  Control (CDC), 2016 IEEE 55th Conference on}, pp.~3444--3449, IEEE, 2016.

\bibitem{li2015eventtrigger}
H.~Li, X.~Liao, T.~Huang, and W.~Zhu, ``Event-triggering sampling based
  leader-following consensus in second-order multi-agent systems,'' {\em
  Automatic Control, IEEE Transactions on}, vol.~60, pp.~1998--2003, July 2015.

\bibitem{hu2016consensus}
W.~Hu, L.~Liu, and G.~Feng, ``Consensus of linear multi-agent systems by
  distributed event-triggered strategy,'' {\em IEEE Transactions on
  Cybernetics}, vol.~46, pp.~148--157, Jan 2016.

\bibitem{yang2016decentralized}
D.~Yang, W.~Ren, X.~Liu, and W.~Chen, ``Decentralized event-triggered consensus
  for linear multi-agent systems under general directed graphs,'' {\em
  Automatica}, vol.~69, pp.~242--249, 2016.

\bibitem{wei2017edge}
B.~Wei, F.~Xiao, and M.-Z. Dai, ``Edge event-triggered control for multi-agent
  systems under directed communication topologies,'' {\em International Journal
  of Control}, pp.~1--10, 2017.

\bibitem{zhu2015event}
W.~Zhu and Z.-P. Jiang, ``Event-based leader-following consensus of multi-agent
  systems with input time delay,'' {\em IEEE Transactions on Automatic
  Control}, vol.~60, no.~5, pp.~1362--1367, 2015.

\bibitem{wang2017event}
X.-F. Wang, Z.~Deng, S.~Ma, and X.~Du, ``Event-triggered design for multi-agent
  optimal consensus of euler-lagrangian systems,'' {\em Kybernetika}, vol.~53,
  no.~1, pp.~179--194, 2017.

\bibitem{mei2012distributed}
J.~Mei, W.~Ren, and G.~Ma, ``{Distributed Containment Control for Lagrangian
  Networks With Parametric Uncertainties Under a Directed Graph},'' {\em
  Automatica}, vol.~48, no.~4, pp.~653--659, 2012.

\bibitem{ye2016EL_consensus}
M.~Ye, B.~D.~O. Anderson, and C.~Yu, ``{Distributed model-independent consensus
  of Euler--Lagrange agents on directed networks},'' {\em International Journal
  of Robust and Nonlinear Control}, pp.~n/a--n/a, 2016.
\newblock rnc.3689.

\end{thebibliography}

\appendices

\section{Proofs for Section~\ref{sec:background}}\label{app:sec_background}
\subsection{\textbf{Lemma}~\ref{lem:bounded_pd_function} } 
Observe that $cxy^2 \leq c\mathcal{X}y^2$ for all $x \in [0,\mathcal{X}]$. It follows that 
\begin{equation*}
g(x,y) \geq a x^2 + (b - c\mathcal{X}) y^2 - d x y
\end{equation*} 
if $y \in [0, \infty)$ and $x \in [0,\mathcal{X}]$ because $c > 0$. For any fixed value of $y = y_1 \in [0,\infty)$, write $\bar g(x) = a x^2 + (b - c\mathcal{X}) y_1^2 - d x y_1$. The discriminant of $\bar g(x)$ is negative if 
\begin{equation}\label{eq:b_bound_lemma_pd}
b > c\mathcal{X} + \frac{d^2}{4a}
\end{equation}
which implies that the roots of $\bar g(x)$ are complex, i.e. $\bar g(x) > 0$ and this holds for any $y_1 \in [0,\infty)$. We thus conclude that for all $y \in [0, \infty)$ and $x \in [0,\mathcal{X}]$, if $b$ satisfies \eqref{eq:b_bound_lemma_pd}, then $g(x,y) > 0$ except the case where $g(x,y) = 0$ if and only if $x = y = 0$. 

\subsection{\textbf{Corollary} \ref{cor:bounded_pd_function_2}}
Observe that $h(x,y) = g(x,y) - e x -f y$ where $g(x,y)$ is defined in \textbf{Lemma}~\ref{lem:bounded_pd_function}. Let $b^*$ be such that it satisfies condition \eqref{eq:b_bound_lemma_pd} in \textbf{Lemma} \ref{lem:bounded_pd_function} and thus $g(x,y) > 0$ for $x\in [0,\infty)$ and $y\in[0,\mathcal{Y}]$. Note that the positivity condition on $g(x,y)$ in \textbf{Lemma} \ref{lem:bounded_pd_function} continues to hold for any $a\geq a^*$ and any $b \geq b^*$. Let $a_1$ and $b_1$ be positive scalars whose magnitudes will be determined later. Define $a = a_1 + a^*$ and $b = b_1 + b^*$. Define $z(x,y) \triangleq a_1 x^2 + b_1 y^2 - e x - f y$.  Next, consider $(x,\bar y) \in \mathcal{V}$, where $\bar y$ is some fixed value. It follows that
\begin{equation*}
z(x, \bar y) = a_1 x^2 -  e x + (b_1\bar y^2-f\bar y)
\end{equation*} 
Note the discriminant of $z(x,\bar y)$ is $\mathcal{D}_x =  e^2 - 4 a_1 ( b_1 \bar y^2 - f \bar y)$. It follows that $\mathcal{D}_x < 0$ if $b_1 \bar y^2 > f\bar y + e/4a_1$. This is satisfied, independently of $\bar y \in [\mathcal{Y}-\varepsilon, \mathcal{Y}]$, for any $b_1 \geq b_{1,y}$, $a_1 \geq a_{1,y}$ where 
\begin{equation*}
b_{1,y} > \frac{ e^2 }{4a_{1,y} (\mathcal{Y}- \varepsilon)^2 } + \frac{f}{\mathcal{Y}-\varepsilon}
\end{equation*}
because $\mathcal{Y}-\varepsilon \leq \bar y$. It follows that $\mathcal{D}_x < 0 \Rightarrow z(x,y) > 0$ in $\mathcal{V}$. Now, consider $(\bar x,y) \in \mathcal{U}$ for some fixed value $\bar x$. It follows that
\begin{equation*}
z(\bar x,y) = b_1 y^2 - f y + (a_1 \bar x^2 - e\bar x)
\end{equation*}
and note the discriminant of $z(\bar x,y)$ is $\mathcal{D}_y = f^2- 4 b_1 (a_1\bar x^2-e\bar x)$. Suppose that $a_1 > e/\mathcal{X}$, which ensures that $a_1\bar x^2-e\bar x > 0$. Then, $\mathcal{D}_y < 0$ if $b_1(a_1\bar x^2 - e\bar x) > f/4$. This is satisfied, independently of $\bar x \in [\mathcal{X} - \vartheta, \mathcal{X}]$, for any $b_1 \geq b_{1,x}$, $a_1 \geq a_{1,x}$ where
\begin{equation*}
b_{1,x} > \frac{f}{4(a_{1,x} (\mathcal{X} - \vartheta)^2 - e (\mathcal{X} - \vartheta))}
\end{equation*}
It follows that $\mathcal{D}_y < 0 \Rightarrow z(x,y) > 0$ in $\mathcal{U}$. We conclude that setting \mbox{$b = b^* + \max [b_{1,x}, b_{1,y}]$} and \mbox{$a = a^* + \max [a_{1,x}, a_{1,y}]$}, implies $h(x, y) > 0$ in $\mathcal{R}$, except $h(0,0) = 0$.


\section{Proofs for Section~\ref{sec:dir_IC}}\label{app:sec_dir_IC}
Before we present the main proof of \textbf{Theorem}~\ref{theorem:directed_main_MI}, we need to compute an upper bound using limited information about the initial conditions.

\subsection{An Upper Bound Using Initial Conditions}\label{subsec:IC_bound} 

Suppose that initial conditions are bounded as $\Vert \vect u(0) \Vert \leq k_a$ and $\Vert \vect v(0) \Vert \leq k_b$ with $k_a, k_b$ known a priori. Before we proceed with the main proof, we provide a method to calculate a non-tight upper bound on the initial states expressed as $\Vert \vect u(0) \Vert < \mathcal{X}$ and $\Vert \vect v(0) \Vert < \mathcal{Y}$, with the property that as shown in the sequel, there holds $\Vert \vect u(t) \Vert < \mathcal{X}$ and $\Vert \vect v(t) \Vert < \mathcal{Y}$ for all $t \geq 0$, and exponential convergence results. Due to spatial limitations, we show only the bound on $\vect v$ and leave the reader to follow an identical process for $\vect u$. In keeping with the model-independent nature of the paper, define a function as
\begin{equation}\label{eq:V_bar}
\bar V_{\mu} = \begin{bmatrix}\vect u \\ \vect v\end{bmatrix}^\top \!
\begin{bmatrix}
\lambda_{\max}(\mat Q)\mat I_{np} & \frac{1}{2}\mu^{-1}\bar\gamma(k_{\overline{M}}\!+\!\delta)\mat I_{np} \\ \frac{1}{2}\mu^{-1}\bar\gamma(k_{\overline{M}}\!+\!\delta)\mat I_{np} & \frac{1}{2}\bar\gamma(k_{\overline{M}}\!+\!\delta)\mat I_{np}
\end{bmatrix} 
\begin{bmatrix}\vect u \\ \vect v\end{bmatrix}
\end{equation} 
where $\mat Q = \mat \Gamma \mathcal{L}_{22} + \mathcal{L}_{22}^\top \mat \Gamma$, $\underline{\gamma} = \min_i \gamma_i$ and $\bar\gamma = \max_i \gamma_i$. Here, $\gamma_i$ are the diagonal entries of $\mat \Gamma_p$. The constant $\delta > 0$ is arbitrarily small and fixed. Note that \mbox{$(k_{\overline{M}}+\delta)\mat{I}_{np} > \mat M$} and that $\bar V_{\mu}$ is \emph{not} a Lyapunov function. Let the matrix in \eqref{eq:V_bar} be $\mat L_\mu$. Then according to \textbf{Theorem}~\ref{theorem:schur_complement}, $\mat L_\mu > 0$ if and only if $\lambda_{\max}(\mat Q)\mat I_{np} - \frac{1}{2}\mu^{-2}\bar\gamma(k_{\overline{M}}+ \delta)\mat I_{np} > 0 $ which is implied by $\lambda_{\max}(\mat Q) - \frac{1}{2}\mu^{-2}\bar\gamma(k_{\overline{M}}+ \delta) > 0$. Then $\mat L_\mu > 0$ for any $\mu \geq \mu_1^*$ where
\begin{equation*}
\mu_1^* > \sqrt{ \frac{\bar\gamma(k_{\overline{M}}+ \delta)}{2\lambda_{\max}(\mat Q) } }
\end{equation*}
While $\bar V_\mu$ is a function of $\vect u(t)$ and $\vect v(t)$, we use $\bar V_\mu(t)$ to denote $\bar V_\mu (\vect u(t), \vect v(t))$. Lastly, observe that 
\begin{align*}
\bar V_\mu & \leq \lambda_{\max}(\mat Q) \Vert \vect u \Vert^2 + \frac{1}{2}\bar\gamma(k_{\overline{M}}+\delta) \Vert \vect v \Vert^2 \\
& \; + \mu^{-1}\bar\gamma(k_{\overline{M}}+\delta) \Vert \vect u \Vert \Vert \vect v \Vert
\end{align*}

Next, let
\begin{equation}\label{eq:V_LB}
\underline{V}_{\mu} = \begin{bmatrix}\vect u \\ \vect v\end{bmatrix}^\top \!
\begin{bmatrix}
\frac{1}{4} \lambda_{\min}(\mat Q)\mat I_{np} & \frac{1}{2}\mu^{-1}\underline{\gamma}(k_{\underline{m}}\!-\!\delta)\mat I_{np} \\ \frac{1}{2}\mu^{-1}\underline{\gamma}(k_{\underline{m}}\!-\!\delta)\mat I_{np} & \frac{1}{2} \underline{\gamma} (k_{\underline{m}}\!-\!\delta)\mat I_{np}
\end{bmatrix} 
\begin{bmatrix}\vect u \\ \vect v\end{bmatrix}
\end{equation}
Call the matrix in \eqref{eq:V_LB} $\mat N_\mu$. Let the arbitrarily small $\delta$ be such that $(k_{\underline{m}}-\delta) > 0$. Analysis using \textbf{Theorem}~\ref{theorem:schur_complement}, similar to above, is used to conclude that $\mat N_\mu > 0$ for any $\mu \geq \mu_2^*$ where
\begin{equation*}
\mu_2^* > \sqrt{ \frac{2\underline{\gamma}( k_{\underline{m}}-\delta)}{\lambda_{\min}(\mat Q) } }
\end{equation*}  
Set $\mu^*_3 = \max\{\mu^*_1, \mu^*_2\}$. Define 
\begin{align*}
\rho_1(\mu) & = \frac{1}{2} (k_{\overline{M}}+\delta) - \frac{1}{4}\mu^{-2}({k_{\overline{M}}} + \delta)^2\lambda_{\max}(\mat Q)^{-1} \\
\rho_2(\mu) & = \frac{1}{2}(k_{\underline{m}}-\delta) - \mu^{-2}(k_{\underline{m}}-\delta)^2\lambda_{\min}(\mat Q)^{-1} \\
\rho_3(\mu) & = \frac{1}{2}(k_{\underline{m}}-\delta) - \frac{1}{2}\mu^{-2}(k_{\overline{M}})^2\Vert \mat Q^{-1}\Vert
\end{align*} 
and observe that for sufficiently large $\mu$ there holds $\rho_1 \geq \rho_3 > \rho_2$. Assume without loss of generality that $\rho_1 \geq \rho_3 > \rho_2$ (if not, one can always replace $\mu_3^*$ by a $\mu_4^*$ with $\mu^*_4 > \mu^*_3$, and such that $\rho_1 \geq \rho_3 >\rho_2$). Note that for any $\mu\geq \mu^*_3$ there holds $\rho_i(\mu^*_3) \leq \rho_i(\mu), i = 1,2$. Compute now
\begin{align*}
\bar V^* = \lambda_{\max}(\mat Q) k_a^2 + \frac{1}{2}\bar\gamma(k_{\overline{M}}+\delta) k_b^2 + {\mu_3^*}^{-1}\bar\gamma(k_{\overline{M}}+\delta) k_a k_b
\end{align*}
One can verify that for any $\mu \geq \mu_3^*$ that $\bar V_\mu(0) \leq \bar V^*$. It follows from \textbf{Lemma}~\ref{lem:W_bound} and \eqref{eq:W_bound_y} that 
\begin{equation}\label{eq:semi_global_barV_delta}
\Vert \vect v(0)\Vert_2 \leq \sqrt{\frac{\bar V_{\mu}(0)}{\rho_1(\mu)}} \leq \sqrt{\frac{\bar V_{\mu}(0)}{\rho_1(\mu^*_3)}} < \sqrt{\frac{\bar V^*}{\rho_2(\mu^*_3)}} := \mathcal{Y}_1
\end{equation}  
Follow a similar method to obtain $\mathcal{X}_1$. Next, compute
\begin{align*}
\wh{V}^* = \lambda_{\max}(\mat Q) {\mathcal{X}_1}^2 + \frac{1}{2}\bar\gamma(k_{\overline{M}}+\delta) {\mathcal{Y}_1}^2 + {\mu_3^*}^{-1}\bar\gamma(k_{\overline{M}}+\delta) \mathcal{X}_1 \mathcal{Y}_1
\end{align*}
Finally, compute the bound $\mathcal{Y} =\sqrt{\wh{V}^*/\rho_2(\mu_3^*)}$, and note that $\bar V_{\mu_3^*} \leq \wh{V}^*$. Note also that $\bar V_{\mu}, \bar{V}^*, \rho_2(\mu^*_3)$ are independent of $\mu$. Thus $\Vert \vect v(0)\Vert_2 < \mathcal{Y}$  (and similarly $\Vert \vect u(0)\Vert_2 < \mathcal{X}$) can be used for all $\mu\geq \mu^*_3$. We now proceed to the proof of \textbf{Theorem}~\ref{theorem:directed_main_MI}.

\subsection{Proof of \textbf{Theorem}~\ref{theorem:directed_main_MI}}
\emph{Part 1:} Consider the Lyapunov-like candidate function
\begin{equation}\label{eq:lyapunov_candidate}
V = \frac{1}{2}\vect{u}^\top \mat{Q}\vect{u}+\frac{1}{2}\vect{v}^\top\mat \Gamma_p \mat{M}\vect{v}+\mu^{-1}\vect{u}^\top\mat \Gamma_p \mat{M}\vect{v}
\end{equation}
where $\mat \Gamma_p = \mat \Gamma \otimes \mat I_p$. It may also be expressed as a quadratic in the variables $\vect{u}$ and $\vect{v}$
\begin{equation}\label{eq:lyapunov_candidate_quad}
V = \begin{bmatrix}\vect{u} \\ \vect{v}\end{bmatrix}^\top
\begin{bmatrix}\frac{1}{2} \mat{Q} & \frac{1}{2}\mu^{-1}\mat \Gamma_p \mat{M} \\ \frac{1}{2}\mu^{-1}\mat \Gamma_p\mat{M} & \frac{1}{2}\mat \Gamma_p\mat{M}\end{bmatrix}
\begin{bmatrix}\vect{u} \\ \vect{v}\end{bmatrix}
\end{equation}
\textbf{Theorem}~\ref{theorem:schur_complement} is used to conclude that $V$ is positive definite if
\begin{equation}\label{eq:cond_mu_Va}
\frac{1}{2}\mat{Q}-\frac{1}{2}\mu^{-2}\mat \Gamma_p \mat{M} > 0
\end{equation}
which is implied by
\begin{equation}\label{eq:cond_mu_Vaa}
\lambda_{\min}(\mat{Q}) - \mu^{-2}\bar\gamma\lambda_{\max}(\mat{M}) > 0
\end{equation}
Observe that \eqref{eq:cond_mu_Vaa} is implied by
\begin{equation}\label{eq:cond_mu_Vb}
\mu > \sqrt{ \frac{\bar\gamma k_{\overline{M}}}{\lambda_{\min}(\mat{Q})} }
\end{equation}
because there holds $\lambda_{\max}(\mat{M}) \leq k_{\overline{M}}$. Since $\lambda_{\min}(\mat{Q}) > 0$ then there can always be found a $\mu > 0$ which satisfies $\eqref{eq:cond_mu_Vb}$. Define $\mu_5^*$ such that $\mu_5^*$ satisfies \eqref{eq:cond_mu_Vb} and $\mu_5^* \geq \mu_3^*$. Therefore $V$ is positive definite in $\vect{u}$ and $\vect{v}$ for all $\mu \geq \mu_5^*$. Denote the matrix in \eqref{eq:lyapunov_candidate_quad} as $\mat G$. Following the method outlined in the appendix of \cite{ye2016EL_consensus}, it is straightforward to show that $V(t) < \bar{V}_\mu(t)$ for all $t$ because $\mat L_\mu > \mat G > \mat N_\mu$ for all $\mu \geq \mu_5^*$. Lastly, observe that
\begin{align}\label{eq:lyapunov_upperbound}
V(t) & \leq \frac{1}{2}\lambda_{\max}(\mat Q) \Vert \vect u(t) \Vert^2 + \frac{1}{2}\bar\gamma k_{\overline{M}} \Vert \vect v(t) \Vert^2 \nonumber \\
&\; + \mu^{-1}\bar\gamma k_{\overline{M}} \Vert \vect u(t) \Vert \Vert \vect v(t) \Vert
\end{align}
Taking the derivative of $V$ with respect to time along the trajectories of the system \eqref{eq:euler_Lagrange_nonautonomous_network}, we have
\begin{align}
\dot{V} & = \vect{u}^\top\mat Q\vect{v} +\vect{v}^\top\mat \Gamma_p \mat{M}\dot{\vect{v}} +\frac{1}{2}\vect{v}^\top\mat \Gamma_p \dot{\mat{M}}\vect{v} \nonumber \\
& \quad  +\mu^{-1}\vect{v}^\top\mat \Gamma_p \mat{M}\vect{v} +\mu^{-1}\vect{u}^\top\mat \Gamma_p \dot{\mat{M}}\vect{v}+\mu^{-1}\vect{u}^\top \mat \Gamma_p \mat{M}\dot{\vect{v}} \label{eq:lyapunov_candidate_derivative_a} \\
 & = -\frac{1}{2}\mu\vect{v}^\top \mat Q \vect{v} -\frac{1}{2}\mu^{-1}\vect{u}^\top \mat Q \vect{u} + \mu^{-1}\vect{v}^\top \mat \Gamma_p \mat{M}\vect{v} \nonumber \\
& \;  +\mu^{-1}\vect{u}^\top \mat \Gamma_p \mat{C}^\top\vect{v}  -\vect{v}^\top \mat \Gamma_p \vect{e} - \mu^{-1}\vect{u}^\top \mat \Gamma_p \vect{e} \label{eq:lyapunov_candidate_derivative_b}
\end{align}
We obtain \eqref{eq:lyapunov_candidate_derivative_b} by substituting in $\mat{M}\dot{\vect{v}}$ from \eqref{eq:euler_lagrange_general_system_error} noting that $\dot{\mat{M}}-2\mat{C}$ is skew-symmetric, or equivalently $\dot{\mat{M}} = \mat{C}+\mat{C}^\top$. Using the properties of the Euler-Lagrange system (\ref{prty:EL_M_symPD} to \ref{prty:EL_g_bound}), the following upper bound on $\dot{V}$ is obtained.
\begin{align}\label{eq:lyapunov_candidate_derivative_bound}
\dot{V} & \leq  - (\frac{1}{2}\mu\lambda_{\min}(\mat Q) - \mu^{-1}k_{\overline{M}} \bar \gamma) \Vert \vect v\Vert^2 -\frac{1}{2}\mu^{-1}\lambda_{\min}(\mat Q) \Vert \vect u\Vert^2\nonumber \\
& \; + \mu^{-1}k_C \bar \gamma  \Vert \vect{v} \Vert^2 \Vert \vect u \Vert + \bar \gamma \Vert \vect{v} \Vert \Vert \vect{e} \Vert + \mu^{-1}\bar \gamma \Vert \vect{u}\Vert \Vert \vect{e} \Vert 
\end{align}
Using the bound on $\Vert \vect e \Vert$ computed in \eqref{eq:e_norm_bound}, we then evaluate \eqref{eq:lyapunov_candidate_derivative_bound} to be
\begin{align}
\dot{V} & \leq  - (\frac{1}{2}\mu\lambda_{\min}(\mat Q) - \mu^{-1}k_{\overline{M}} \bar \gamma) \Vert \vect v\Vert^2 -\frac{1}{2}\mu^{-1}\lambda_{\min}(\mat Q) \Vert \vect u\Vert^2 \nonumber \\
& \; + \mu^{-1}k_C \bar \gamma  \Vert \vect{v} \Vert^2 \Vert \vect u \Vert + \mu^{-1}\beta_1 \bar \gamma  \Vert \mathcal{L}_{22}\Vert \Vert \vect u \Vert \Vert \vect{v}  \Vert \nonumber\\
& \;  + \beta_2 \bar \gamma \Vert \mathcal{L}_{22} \Vert \Vert \vect{v}  \Vert^2 +  \mu^{-2}\beta_1 \bar\gamma \Vert \mathcal{L}_{22}\Vert \Vert \vect u \Vert^2 \nonumber \\ 
& \;  + \mu^{-1} \beta_2 \bar \gamma \Vert \mathcal{L}_{22} \Vert \Vert \vect{u}\Vert \Vert \vect v \Vert + \mu^{-1} \bar\omega(t)  \bar \gamma \Vert \vect{u}\Vert + \bar\omega(t)\bar \gamma \Vert \vect{v}  \Vert \label{eq:Vdot_derivative_boundb} \\
& = -\mu^{-1}\Bigg[ A_1 \Vert \vect u \Vert^2 + A_2 \Vert \vect v \Vert^2 - A_3 \Vert \vect v \Vert^2 \Vert \vect u \Vert - A_4 \Vert \vect v \Vert \Vert \vect u \Vert \nonumber \\
& \; - A_5 \Vert \vect u \Vert - A_6 \Vert \vect v \Vert  \Bigg] := - \mu^{-1} p(\Vert \vect u \Vert, \Vert \vect v \Vert) \label{eq:p_uv_def}
\end{align}
where $A_1(\mu) = \lambda_{\min}(\mat Q)/2 - \mu^{-1}\bar\gamma\beta_1\Vert \mathcal{L}_{22}\Vert$, $A_2(\mu) = \mu^2\lambda_{\min}(\mat Q)/2 - \bar\gamma(k_{\overline{M}} + \mu\beta_2 )$, $A_3 = k_C \bar \gamma$, $A_4 = \Vert \mat\Gamma \mathcal{L}_{22} \Vert\bar\gamma (\beta_1 + \beta_2)$, $A_5(t) = \bar\gamma\bar\omega(t)$, and $A_6(\mu,t) = \mu\bar\gamma\bar\omega(t)$. By designing $\beta_1$ such that 
\begin{equation*}
\beta_1 < \frac{\mu_5^*\lambda_{\min}(\mat Q)}{2\bar\gamma\Vert \mathcal{L}_{22}\Vert}
\end{equation*}
then $A_1(\mu) > 0$ for any $\mu \geq \mu_5^*$. Observe that $A_2(\mu_5^*) > 0$ if $(\mu_5^*)^2\lambda_{\min}(\mat Q)/2 - \gamma k_{\overline{M}} - \mu_5^* \beta_2 > 0$. Rearranging for $\beta_2$, this is implied by
\begin{equation}\label{eq:beta2_req}
\beta_2 < \frac{\mu_5^* \lambda_{\min}(\mat Q) }{2} - \frac{\bar\gamma k_{\overline{M}} }{\mu_5^*}
\end{equation}
and note that any $\beta_2$ satisfying \eqref{eq:beta2_req} continues to satisfy \eqref{eq:beta2_req} for any $\mu \geq \mu_5^*$. If the right hand side of \eqref{eq:beta2_req} is negative, it is still possible to ensure that $A_2(\mu_5^*) > 0$ by increasing the size of $\mu_5^*$ and setting $\beta_2$ sufficiently small because the coefficient of $\mu^2$ in $A_2$ is positive. Lastly, observe that as $\mu \to \infty$ then $A_1(\mu) \to \lambda_{\min}(\mat Q)/2$, $A_2(\mu) = \mathcal{O}(\mu^2)$ and $A_6(\mu) = \mathcal{O}(\mu)$. Notice that $\bar\omega(t)$ decays to $0$ exponentially fast. In other words, the coefficients $A_5(t)$ and $A_6(\mu,t)$ decay to zero exponentially fast. 

\emph{Part 2:} We now show that the trajectories of the system are bounded for all time by carefully designing $\mu$. For \emph{Part 2} and \emph{Part 3} of the proof, a diagram is included in Fig.~\ref{fig:stability_diag} to aid in the explanation of the proof. Notice that $p(\Vert \vect u \Vert, \Vert \vect v \Vert)$ in \eqref{eq:p_uv_def} is of the same form as $h(x,y)$ in \textbf{Corollary}~\ref{cor:bounded_pd_function_2} with $\Vert \vect u \Vert = x$, $\Vert \vect v \Vert = y$ and $A_1(\mu) = a$, $A_2(\mu) = b$, $A_3 = c$, $A_4 = d$, $A_5(0) = e$ and $A_6(\mu,0) = f$. Note that $A_5(t_1) > A_5(t_2)$ and $A_6(\mu,t_1) > A_6(\mu,t_2)$ for any $t_1 < t_2$. Because of this, we proceed using $A_5(0), A_6(\mu,0)$: any $A_2(\mu)$ satisfying the inequalities on $b$ in \textbf{Corollary}~\ref{cor:bounded_pd_function_2} for $t = 0$ will continue to satisfy the inequalities for $t > 0$. This will be come clear in the sequel. We now use the values $\mathcal{X}, \mathcal{Y}$ computed in subsection~\ref{subsec:IC_bound}. Choose $\vartheta_0 > \mathcal{X} - \mathcal{X}_1$ and $\varphi_0 > \mathcal{Y}-\mathcal{Y}_1$, and ensure that $\mathcal{X}-\vartheta_0, \mathcal{Y}-\varphi_0 > 0$. Note the fact that $\mathcal{X}\geq \mathcal{X}_1$ and $\mathcal{Y}\geq \mathcal{Y}_1$ implies $\vartheta_0, \varphi_0 > 0$. We assume without loss of generality that $\mathcal{X}$ found in Section~\ref{subsec:IC_bound} is such that $\mathcal{X} - \vartheta_0 > A_5(0)/A_1(\mu)$. If this inequality were not satisfied, one would replace $\mathcal X$ by a $\bar{\mathcal{X}} > \mathcal{X}$ such that $\bar{\mathcal{X}}-\vartheta_0 > e/a$ and proceed with the stability proof using $\bar{\mathcal{X}}$. 

Define the sets $\mathcal{U}, \mathcal{V}$ and the region $\mathcal{R}$ as in \textbf{Corollary}~\ref{cor:bounded_pd_function_2} with $\Vert \vect u \Vert = x$, $\Vert \vect v \Vert = y$. Define further the sets $\bar{\mathcal{U}} = \{\Vert \vect u \Vert: \Vert \vect u \Vert > \mathcal{X}\}$ and $\bar{\mathcal{V}} = \{\Vert \vect v \Vert: \Vert \vect v \Vert > \mathcal{X}\}$. Define the compact region $\mathcal{S} = \mathcal{U} \cup \mathcal{V} \backslash \bar{\mathcal{U}} \cup \bar{\mathcal{V}}$ (refer to Fig.~\ref{fig:stability_diag} for details). Since $\mathcal{S} \subset \mathcal{R}$, there exists a $\mu_6^* \geq \mu_5^*$ such that $A_2(\mu_6^*)$ satisfies the requirement on $b$ in \textbf{Corollary}~\ref{cor:bounded_pd_function_2}, which ensures that $p(\Vert \vect u \Vert, \Vert \vect v \Vert) > 0$. This in turn implies that $\dot{V} < 0$ in $\mathcal{S}$. Lastly, define the region $\Vert \vect u(t) \Vert \in [0, \mathcal{X} - \vartheta_0)$ and $\Vert \vect v(t) \Vert \in [0, \mathcal{Y} - \varphi_0)$ as $\mathcal{T}$. In this part of the proof, we are trying to show that the trajectories of \eqref{eq:euler_Lagrange_nonautonomous_network} remain bounded for all time. For purposes of explanation, we therefore temporarily assume that $\mathcal{S}$ and $\mathcal{T}$ are time-invariant, as opposed to Fig.~\ref{fig:stability_diag}. In the latter \emph{Part 3}, we will discuss the time-varying nature of $\mathcal{S}(t)$ and $\mathcal{T}(t)$ and show that the boundedness arguments developed here continue to hold. 

We are now ready to show that the trajectory of the system \eqref{eq:euler_Lagrange_nonautonomous_network} remains in $\mathcal{T}\cup \mathcal{S}$ for all $t\geq 0$. We define $T_1$ as the infimum of time values for which either $\Vert \vect u(t) \Vert < \mathcal{X}$ or $\Vert \vect v(t) \Vert < \mathcal{Y}$ fail to hold. We show that the existence of $T_1$ creates a contradiction, and thus conclude that the bounds $\Vert \vect u(t) \Vert < \mathcal{X}$ or $\Vert \vect v(t) \Vert < \mathcal{Y}$ hold for all $t$. 

Observe that $\dot{V}$ may be sign indefinite in $\mathcal{T}$, which means if the trajectory of the system is in $\mathcal{T}$ (the blue region in Fig.~\ref{fig:stability_diag}) then $V(t)$ can increase. However, from \eqref{eq:lyapunov_upperbound} we obtain that in $\mathcal{T}$
\begin{align*}
V(t) & \leq \frac{1}{2}\lambda_{\max}(\mat Q) ({\mathcal{X} - \vartheta_0})^2 + \frac{1}{2}\bar\gamma(k_{\overline{M}}+\delta) ({\mathcal{Y} - \varphi_0})^2 \\
&\; + {\mu}^{-1}\bar\gamma(k_{\overline{M}}+\delta) (\mathcal{X} - \vartheta_0) (\mathcal{Y} - \varphi_0) := \mathcal{Z}
\end{align*}
One can easily verify that $\mathcal{Z} < \wh{V}^*$ because we selected $\vartheta_0,\varphi_0$ such that $\mathcal{X}_1 > \mathcal{X} - \vartheta_0$ and $\mathcal{Y}_1 > \mathcal{Y} - \varphi_0$. Note that any trajectory of \eqref{eq:euler_Lagrange_nonautonomous_network} beginning\footnote{
The definitions of $\mathcal{S}$ and $\mathcal{T}$ ensure that $\vect u(0), \vect v(0) \in \mathcal{S}\cup \mathcal{T}$ as evident in \eqref{eq:semi_global_barV_delta}.} in $\mathcal{S}\cup \mathcal{T}$ must satisfy $V(t) \leq \max \{\mathcal{Z}, V(0)\} < \wh{V}^*$ for $t \in [0,T_1]$. This is because $\dot{V} < 0$ in $\mathcal{S}$; any trajectory starting in $\mathcal{S}$ (respectively $\mathcal{T}$) has $V(t) < V(0)$ (respectively $V(t) < \mathcal{Z}$). If the trajectory leaves $\mathcal{T}$ and enters $\mathcal{S}$ at some $t$, consider the crossover point, which is in the closure of $\mathcal{T}$. By the virtue that $V$ is continuous, we have $V(t) < Z$ and by virtue of entering $\mathcal{S}$, $V(t+\delta) \leq V(t) < \mathcal{Z}$, for some arbitrarily small $\delta$. Define $\zeta = \lambda_{\min}(\mat M - \mu^{-1}\mat M \mat Q^{-1} \mat M)/2 $ and verify that $\zeta \geq \rho_3(\mu_3^*) > \rho_2(\mu^*_3)$. In accordance with \textbf{Lemma}~\ref{lem:W_bound} we have
\begin{equation}\label{eq:Y_contradict}
\Vert \vect v(T_1)\Vert_2 \leq \sqrt{\frac{V(T_1)}{\zeta}} < \sqrt{\frac{\wh{V}^*}{\zeta}} < \sqrt{\frac{\wh{V}^*}{\rho_2(\mu^*_3)}} = \mathcal{Y}
\end{equation}
Paralleling the argument leading to \eqref{eq:Y_contradict}, one can also show that $\Vert \vect u(T_1) \Vert < \mathcal{X}$. We omit this due to spatial limitations and similarity of argument. The existence of \eqref{eq:Y_contradict} and a similar inequality for $\Vert \vect u(T_1)\Vert$ contradict the definition of $T_1$. In other words, $T_1$ does not exist, and therefore $\Vert \vect u(t) \Vert < \mathcal{X}$ and $\Vert \vect v(t) \Vert < \mathcal{Y}$ for all $t$, as depicted in Fig~\ref{fig:stability_diag}. 

\emph{Part 3:} We now show that the leader-follower consensus objective is achieved. In order to do this, we firstly explore how the time-varying nature of $A_5(t)$ and $A_6(\mu,t)$ affects $\dot{V}$. As discussed in Fig.~\ref{fig:stability_diag}, $A_5(t)$ and $A_6(\mu, t)$ decay to zero exponentially fast due to the presence of $\bar\omega(t)$. Therefore, at $t = \infty$, $p(\Vert \vect u \Vert, \Vert \vect v \Vert)$ is of the form of $g(x,y)$ in \textbf{Lemma}~\ref{lem:bounded_pd_function} and is thus positive definite (and therefore $\dot{V} = -\mu^{-1} p$ is negative definite) for $\Vert \vect u \Vert \in [0, \mathcal{X}]$ and $\Vert \vect v \Vert \in [0,\infty)$. This is because $\mu_6^*$ as designed according to \textbf{Corollary}~\ref{cor:bounded_pd_function_2} also satisfies the requirements detailed in \textbf{Lemma}~\ref{lem:bounded_pd_function}. It is also straightforward to conclude that the sign indefiniteness of $\dot{V}(t)$ in $\mathcal{T}(t)$ arises due to the terms linear in $\Vert \vect u \Vert$ and $\Vert \vect v \Vert$ in \eqref{eq:Vdot_derivative_boundb}, i.e. the terms containing $\bar\omega(t)$, which are precisely the coefficients $A_5(t)$ and $A_6(\mu,t)$ in \eqref{eq:p_uv_def}.

Now that we have established that $A_5(t) \Vert \vect u\Vert$ and $A_6(\mu,t) \Vert \vect v \Vert$ give rise to the region $\mathcal{T}(t)$, we now establish the precise behaviour of $\mathcal{T}(t)$ and $\mathcal{S}(t)$ as functions of time. Now examine the inequalities on the coefficient $b$ as detailed in \textbf{Corollary}~\ref{cor:bounded_pd_function_2}, applied to $p(\Vert \vect u \Vert, \Vert \vect v \Vert)$ in \eqref{eq:p_uv_def}. We conclude that for a fixed $\mu_6^*$, the strictly monotonically decreasing nature of $A_5, A_6$ then implies that, for fixed $A_2(\mu_6^*)$, the region in which $p(\Vert \vect u \Vert, \Vert \vect v \Vert) > 0$ (respectively sign indefinite) as defined by $\mathcal{S}(t)$ (respectively $\mathcal{T}(t)$), is time-varying. \emph{Specifically, $\vartheta(t)$ and $\varphi(t)$ are strictly monotonically increasing.} Moreover, because $A_5 = A_6 = 0$ at $t = \infty$, we conclude that $\lim_{t\to \infty} \vartheta(t) = \mathcal{X}$ and $\lim_{t\to \infty} \varphi(t) = \mathcal{Y}$, at which point $\mathcal{T}(t) = [0,0]$.

It is straightforward to show that appropriate functions are given by $\vartheta(t) = -a_1e^{- a_2 t} + \mathcal{X}$ and $\varphi(t) = -b_1e^{- b_2 t} + \mathcal{Y}$. Here, $a_1, a_2, b_1, b_2$ are positive constants associated with $\mathcal{X}_0$, $\mathcal{Y}_0$ and the decay rate of $\bar\omega(t)$. Moreover, because $\vartheta(t), \varphi(t)$ are strictly monotonically increasing, it is easy to verify that $\mathcal{S}(t_1) \subset \mathcal{S}(t_2)$ and $\mathcal{T}(t_1) \supset \mathcal{T}(t_2)$ for any $t_1 < t_2$. These properties ensure that the boundedness arguments developed in \emph{Part 2} remain valid for time-varying $\mathcal{S}(t)$ and $\mathcal{T}(t)$ due to the nature of the time variation.

Now that we have established the behaviour of $\mathcal{S}(t)$ and $\mathcal{T}(t)$, we move on to show that leader-follower consensus is achieved. Suppose that at some $T_2$, the trajectory of the system \eqref{eq:euler_Lagrange_nonautonomous_network} leaves $\mathcal{T}(t)$ and does not enter $\mathcal{T}(t)$ for all $t \geq T_2$. In other words, for $t\geq T_2$, the trajectory is in $\mathcal{S}(t)$ (recall that in \emph{Part 2} we established that $\Vert \vect u(t) \Vert < \mathcal{X}$ and $\Vert \vect v(t) \Vert < \mathcal{Y}$ for all $t$). This is illustrated in Fig.~\ref{fig:stability_diag}. 

Firstly, consider the case where $T_2 = \infty$. From the form of $\vartheta(t), \varphi(t)$, we conclude that $\mathcal{T}(t)$ shrinks exponentially fast towards the origin $\vect u = \vect v = 0$. If $T_2 = \infty$ then there is some $T_3 < \infty$ such that the trajectory of the system \eqref{eq:euler_Lagrange_nonautonomous_network} is in $\mathcal{T}(t)$ for all $t\in [T_3, \infty)$. From the limiting behaviour of $\mathcal{T}(t)$, we conclude that the trajectory of \eqref{eq:euler_Lagrange_nonautonomous_network} also converges to the equilibrium $\Vert \vect u \Vert = \Vert \vect v \Vert = 0$, which implies that the leader-follower consensus objective has been achieved. Moreover, the convergence rate is exponential for $t\in [T_3,\infty)$.

Secondly, consider the case where $T_2$ is finite. Note that $T_2$ is initial condition dependent, but the initial condition set is bounded according to Assumption~\ref{assm:IC}. It follows that there exists a $\bar{T}_2$ \emph{independent of initial conditions} such that, for every initial condition satisfying Assumption~\ref{assm:IC}, $T_2 < \bar{T}_2 < \infty$. Define the time interval $t_p = [\bar{T}_2, \infty)$. Because the trajectory of the system \eqref{eq:euler_Lagrange_nonautonomous_network} is in $\mathcal{S}(t)$ for all $t\in t_p$ then $\dot{V}(t) < 0$ for all $t\in t_p$. Consider some arbitrary time $t_1 \in t_p$. We observe that $p(\Vert \vect u \Vert, \Vert \vect v \Vert) >0$ (i.e. positive definite) in the compact region $\mathcal{S}(t_1)$. One can therefore find a scalar $a_{1, t_1} > 0$ such that $p(\Vert \vect u \Vert, \Vert \vect v \Vert) \geq a_{1,t_1} \Vert [\vect u^\top, \vect v^\top ]^\top\Vert$ for all $\Vert \vect u \Vert, \Vert \vect v \Vert \in \mathcal{S}(t_1)$. Furthermore, by recalling that $A_5, A_6$ are positive and strictly monotonically decreasing, we conclude that $p(\Vert \vect u \Vert, \Vert \vect v \Vert, t_1) < p(\Vert \vect u \Vert, \Vert \vect v \Vert, t_2)$ for any $\vect u, \vect v$, and for any $t_1 \leq t_2$ where $t_1, t_2 \in t_p$. It follows that there exists some constant $\bar a_1 > 0$ such that $p(\Vert \vect u \Vert, \Vert \vect v \Vert) \geq \bar a_{1} \Vert [\vect u^\top, \vect v^\top ]^\top\Vert$ for all $\vect v, \vect u$ in $\mathcal{S}(t)$, for all $t \in t_p$. This implies that, in $\mathcal{S}(t)$ we have $\dot{V} \leq -\bar a_{1} \Vert [\vect u^\top, \vect v^\top]^\top \Vert < 0$ for all $t\in t_p$. The eigenvalues of the constant matrices $\mat L_\mu$ and $\mat N_\mu$ (the matrices introduced in subsection~\ref{subsec:IC_bound}) are finite and strictly positive. Our earlier conclusion that $\mat L_\mu > \mat G > \mat N_\mu$ for all $\mu \geq \mu_3^*$ then implies that the eigenvalues of $\mat G$ (which vary with $\vect q(t)$) are upper bounded away from infinity and lower bounded away from zero. It follows that there exist scalars $a_2, a_3 > 0$ such that $a_2 \Vert [\vect u^\top, \vect v^\top ]^\top\Vert \leq V(t) \leq a_3 \Vert [\vect u^\top, \vect v^\top]^\top \Vert$. This implies that $\dot{V}(t) \leq - \psi V(t)$ in $\mathcal{S}(t)$ for $t\in t_p$ where $\psi = \bar a_{1}/a_3$. From this, we conclude that $V$ decays exponentially fast to zero, with a minimum rate $e^{-\psi t}$, for $t \in t_p$. Since $V$ is positive definite in $\vect u, \vect v$, this implies that $\Vert [\vect u^\top, \vect v^\top]^\top \Vert$ decays to zero exponentially fast for $t \in t_p$, and the leader-follower consensus objective is achieved.

\begin{figure}[t]
\centering{
\resizebox{\columnwidth}{!}{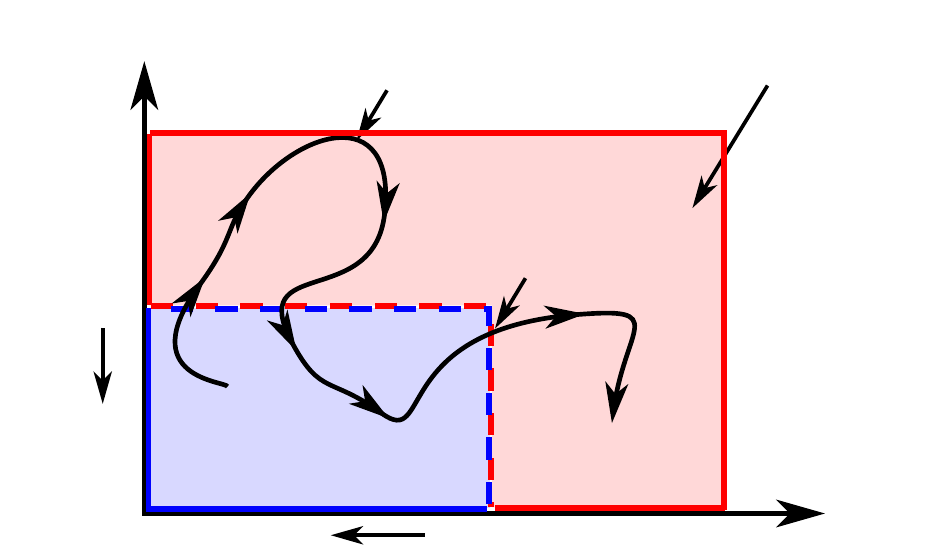}
\caption{Diagram for proof of \textbf{Theorem}~\ref{theorem:directed_main_MI}. The red region is $\mathcal{S}(t)$, in which $\dot{V}(t) < 0$ for all $t \geq 0$. The blue region is $\mathcal{T}(t)$, in which $\dot{V}(t)$ is sign indefinite. A trajectory of \eqref{eq:euler_Lagrange_nonautonomous_network} is shown with the black curve. At $t = T_1$, it is shown in \emph{Part 2} that the trajectory of \eqref{eq:euler_Lagrange_nonautonomous_network} is such that $
\Vert \vect u(T_1) \Vert < \mathcal{X}, \Vert \vect v(T_1) \Vert < \mathcal{Y}$ and thus the trajectory does not leave $\mathcal{S}(t)$. 
The sign indefiniteness of $\dot{V}(t)$ in $\mathcal{T}(t)$ arises due to the terms linear in $\Vert \vect u \Vert$ and $\Vert \vect v \Vert$ in \eqref{eq:Vdot_derivative_boundb}, i.e. the terms containing $\bar\omega(t)$ (coefficients $A_5(t)$ and $A_6(\mu,t)$ in \eqref{eq:p_uv_def}). Because $\bar\omega(t)$ goes to zero at an exponential rate, so do the coefficients $A_5(t)$ and $A_6(\mu,t)$. Examining the inequalities detailed in \textbf{Corollary}~\ref{cor:bounded_pd_function_2} as applied to $p(\Vert \vect u \Vert, \Vert \vect v \Vert)$ in \eqref{eq:p_uv_def}, it is straightforward to conclude that for a fixed $\mu_6^*$, the exponential decay of $A_5, A_6$ implies that the region $\mathcal{T}(t)$ shrinks towards the origin at an exponential rate. In other words, $\vartheta(t)$ and $\varphi(t)$ monotonically increase until $\vartheta(t) = \mathcal{X}$ and $\varphi(t) = \mathcal{Y}$, at which point $\mathcal{T}(t) = [0,0]$. This corresponds to the dotted red and blue lines, which show, respectively, the time-varying boundaries of $\mathcal{S}(t)$ and $\mathcal{T}(t)$. The solid red and blue lines show respectively, the boundaries of $\mathcal{S}(t)$ and $\mathcal{T}(t)$, which are time-invariant. Exponential convergence to the leader-follower objective is discussed in \emph{Part 3} making using of $T_2$.
}
\label{fig:stability_diag}
}
\end{figure}


\end{document}